\newtheorem{theorem}{Theorem}[section]
\newtheorem{definition}[theorem]{Definition}
\newtheorem{proposition}[theorem]{Proposition}
\newtheorem{lemma}[theorem]{Lemma}
\newtheorem{remark}[theorem]{Remark}
\newtheorem*{conjecture*}{Conjecture}
\newcommand\smaller[2][0.85]{{\scalefont{#1}#2}}
\newcommand{\MAXCUT}{\mbox{\rm \smaller[0.76]{MAX CUT}}}
\newcommand{\MAXAND}{\mbox{\rm \smaller[0.76]{MAX 2-AND}}}
\newcommand{\MAXDICUT}{\mbox{\rm \smaller[0.76]{MAX DI-CUT}}}
\newcommand{\MAXSAT}[1]{\smaller[0.76]{MAX #1-SAT}}
\newcommand{\MAXSATg}{\smaller[0.76]{MAX SAT}}
\newcommand{\MAXNAESATg}{\smaller[0.76]{MAX NAE-SAT}}
\newcommand{\MAXCSP}[1]{\mbox{\rm \smaller[0.76]{MAX #1-CSP}}}
\newcommand{\MAXLIN}[1]{\mbox{\rm \smaller[0.76]{MAX #1-LIN}}}
\newcommand{\MAXCSPP}{\mbox{\rm \smaller[0.76]{MAX CSP}}}
\newcommand{\GE}{\;\ge\;}
\newcommand{\LE}{\;\le\;}
\newcommand{\EQ}{\;=\;}
\newcommand{\GT}{\;>\;}
\title{Separating \MAXAND, \MAXDICUT\ and \MAXCUT}
\author{Joshua Brakensiek\thanks{Stanford University, supported in part by an NSF Graduate Research Fellowship and a Microsoft Research PhD Fellowship. Email: \texttt{jbrakens@cs.stanford.edu}}\and Neng Huang\thanks{University of Chicago, supported in part by NSF grant CCF:2008920. Email: \texttt{nenghuang@uchicago.edu}}\and Aaron Potechin\thanks{University of Chicago, supported in part by NSF grant CCF:2008920. Email: \texttt{potechin@uchicago.edu}}\and Uri Zwick\thanks{Blavatnik School of Computer Science, Tel Aviv University, Israel. Email: \texttt{zwick@tau.ac.il}}}
\date{}
\newcommand{\eps}{\varepsilon}
\newcommand{\E}{\mathop{\mathbb{E}}}
\newcommand{\RR}{\mathbb{R}}
\newcommand{\Inf}{\mathrm{Inf}}
\newcommand{\br}{\mathbf{r}}
\newcommand{\bv}{\mathbf{v}}
\newcommand{\bx}{{\mathbf{x}}}
\newcommand{\by}{{\mathbf{y}}}
\newcommand{\Dist}{\operatorname{Dist}}
\newcommand{\flip}{\operatorname{flip}}
\newcommand{\CUT}{\mbox{\rm \smaller[0.76]{CUT}}}
\newcommand{\AND}{\mbox{\rm \smaller[0.76]{AND}}}
\newcommand{\DICUT}{\mbox{\rm \smaller[0.76]{DI-CUT}}}
\newcommand{\THRESH}{{\cal THRESH}}
\newcommand{\Val}{\mathrm{Val}}
\newcommand{\aGW}{\alpha_{\text{CUT}}}
\newcommand{\agw}{\alpha_{\text{GW}}}
\newcommand{\aDC}{\alpha_{\text{DI-CUT}}}
\newcommand{\aAND}{\alpha_{\text{2AND}}}
\newcommand{\argmin}{\operatorname{argmin}}
\newcommand{\argmax}{\operatorname{argmax}}
\newcommand{\comp}{\mathsf{Completeness}}
\newcommand{\sound}{\mathsf{Soundness}}
\begin{document}

\maketitle

\newcommand{\bestDICUT}{0.87446}
\newcommand{\bestAND}{0.87414}

\newcommand{\bestDICUTverified}{0.87447} %
\newcommand{\bestANDverified}{0.87415}

\vspace*{-20pt}
\begin{abstract}
Assuming the Unique Games Conjecture (UGC), the best approximation ratio that can be obtained in polynomial time for the \MAXCUT\ problem is $\aGW\simeq 0.87856$, obtained by the celebrated SDP-based approximation algorithm of Goemans and Williamson. The currently best approximation algorithm for \MAXDICUT, i.e., the \MAXCUT\ problem in \emph{directed} graphs, achieves a ratio of about $0.87401$, leaving open the question whether \MAXDICUT\ can be approximated as well as \MAXCUT. We obtain a slightly improved algorithm for \MAXDICUT\ and a new UGC-hardness for it, showing that $\bestDICUT\le \aDC\le 0.87461$, where $\aDC$ is the best approximation ratio that can be obtained in polynomial time for \MAXDICUT\ under UGC. The new upper bound separates \MAXDICUT\ from \MAXCUT, i.e., shows that \MAXDICUT\ cannot be approximated as well as \MAXCUT, resolving a question raised by Feige and Goemans.

A natural generalization of \MAXDICUT\ is the \MAXAND\ problem in which each constraint is of the form $z_1\land z_2$, where $z_1$ and $z_2$ are literals, i.e., variables or their negations. (In \MAXDICUT\ each constraint is of the form $\bar{x}_1\land x_2$, where $x_1$ and $x_2$ are variables.) Austrin separated \MAXAND\ from \MAXCUT\ by showing that $\aAND \le 0.87435$ and conjectured that \MAXAND\ and \MAXDICUT\ have the same approximation ratio. Our new lower bound on \MAXDICUT\ refutes this conjecture, completing the separation of the three problems \MAXAND, \MAXDICUT\ and \MAXCUT. We also obtain a new lower bound for \MAXAND\, showing that $\bestAND \le \aAND \le 0.87435$. 

Our upper bound on \MAXDICUT\ is achieved via a simple, analytical proof. The new lower bounds on \MAXDICUT\ and \MAXAND, i.e., the new approximation algorithms, use experimentally-discovered distributions of rounding functions which are then verified via\linebreak computer-assisted proofs.
\footnote{Code for the project: \url{https://github.com/jbrakensiek/max-dicut}}

\end{abstract}
\setcounter{page}{0}
\thispagestyle{empty}

\pagebreak

\section{Introduction}

Goemans and Williamson \cite{GW95}, in their seminal paper, introduced the paradigm of obtaining approximation algorithms for Boolean \emph{Constraint Satisfaction Problems} (CSPs) by first obtaining a semidefinite programming (SDP) \emph{relaxation} of the problem and then \emph{rounding} an optimal solution of the relaxation. The first, and perhaps biggest, success of this paradigm is a simple and elegant $\agw$-approximation algorithm, where $\agw\simeq 0.87856$, for the \MAXCUT\ problem, i.e., the maximum cut problem in undirected graphs, improving for the first time over the naive $\frac{1}{2}$-approximation algorithm. Goemans and Williamson \cite{GW95} also obtained improved algorithms for the \MAXDICUT, \MAXSAT{2} and \MAXSATg\ problems.

Feige and Goemans \cite{FG95}, Matuura and Matsui \cite{MM03} and Lewin, Livnat and Zwick \cite{LLZ02} obtained improved approximation algorithms for the \MAXSAT{2} and \MAXDICUT\ problems. The best approximation ratios, obtained by \cite{LLZ02}, are $0.940$ for \MAXSAT{2} and $0.874$ for \MAXDICUT. Karloff and Zwick \cite{KZ97} obtained an optimal (see below) $\frac{7}{8}$-approximation algorithm for \MAXSAT{\{1,2,3\}} and Zwick~\cite{Zwick98} obtained approximation algorithms, some of them optimal, for many other \MAXCSP{3} problems, i.e., maximization versions of Boolean CSP problems in which each constraint is on at most three variables.
Andersson and Engebretsen~\cite{AE98}, Zwick~\cite{Zwick99a}, Halperin and Zwick~\cite{HZ01}, Asano and Williamson~\cite{AW02}, Zhang, Ye and Han~\cite{ZYH04}, and Avidor, Berkovitch and Zwick~\cite{ABZ05} obtained approximation algorithms for various versions of the \MAXSATg\ and \MAXNAESATg\ problems. It is a major open problem whether there is a $\frac78$-approximation algorithm for the MAX SAT problem. \cite{BHPZ21} showed that there is no $\frac78$-approximation algorithm for the \MAXNAESATg\ problem, assuming UGC. \cite{ABGNS22} and \cite{EN19} used ``sticky Brownian motion'' to obtain optimal, or close to optimal, algorithms for \MAXCUT\ and related problems. For a survey of these and related results, see Makarychev and Makarychev \cite{MM17}.

H{\aa}stad \cite{H01}, in a major breakthrough, extending the celebrated PCP theorem of~\cite{ALMSS98}, showed, among other things, that, for any $\eps>0$, it is NP-hard to obtain a $(\frac{7}{8}+\eps)$-approximation of \MAXSAT{3} and a $(\frac{1}{2}+\eps)$-approximation of \MAXLIN{3}, showing that the trivial algorithms for these two problems that just choose a random assignment are tight. \cite{TSSW00} showed, using gadget reductions, that it is NP-hard to obtain a $(\frac{16}{17}+\eps)$-approximation of \MAXCUT\ and $(\frac{12}{13}+\eps)$-approximation of \MAXDICUT.

Khot \cite{khot02} introduced the \emph{Unique Games Conjecture} (UGC). Khot, Kindler, Mossel and O'Donnell \cite{KKMO07} then showed that UGC implies that, for any $\eps>0$, obtaining an $(\agw+\eps)$-approximation for \MAXCUT\ is NP-hard, showing, quite remarkably, that the algorithm of Goemans and Williamson \cite{GW95} is optimal, i.e., $\aGW=\agw$, assuming UGC. Austrin \cite{Austrin07} then showed that the \MAXSAT{2} algorithm of Lewin, Livnat and Zwick \cite{LLZ02} is essentially optimal, again modulo UGC. Austrin \cite{Austrin10} obtained some upper bounds on the the approximation ratio that can be achieved for \MAXAND\ in polynomial time. However, they do not match the approximation ratio obtained by the \MAXDICUT\ algorithm of~\cite{LLZ02} which is in fact an approximation algorithm for \MAXAND. %

Raghavendra \cite{R08,R09}, in another breakthrough, showed that under UGC, the best approximation ratio that can be obtained for any MAX CSP problem, over a finite domain and with a finite number of constraint types, can be obtained using a canonical SDP relaxation of the problem and the rounding of an optimal solution of this relaxation using an appropriate rounding procedure taken from a specified family of rounding procedures. The approximation ratio obtained is then exactly the \emph{integrality gap} of the relaxation. Approximating the integrality gap up to $\eps$ takes doubly exponential time in $1/\eps$, and a close to optimal algorithm can be obtained by trying discretized versions of all rounding procedures, up to some resolution. (See Raghavendra and Steurer \cite{RS09} for more on finding almost optimal rounding schemes.) 

It might seem that these results resolve all problems related to the approximation of MAX CSP problems. Unfortunately, this is not the case. These results do give valuable guidance to the designers of approximation algorithms. In particular, it is clear which semidefinite programming relaxation should be used and the search for an optimal, or almost optimal, rounding procedure can be restricted to the family of rounding procedures specified by Raghavendra \cite{R08}. However, these results give almost no concrete information on the integrality gap of the relaxation, which is also the best approximation ratio that can be obtained. Also, no practical information is given on how to obtain optimal, or almost optimal rounding procedures, other than the fact that they belong to a huge class of rounding procedures, as it is wildly impractical to implement and run a brute force algorithm whose running time is doubly exponential in $1/\eps$.

In particular, Raghavendra's results are unable\footnote{In particular, if the answer to any of these questions is ``yes'' the Raghavendra-Steurer algorithm cannot certify these in finite time, and if the answer is ``no'' the $\eps$ needed for separation is so small that the algorithm would need to run over a galactic time scale.} to answer questions of the following form: Is there a $\frac{7}{8}$-approximation algorithm for \MAXSATg, with clauses of all sizes allowed? Can \MAXDICUT\ be approximated as well as \MAXCUT? Can \MAXAND\ be approximated as well as \MAXDICUT? In this paper we study the latter two questions and answer them in the negative, assuming UGC.

\subsection{Our results} 

Our main result is the following theorem.
\begin{theorem}[Main]
Assuming UGC, $\aAND < \aDC < \aGW$.
\end{theorem}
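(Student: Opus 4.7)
The plan is to reduce the single chain $\aAND < \aDC < \aGW$ to four separate numerical estimates and then combine them. Specifically, I would establish (a) an algorithmic lower bound $\aDC \geq \bestDICUT$, (b) a UGC-hardness upper bound $\aDC \leq 0.87461$, (c) an algorithmic lower bound $\aAND \geq \bestAND$, and (d) Austrin's upper bound $\aAND \leq 0.87435$. Since $0.87435 < \bestDICUT$ and $0.87461 < \aGW \simeq 0.87856$, these four estimates immediately yield the two strict inequalities claimed. Parts (a) and (c) are the approximation algorithms promised in the abstract; parts (b) and (d) are hardness results.

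For the hardness side (b), I would build an explicit dictator-vs-no-influence style integrality gap instance for the standard SDP relaxation of \MAXDICUT, i.e., a small weighted configuration of unit vectors $\bv_0,\bv_1,\dots,\bv_n$ whose SDP value exceeds $0.87461$ times the best integral value. By the Raghavendra/KKMO framework, any such SDP integrality gap lifts to a UGC-hardness gap of the same ratio, giving $\aDC \leq 0.87461$. Because the paper describes this proof as ``simple and analytical,'' I would look for a two- or three-parameter family of configurations that exploits the directional asymmetry of $\bar{x}_1\wedge x_2$: one can mix a Goemans--Williamson-optimal \MAXCUT-like sub-configuration (which already matches $\aGW$) with a small directed ``imbalance'' gadget that strictly dominates the directed constraint value without improving the integral cut, and optimize the mixing weights in closed form to push the ratio below $\aGW$.

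For the algorithmic side (a) and (c), I would follow the \THRESH-style rounding paradigm of Lewin--Livnat--Zwick, strengthened with the triangle inequalities and a bias vector $\bv_0$. The algorithm would solve the canonical SDP, then round each variable $x_i$ to $1$ with probability $f(\langle \bv_0,\bv_i\rangle, g_i)$, where $g_i$ is a Gaussian projection of $\bv_i$ and $f$ is drawn from an experimentally-optimized distribution over a parameterized family of threshold functions. The analysis reduces, constraint-by-constraint, to showing that for every admissible pair of vectors $(\bv_i,\bv_j)$ the expected contribution is at least $\bestDICUT$ (resp.\ $\bestAND$) times the SDP contribution; this is a two- or three-dimensional inequality in the inner products $\langle\bv_0,\bv_i\rangle, \langle\bv_0,\bv_j\rangle, \langle\bv_i,\bv_j\rangle$, which one verifies by partitioning the admissible region into boxes and bounding the function rigorously on each box via interval arithmetic.

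The main obstacle will be the computer-assisted verification of the rounding ratio over an infinite (but bounded) configuration space at resolution sharp enough to certify the four decimals claimed. The gap between the algorithmic lower bound $\bestDICUT$ and the hardness upper bound $0.87461$ is only $\sim 10^{-4}$, so the discretization must be very fine and each box must be handled with tight Lipschitz or derivative bounds on the expected-rounded-value function; one must also handle the boundary of the admissible region (where constraints are tight and $f$ may be discontinuous as a threshold) with care. The analytical upper bound (b) should be the easiest part once the right asymmetric gadget is identified, and the hardness for \MAXAND\ just quotes Austrin, so the computational verification at the $\bestDICUT$ and $\bestAND$ level is the real engine driving the separation.
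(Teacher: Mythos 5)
Your decomposition of the main theorem into the four numerical bounds $\aAND \leq 0.87435$ (Austrin), $\aDC \geq \bestDICUT$ (new $\THRESH$ rounding scheme verified by interval arithmetic), $\aDC \leq 0.87461$ (new hard distribution of configurations lifted to UG-hardness), and $\aGW \geq \agw \simeq 0.87856$ (Goemans--Williamson) is exactly the paper's proof, and your description of the techniques for the algorithmic and hardness sides matches what the paper does. The one place your intuition diverges somewhat from the actual construction is in part (b): the paper's hard distribution for \MAXDICUT\ is not a GW-like cut configuration plus a directed imbalance gadget, but rather a three-configuration, flip-symmetric family built around a single bias magnitude $b\approx 0.176$ (chosen so that the symmetry forces the best $\THRESH^-$ function to be odd and the critical-point analysis becomes tractable, with one configuration having no tight triangle inequality); since you explicitly plan to search over small parametric families and optimize, this is a difference of starting heuristic rather than a gap in the proof.
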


To separate \MAXAND, \MAXDICUT\ and \MAXCUT, we obtain an improved upper bound and an improved lower bound (i.e., an approximation algorithms) for \MAXDICUT. Our improved upper bound is: %

\begin{theorem}\label{theorem:upper}
Assuming UGC, $\aDC \leq 0.87461$.
\end{theorem}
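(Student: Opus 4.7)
The plan is to invoke Raghavendra's theorem \cite{R08,R09}, which under UGC identifies $\aDC$ with the integrality gap of the canonical SDP relaxation of \MAXDICUT. Hence Theorem \ref{theorem:upper} reduces to exhibiting a feasible vector configuration together with a distribution over directed constraints whose SDP value exceeds the best Boolean assignment by a factor of at least $1/0.87461$. Adopting the convention that $v_0$ is a unit ``truth'' vector and the $v_i$ are unit vectors relaxing $\pm 1$-valued Boolean variables, a constraint $(i,j)$ encoding the predicate $\bar x_i \land x_j$ contributes
\[
\frac{1 - v_0 \cdot v_i + v_0 \cdot v_j - v_i \cdot v_j}{4}
\]
to the SDP objective, so the task is to choose vectors and edge weights that drive this linear combination well above the integer maximum.

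My plan is to search for a small, highly symmetric configuration whose analysis is closed-form. A natural parametrization uses a single ``bias'' $b = v_0 \cdot v_i$ common to all variable vectors (possibly with a signed split into two symmetric classes) and a single pairwise angle $\theta$ with $v_i \cdot v_j = \cos \theta$ along the edges of the distribution. Summed over a symmetric edge distribution, the SDP value becomes an explicit function of $(b, \theta)$. The integral optimum can be bounded by enumerating the Boolean assignments respected by the symmetry group of the instance; typically this reduces to a short list of candidates (all-zero, all-one, balanced, and one or two biased assignments), each giving an explicit function of $(b, \theta)$. Minimizing the resulting ratio of integer to SDP value over $(b, \theta)$ then upper bounds the integrality gap, and hence $\aDC$.

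The main obstacle will be identifying a parametric family such that the optimization admits a clean analytic certificate — for example, an explicit optimizer $(b^*, \theta^*)$ whose optimality is verified by first-order conditions — while simultaneously producing a ratio strictly below $\aGW \simeq 0.87856$. Unlike \MAXCUT, whose tight Goemans--Williamson instance is balanced and has $b=0$, breaking the GW threshold for \MAXDICUT\ should require a genuinely asymmetric configuration with $b^* \ne 0$ that exploits the directional structure of the problem; finding the correct asymmetric $b^*$ is where most of the design work sits. Once the configuration is pinned down, the remaining bookkeeping — verifying SDP feasibility, computing the integer maximum across the handful of symmetric assignments, and checking the final inequality at $0.87461$ — is routine algebra, consistent with the ``simple, analytical proof'' promised in the abstract.
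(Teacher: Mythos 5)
Your high-level reduction (Raghavendra: under UGC, $\aDC$ equals the integrality gap of the canonical SDP, so it suffices to exhibit a gap instance) is logically sound, but it is not the paper's route, and the plan as written has a genuine gap at its core: you never produce the instance, and the step you call ``routine'' is precisely the hard part. First, bounding the integral optimum ``by enumerating the Boolean assignments respected by the symmetry group'' is unsound: symmetry of the instance only implies the set of optimal assignments is closed under the group action, not that some optimum is itself symmetric, so restricting to a short list of symmetric candidates does not upper bound the integer value. Second, small explicit instances of the canonical SDP (with triangle inequalities) are not known to exhibit gaps anywhere near the UG threshold; the constructions that do match it come from dictatorship-test/limit constructions whose integral optimum is controlled by invariance-principle (``Majority is Stablest''-type) arguments, not by finite enumeration. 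So a clean two-parameter family $(b,\theta)$ with a first-order-condition certificate is very unlikely to exist, and nothing in your proposal supplies one, nor the specific numerical parameters needed to reach $0.87461$.

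The paper deliberately avoids constructing an actual integrality-gap instance. It instead builds a fixed three-point distribution of \emph{configurations} $(b_i,b_j,b_{ij})$ satisfying the triangle inequalities (with biases $\pm b$, $b\simeq 0.1757$, and one configuration where no triangle inequality is tight), which need not arise from any SDP-optimal solution, and proves that \emph{every} rounding function in the $\THRESH^-$ family achieves soundness/completeness at most $\simeq 0.87460$ on it. The analytic content is an analysis of the two-variable soundness function $s(t_1,t_2)$: its critical point is shown to be unique and symmetric ($t_2=-t_1$), and the boundary cases $t_i=\pm\infty$ are handled separately. UG-hardness then follows not from Raghavendra's gap-to-hardness machinery applied to an instance, but from Austrin's dictatorship-test reduction (adapted to drop the oddness requirement), whose soundness is exactly $\max_h \sound(\Tilde\Theta,h)$ over threshold functions. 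Your plan touches none of this: it neither restricts attention to a rounding family whose worst-case behavior can be analyzed in two real parameters, nor gives a mechanism for converting the configuration-level hardness into UG-hardness without an instance. As it stands, the proposal is a strategy sketch whose central steps (the construction and the integer-optimum bound) are missing or incorrect.
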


To obtain the new upper bound, we construct a distribution over \MAXDICUT\ \emph{configurations} that is hard for any rounding procedure from the family $\THRESH^{-}$ defined by Lewin, Livnat and Zwick~\cite{LLZ02}. Such hard distributions can then be converted into \emph{dictatorship tests} and then Unique Games hardness by small modifications to the technique used by Austrin~\cite{Austrin10} for distributions over \MAXAND\ configurations. 

It is more difficult to obtain hard configurations for \MAXDICUT\ than for \MAXAND, since in \MAXAND\ the functions used in the rounding scheme can be assumed, without loss of generality, to be \emph{odd}. (A function $f:[-1,1]\to \RR$ is odd if and only if $f(-x)=-f(x)$ for every $x\in[-1,1]$.) Using an odd rounding function ensures that a variable and its negation are assigned opposite truth values. In \MAXDICUT\ there is no such restriction as, in a sense, there are no negated variables. The possibility of using non-odd rounding functions gives the rounding scheme more power. (The improved rounding scheme that we obtain for \MAXDICUT\ uses a distribution of rounding functions some of which are not odd. This is exactly what enables the separation of \MAXDICUT\ from \MAXAND, as we discuss below.) We overcome this difficulty using a simple, symmetric construction for which the best rounding scheme is odd. Another interesting feature of our hard construction is that it contains a configuration for which all the triangle inequalities, powerful constraints of the SDP relaxation, are not tight. This is in contrast to previous work on \MAXSAT{2}~\cite{Austrin07} and \MAXAND~\cite{Austrin10}, where hardness results are derived only from configurations in which one of the triangle inequalities is tight. 

Our construction yields an upper bound of $\aDC\le 0.87461$, 
which together with $\aGW\ge 0.87856$ exhibits a clear separation between \MAXDICUT\ and \MAXCUT. (Although the separation is clear, it is still perplexing that the approximation ratios of \MAXCUT\ and \MAXDICUT\ are so close, and yet not equal.) We believe that our upper bound can be slightly improved using a sequence of more and more complicated constructions that yield slightly better and better upper bounds. %

In addition to our improved upper bound for \MAXDICUT, we also obtain two new lower bounds for \MAXDICUT\ and \MAXAND.

\begin{theorem}\label{theorem:lower} $\aDC\ge \bestDICUT$. (In other words, there is an approximation algorithm for \MAXDICUT\ with an approximation ratio of at least $\bestDICUT$.)
\end{theorem}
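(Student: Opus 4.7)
The plan is to extend the Lewin--Livnat--Zwick $\THRESH^-$ rounding framework from a single rounding function to a distribution over rounding functions, experimentally find such a distribution that beats the previous $0.874$ bound, and then verify it with a computer-assisted proof. I would start from the canonical SDP relaxation of \MAXDICUT, which yields unit vectors $\bv_0, \bv_1, \ldots, \bv_n$ (with $\bv_0$ designated as ``true'') subject to the triangle inequalities, and whose objective on a constraint $\bar{x}_i \land x_j$ is $\tfrac{1}{4}(1 - \langle \bv_0, \bv_i \rangle + \langle \bv_0, \bv_j \rangle - \langle \bv_i, \bv_j \rangle)$.

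For the rounding, I sample a function $f$ from a distribution $\mu$ over $\THRESH^-$ profiles together with an independent Gaussian $\br$, and set $x_i$ true iff the Gaussian projection of the $\bv_0$-residual of $\bv_i$ exceeds $f(\langle \bv_0, \bv_i \rangle)$. Since the expected contribution of any constraint depends only on the three inner products $b_i, b_j, \rho_{ij}$, the worst-case approximation ratio factors through a one-constraint analysis:
$$\alpha(\mu) \;=\; \inf_{(b_i, b_j, \rho_{ij})} \frac{\E_{f \sim \mu}\!\left[\, \Pr_{\br}[\bar{x}_i \land x_j \text{ is satisfied}] \,\right]}{\tfrac{1}{4}(1 - b_i + b_j - \rho_{ij})},$$
where the infimum is over all triples realized by unit vectors and consistent with the triangle inequalities. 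The goal reduces to exhibiting a distribution $\mu^*$ with $\alpha(\mu^*) \ge \bestDICUT$.

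To find $\mu^*$ I would follow a standard alternating heuristic: parametrize candidate rounding functions by finitely many real parameters (e.g.\ piecewise-linear profiles on a grid of biases); then alternate between (a) locating the worst configurations for the current $\mu$ via a nonconvex search over the three-dimensional configuration domain, and (b) re-optimizing $\mu$ over the growing list of near-worst configurations via a linear program. Iterate until the empirical ratio crosses $\bestDICUT$. In line with the discussion surrounding Theorem~\ref{theorem:upper}, I expect the optimizer $\mu^*$ to place weight on some non-odd rounding functions, since using non-odd functions is precisely the extra freedom that $\bar{x}_i \land x_j$ constraints (without literal negations) allow, and which separates \MAXDICUT\ from \MAXAND.

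The main obstacle is the \emph{rigorous} certification that $\alpha(\mu^*) \ge \bestDICUT$ holds on the full continuous configuration domain, not merely on a sampled grid. The plan is a computer-assisted proof: partition the configuration domain into sufficiently fine boxes, use interval arithmetic (with verified bivariate-Gaussian quadratures for the numerator and straightforward linear bounds for the denominator) to sandwich both the expected satisfaction probability and the SDP value over each box, and combine these with Lipschitz estimates to lift the inequality from the box's center to the whole box. The delicate regions are those where the denominator is small (low SDP value inflates the ratio's sensitivity) and those near the boundary of the realizable region where a triangle inequality is tight; these will require finer subdivision and possibly local analytic arguments. The slightly weaker printed constant $\bestDICUT$ versus the machine-verified $\bestDICUTverified$ exactly reflects the safety margin demanded by this interval-arithmetic verification.
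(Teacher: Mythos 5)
Your overall route is the same as the paper's: a distribution over $\THRESH^-$ piecewise-linear rounding functions found by alternating between a worst-configuration search and an LP re-optimization of the mixture, followed by an interval-arithmetic certification of the one-constraint ratio over the three-dimensional configuration domain. The one place where your plan, as stated, breaks down is the low-completeness region, and your proposed fix ("finer subdivision and possibly local analytic arguments") does not work there. The paper observes that there are configurations where the quantity to be certified vanishes identically: for $\theta=(b,b,b_{ij})$ with relative pairwise bias $1$ (i.e.\ $\bv_i=\bv_j$), both $\sound(\theta,f)$ and $\comp(\theta)$ equal $0$ for every threshold function $f$, so $\E_{f}[\sound(\theta,f)]-\alpha\,\comp(\theta)$ attains the value $0$ exactly. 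Interval arithmetic can therefore never certify nonnegativity on any box containing such a point, no matter how finely you subdivide -- your divide-and-conquer loop would recurse forever near these configurations, and Lipschitz estimates do not rescue a bound on a ratio whose numerator and denominator both tend to $0$.

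The paper's concrete repair is different from anything in your sketch: the interval-arithmetic verification is run only over configurations with $\comp(\theta)\ge 10^{-6}$, where it certifies the ratio $\bestDICUTverified$, and the final algorithm mixes the $\THRESH$ scheme with uniform independent rounding with probability $10^{-5}$. Independent rounding satisfies every \MAXDICUT\ constraint with probability $1/4$, so on configurations with $\comp(\theta)<10^{-6}$ the soundness is at least $2.5\times 10^{-6} > \bestDICUT\cdot\comp(\theta)$, while on all other configurations the ratio degrades only by the factor $(1-10^{-5})$, yielding $\bestDICUTverified\times(1-10^{-5})>\bestDICUT$. (This also corrects your reading of the two constants: the slack between $\bestDICUTverified$ and $\bestDICUT$ is consumed by this independent-rounding mixture and the completeness cutoff, not merely by interval-arithmetic safety margins.) With this one additional idea your plan matches the paper's proof of Theorem~\ref{theorem:lower}.
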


\begin{theorem}\label{theorem:2and} $\aAND\ge \bestAND$. (In other words, there is an approximation algorithm for \MAXAND\ with an approximation ratio of at least $\bestAND$.)
\end{theorem}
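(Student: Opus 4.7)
The plan is to adapt the framework of Lewin, Livnat and Zwick~\cite{LLZ02}, following the same template used to prove Theorem~\ref{theorem:lower} for \MAXDICUT. The starting point is the canonical SDP relaxation of \MAXAND: each variable gets a unit vector $\bv_i\in\RR^{n+1}$, a distinguished vector $\bv_0$ represents ``true'', and all triangle inequalities on triples $(\bv_0,\bv_i,\bv_j)$ are imposed. A $\THRESH^{-}$ rounding scheme picks a random Gaussian $\br$ and sets $x_i$ to true iff $\langle \br,\; \bv_i-\langle\bv_0,\bv_i\rangle\bv_0\rangle \le f(\langle\bv_0,\bv_i\rangle)$ for some threshold function $f:[-1,1]\to\RR$. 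Crucially, for \MAXAND\ we restrict to \emph{odd} $f$, so that a variable and its negation are assigned opposite truth values; this restriction is precisely what distinguishes the \MAXAND\ case from the \MAXDICUT\ case treated in Theorem~\ref{theorem:lower}, and it is the reason why the ratio $\bestAND$ is slightly smaller than $\bestDICUT$.

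The key step is to construct a probability distribution $\scD$ over odd functions $f\in\THRESH^{-}$ such that, for every feasible \emph{configuration} -- that is, every triple of inner products $(\langle\bv_0,\bv_i\rangle, \langle\bv_0,\bv_j\rangle, \langle\bv_i,\bv_j\rangle)$ satisfying the triangle inequalities -- the expected probability over $f\sim\scD$ and $\br$ that the constraint $z_i\land z_j$ is satisfied is at least $\bestAND$ times the SDP contribution of that constraint. I would locate a candidate $\scD$ numerically, parameterising each $f$ in its support by a small number of coefficients, evaluating the worst-case ratio over a dense sample of configurations, and applying local optimisation to update the support and weights of $\scD$. This is exactly the experimental search used for \MAXDICUT\ in the proof of Theorem~\ref{theorem:lower}, except that the search is restricted to the odd subfamily.

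The main obstacle is the rigorous verification that the chosen $\scD$ achieves ratio $\bestAND$ on \emph{every} configuration, not merely on the sampled ones. My plan is a computer-assisted proof: discretise the three-dimensional configuration space into a sufficiently fine grid, and on each cell use explicit Lipschitz estimates on both the rounded satisfaction probability and the SDP value, together with interval arithmetic applied to the underlying Gaussian integrals, to certify the ratio bound uniformly. Degenerate configurations, such as those with $\bv_i\in\{\pm\bv_0\}$ or $\bv_i=\pm\bv_j$, where the Lipschitz estimates degrade, are handled by a separate analytic case analysis in which the Gaussian integrals collapse to closed form. Once $\scD$ is certified, the approximation algorithm is to solve the SDP, sample $f\sim\scD$ and $\br$, and round as above; standard derandomisation via the method of conditional expectations on a fine discretisation of $\scD$ and $\br$ then yields a deterministic $\bestAND$-approximation.
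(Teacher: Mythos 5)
Your overall route is the same as the paper's: restrict to odd $\THRESH^{-}$ functions, search for a good distribution over them, and certify the worst-case per-configuration ratio by a computer-assisted interval-arithmetic argument over the three-dimensional configuration space (the paper's distribution $\Tilde{F'}$ has three odd piecewise-linear functions, found via a zero-sum-game/LP min--max procedure rather than plain local descent, but the discovery method is immaterial to correctness).

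The genuine gap is in your verification plan near configurations of vanishing completeness. The quantity you must certify, $g(\theta)=\E_{f\sim\scD}[\sound(\theta,f)]-\alpha\cdot\comp(\theta)$, actually attains the value $0$ on a nontrivial part of the feasible region (e.g.\ configurations with $b_i=b_j$ and relative pairwise bias $\rho=1$, where both soundness and completeness vanish; see the analogue of Proposition~4.6 in the paper). Interval arithmetic can never certify $g\ge 0$ on a cell touching, or arbitrarily close to, this set: the infimum of $g$ on such cells tends to $0$, so the subdivide-and-check recursion does not terminate, no matter how fine the grid. Your proposed fix --- a closed-form analysis at the exactly degenerate configurations $\bv_i=\pm\bv_0$, $\bv_i=\pm\bv_j$ --- only handles those isolated loci, not the neighbouring cells where the Lipschitz bound $g(x)\ge g(x_0)-L\|x-x_0\|$ is useless because $g(x_0)$ itself is tiny; to push the analytic argument into a neighbourhood you would need something like a uniform lower bound on the transverse directional derivative of $g$ along the zero set, which you have not supplied and which may be delicate (it is not even clear a priori that the $\THRESH$ scheme alone meets the ratio in that limit). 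The paper sidesteps this entirely: it mixes the $\THRESH$ scheme with independent uniform rounding with probability $10^{-5}$, so every configuration receives soundness at least $2.5\times 10^{-6}$, which already beats $\bestAND\cdot\comp(\theta)$ whenever $\comp(\theta)<10^{-6}$; interval arithmetic is then only needed on the region $\comp(\theta)\ge 10^{-6}$, where it certifies the slightly larger ratio $\bestANDverified$, and the $(1-10^{-5})$ loss from mixing still leaves $\bestAND$ (Lemma~\ref{lemma:int_arith_and} and Section~\ref{section:remove_comp}). You need either this mixing trick or a substitute analytic argument near the zero-completeness surface; as written, your plan does not close that case.
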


The new lower bounds 
improve on the previously best, and non-rigorous, bound of $0.87401$ obtained by~\cite{LLZ02} for both \MAXDICUT\ and \MAXAND. Despite the relatively small improvements, the improved approximation algorithms are interesting for at least two reasons. The first is that the new approximation algorithm for \MAXDICUT\ separates \MAXDICUT\ from \MAXAND, refuting a conjecture of Austrin~\cite{Austrin10}. The second is that the new algorithms show that %
taking a single rounding scheme from $\THRESH^{-}$,
as done by~\cite{LLZ02} and as shown by Austrin~\cite{Austrin07, Austrin10} to be sufficient for obtaining an optimal approximation algorithm for \MAXSAT{2}, is not sufficient for obtaining optimal approximation algorithms for \MAXDICUT\ and \MAXAND.   
Using insights gained from the upper bounds, we design an improved approximation algorithm for \MAXDICUT\ that uses \emph{distributions} of $\THRESH^{-}$ rounding procedures, i.e., rounding procedures belonging to the more general family $\THRESH$ also defined in \cite{LLZ02}. Using a computer search, we find a new rounding procedure from this family\footnote{Technically, we add in a tiny amount of independent rounding for verification purposes but we believe this can be removed.} which shows that $\aDC \geq \bestDICUT$. Our rigorous proof of this inequality is computer assisted. %

In \cite{LLZ02}, the authors discovered their $\THRESH^{-}$ procedures for \MAXDICUT\ and \MAXAND\ using non-convex optimization. More precisely, they used a local descent procedure from random starting points to tune a single rounding function that performs well for all possible configurations simultaneously. However, this approach becomes impractical for finding an optimal probability distribution of $\THRESH^{-}$ functions (i.e., a ``$\THRESH$ scheme''). One potential reason why this would not work is that there would be a significant local optimum where all the functions in the distribution identical to the one in \cite{LLZ02}.

Instead, we cast the design of the $\THRESH$ scheme for \MAXDICUT\ and \MAXAND\ as infinite \emph{zero-sum games} played by two players. The first player, Alice, selects a $\THRESH^{-}$ function and the second player, Bob, selects a configuration of SDP vectors to round. (This configuration may or may not correspond to an optimal solution of an SDP relaxation of an actual instance.)
Alice's value is then the approximation ratio achieved by her $\THRESH^{-}$ function on the SDP value of the configuration. Bob's value is the negative of Alice's value. One can then show, that $\aDC$ (or $\aAND$) is precisely the value of this game, assuming UGC and the positivity conjecture in~\cite{Austrin10}. Computationally, we discretize this game and use a min-max optimization procedure to estimate the value of this game and find an optimal, or almost optimal, strategy for Alice. This proceeds in a series of phases: Bob challenges Alice with a distribution of instances, and Alice computes a nearly-optimal response using methods similar to that of \cite{LLZ02}. Then, with Alice's functions, Bob computes a new distribution of instances which Alice does the worst one. This latter step is done by solving a suitable LP (the dual variables tell us Alice's optimal $\THRESH$ scheme). As the ``raw'' $\THRESH$ scheme produced by this procedure can be somewhat noisy, we subsequently manually simplified the $\THRESH$ distribution.

As mentioned, the proofs of the bounds $\aDC\geq \bestDICUT$ and $\aAND\ge \bestAND$ are computer-assisted, using the technique of \emph{interval arithmetic}. This technique has been previously used in the study of approximation algorithms. For example, Zwick~\cite{zwick02} used it to certify the $\frac{7}{8}$-approximation ratio for \MAXSAT{\{1,2,3\}} claimed by~\cite{KZ97}. The use of interval arithmetic in our setting is much more challenging, however, as the rounding procedures used for \MAXDICUT\ are much more complicated than the simple random hyperplane rounding used for \MAXSAT{\{1,2,3\}}. In particular, we need to use rigorous numerical integration to compute two-dimensional normal probabilities. A computer-assisted verification is probably necessary in our setting since fairly complicated distributions seem to be need for obtaining good approximation ratios, and it is hard to imagine that such distributions can be analyzed manually.

Since Austrin~\cite{Austrin10} showed that $\aAND < 0.87435$, assuming UGC, our new \MAXDICUT\ approximation algorithm separates \MAXAND\ and \MAXDICUT. This refutes Austrin's conjecture that \MAXAND\ and \MAXDICUT\ have the same approximation ratios. It also gives an interesting, non-trivial, example where a positive CSP (i.e., CSP that does not allow negated variables) is strictly easier to approximate than the CSP with the same predicate when negated variables are allowed.

We believe that the fact that rounding procedures from $\THRESH^-$ do not yield optimal approximation algorithms for \MAXDICUT\ is interesting in its own right. We conjecture that distributions over such procedures, i.e., rounding procedures from $\THRESH$ are enough to obtain optimal algorithms for \MAXDICUT\ and \MAXAND. (A continuous distribution is probably needed to get the optimal algorithms.) %

We note that both $\THRESH^-$ and $\THRESH$ are tiny subfamilies of the families shown by Raghavendra \cite{R08} to be enough for obtaining optimal approximation algorithms for general \MAXCSPP\ problems. In particular, $\THRESH^-$ and $\THRESH$ use only one Gaussian random vector while, in general, the families of Raghavendra \cite{R08} may need an unbounded number of such random vectors to obtain optimal or close-to-optimal results. %

\subsection{Organization of paper}

The rest of the paper is organized as follows. In Section~\ref{sec:prelim} we introduce the \MAXCUT, \MAXDICUT\ and \MAXAND\ problems and their SDP relaxations, we state the Unique Games Conjecture, and introduce the $\THRESH^-$ and $\THRESH$ families of rounding procedures used throughout the paper. In Section~\ref{sec:upper} we derive our new upper bound on \MAXDICUT\ which separates \MAXDICUT\ from \MAXCUT. The proof of this upper bound is completely analytical. In Section~\ref{sec:lower} we describe the computation techniques used to discover our improved \MAXDICUT\ algorithm and the computation techniques used to rigorously verify the approximation ratio that it achieves. In Section~\ref{sec:lower-AND} we obtain corresponding results for the \MAXAND\ problem. We end in Section~\ref{sec:concl} with some concluding remarks and open problems.

\section{Preliminaries}\label{sec:prelim}

\subsection{MAX CSP and canonical SDP relaxations}
For a Boolean variable, we associate $-1$ with true and 1 with false. A Boolean predicate on $k$ variables is a function $P: \{-1, 1\}^k \to \{0, 1\}$. If $P$ outputs 1, then we say $P$ is satisfied. 

\begin{definition}[$\MAXCSPP(P)$]
Let $P$ be a Boolean predicate on $k$ variables. An instance of $\MAXCSPP(P)$ is defined by a set of Boolean variables $\mathcal{V} = \{x_1, x_2, \ldots, x_n\}$ and a set of constraints $\mathcal{C} = \{C_1, C_2, \ldots, C_m\}$, where each constraint $C_i$ is of the form 
$P(b_{i,1}x_{j_{i,1}}, b_{i,2}x_{j_{i,2}}, \ldots, b_{i,k}x_{j_{i,k}})$ for some $j_{i,1}, \ldots, j_{i,k} \in [n]$ and $b_{i,1}, b_{i,2}, \ldots b_{i,k} \in \{-1, 1\}$,
and a weight function $w: \mathcal{C} \to [0, 1]$ satisfying $\sum_{i = 1}^m w(C_i) = 1$. The goal is to find an assignment to the variables that maximizes $\sum_{i = 1}^m w(i)P(b_{i,1}x_{j_{i,1}}, b_{i,2}x_{j_{i,2}}, \ldots, b_{i,k}x_{j_{i,k}})$, i.e., the sum of the weights of satisfied constraints.
\end{definition}

\begin{definition}[$\MAXCSPP^+(P)$]
$\MAXCSPP^+(P)$ has the same definition as $\MAXCSPP(P)$, except that now each constraint $C_i$ is of the form $P(x_{j_{i,1}}, x_{j_{i,2}}, \ldots, x_{j_{i,k}})$. In other words, negated variables are not allowed.
\end{definition}

Since the weight function is non-negative and sums up to 1, we can think of it as a probability distribution over the constraints. Note that we only defined CSPs with a single Boolean predicate, while in general there can be more than one predicate and they may not be Boolean. We refer to a CSP with a $k$-ary predicate as a $k$-CSP.

We are now ready to define the three MAX 2-CSP problems that we separate.

\begin{definition}
Let $\CUT: \{-1, 1\}^2 \to \{0, 1\}$ be the predicate which is satisfied if and only if the two inputs are not equal. Let $\DICUT: \{-1, 1\}^2 \to \{0, 1\}$ be the predicate which is satisfied if and only if $x = 1$ and $y = -1$. Then \MAXCUT\ is the problem $\MAXCSPP^+(\CUT)$, \MAXDICUT\ is the problem $\MAXCSPP^+(\DICUT)$ and \MAXAND\ is the problem $\MAXCSPP(\DICUT)$. 
\end{definition}

In graph-theoretic language, we can think of each variable in a \MAXDICUT\ instance as a vertex, and each constraint as a weighted direct edge between two vertices. An assignment of $+1$ and $-1$ to the vertices defines a directed cut in the graph. We are asked to assign $+1$ and $-1$ to the vertices so that the sum of the weights of edges that cross the cut, i.e., go from $+1$ to $-1$, is maximized.

We can also define $\AND: \{-1, 1\}^2 \to \{0, 1\}$ such that $\AND(x,y)=1$ if and only if $x=y=-1$. Note that then $\DICUT(x,y)=\AND(\bar{x},y)$, and \MAXAND\ is also $\MAXCSPP(\AND)$, hence its name.

The following Fourier expansion of $\DICUT$ is heavily used throughout the paper.
\begin{proposition}
$\DICUT(x, y) = \frac{1 + x - y - xy}{4}$.
\end{proposition}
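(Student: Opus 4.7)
The plan is to verify the identity by case analysis on the four elements of $\{-1,1\}^2$. First I would tabulate $\DICUT(x,y)$ from its definition: it equals $1$ on $(1,-1)$ and $0$ on the other three pairs. Then I would evaluate $(1+x-y-xy)/4$ on each of the four pairs and observe that the values agree. The only nontrivial case is $(x,y)=(1,-1)$, where the numerator equals $1+1+1+1=4$; in each of the remaining three cases the numerator collapses to $0$.

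Since every function $f:\{-1,1\}^2\to\RR$ admits a unique multilinear expansion of the form $a_0+a_1 x+a_2 y+a_{12}xy$, the case check already determines the coefficients. Alternatively, I could derive them from the standard Fourier formula $\hat{f}(S)=\E[f(x,y)\prod_{i\in S}x_i]$ under uniform $(x,y)\in\{-1,1\}^2$; only the input $(1,-1)$ contributes, giving $\hat{f}(\emptyset)=\tfrac14$, $\hat{f}(\{1\})=\tfrac14$, $\hat{f}(\{2\})=-\tfrac14$, $\hat{f}(\{1,2\})=-\tfrac14$, which reassembles into $(1+x-y-xy)/4$.

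There is no real obstacle: the statement is essentially a one-line computation. The only item worth flagging is the paper's sign convention, which identifies $-1$ with \textbf{true} and $+1$ with \textbf{false}, so that the defining condition ``$x=1$ and $y=-1$'' genuinely corresponds to a directed edge crossing the cut in the intended orientation. Keeping that convention straight is the only place a sign error could enter, and it is exactly what makes the $-y$ and $-xy$ terms (rather than $+y$ and $+xy$) appear in the expansion.
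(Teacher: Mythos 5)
Your proof is correct; the paper states this proposition without proof, treating it as an immediate computation, and your four-point case check (equivalently, the Fourier-coefficient computation) is precisely the verification that is implicit. Both approaches you sketch are standard and interchangeable here, and your note about the sign convention correctly identifies the only place a slip could occur.
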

This proposition can be used to extend the domain of $\DICUT$ to real inputs.

Any $\MAXCSPP(P)$ has a \emph{canonical} semi-definite programming relaxation.
The canonical SDP relaxation for \MAXDICUT, for example, is:
\begin{alignat*}{3}
&\text{maximize} &\qquad& \sum_{C={\smaller[0.76] \DICUT}(x_{i}, x_{j}) \in \mathcal C}  w_C \cdot \frac{1 + \bv_0\cdot\bv_{i} - \bv_0\cdot\bv_{j} - \bv_{i}\cdot\bv_{j}}{4}\\
&\text{subject to} &\qquad& \forall i \in \{0, 1, 2, \ldots, n\},\ \ \ \qquad\,\, \bv_i \cdot \bv_i = 1,\\
& &\qquad& \forall C = \DICUT(x_{i}, x_{j}) \in \mathcal{C}, 
\begin{array}{c}
(\bv_0 - \bv_i) \cdot (\bv_0 - \bv_j) \GE 0,\\
(\bv_0 + \bv_i) \cdot (\bv_0 - \bv_j) \GE 0,\\
(\bv_0 - \bv_i) \cdot (\bv_0 + \bv_j) \GE 0,\\
(\bv_0 + \bv_i) \cdot (\bv_0 + \bv_j) \GE 0.\\
\end{array}
\end{alignat*}
The canonical SDP relaxation is obtained as follows. There is a unit vector $\bv_i \in \mathbb{R}^{n+1}$ for each variable $x_i$, and a special unit vector $\bv_0$ corresponding to false. Each linear term $x_i$ in the Fourier expansion of $P$ is replaced by $\bv_0 \cdot \bv_i$, and each quadratic term $x_ix_j$ is replaced by $\bv_i\cdot \bv_j$. The so-called triangle inequalities are then added.

Note that this is the special case of Raghavendra's basic SDP in the setting of Boolean 2-CSPs, and the triangle inequalties ensure that there is a local distribution of assignments for each constraint. 

\subsection{Unique Games Conjecture}
The Unique Games Conjecture (UGC), introduced by Khot~\cite{khot02}, plays a crucial role in the study of hardness of approximation of CSPs. One version of the conjecture is as follows.

\begin{definition}[Unique Games]
In a unique games instance $I = (G, L, \Pi)$, we are given a weighted graph $G = (V(G), E(G), w)$, a set of labels $[L] = \{1, 2, \ldots, L\}$ and a set of permutations $\Pi = \{\pi_e^v : [L] \to [L] \mid e = \{v, u\} \in E(G)\}$ such that for every $e = \{u, v\} \in E(G)$, $\pi_e^{v} = (\pi_e^{u})^{-1}$. An assignment to this instance is a function $A: V(G) \to [L]$. We say that $A$ satisfies an edge $e = \{u, v\}$ if $\pi_e^u(A(u)) = A(v)$. The value of an assignment $A$ is the weight of satisfied edges, i.e., $\Val(I, A) = \sum_{e \in E(G): A \textrm{ satisfies } e} w(e)$, and the value of the instance $\Val(I)$ is defined to be the value of the best assignment, i.e., $\Val(I) = \max_A \Val(I, A)$.
\end{definition}

\begin{conjecture*}[Unique Games Conjecture]
For any $\eta, \gamma > 0$, there exists a sufficiently large $L$ such that the problem of determining whether a given unique games instance $I$ with $L$ labels has $\Val(I) \geq 1 - \eta$ or $\Val(I) \leq \gamma$ is NP-hard.
\end{conjecture*}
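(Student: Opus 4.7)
The Unique Games Conjecture is one of the most celebrated open problems in theoretical computer science, and no proof or disproof is currently known. Nonetheless, if one were to attempt a proof, the natural approach would follow the PCP paradigm: construct, for any target $\eta, \gamma > 0$, a polynomial-time reduction from a canonical NP-hard problem (e.g., Label Cover with a large soundness gap, obtained from 3-SAT via parallel repetition) to Unique Games with completeness at least $1 - \eta$ and soundness at most $\gamma$. The output instance must use bijective constraints on a label set $[L]$, where $L$ is allowed to grow with $1/\eta$ and $1/\gamma$.

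The plan would be to compose a Label Cover instance with a long-code encoding of its labels, and then verify consistency between the long codes of adjacent Label Cover vertices using local tests designed to enforce a unique (bijective) correspondence between labels. Completeness would come from the fact that honest (dictator) long-code encodings produce assignments satisfying essentially all tests; soundness would be established by a Fourier-analytic invariance-style argument showing that if a collection of long codes satisfies a non-negligible fraction of the bijective constraints, then an influential coordinate can be decoded to produce a good labeling of the underlying Label Cover instance, contradicting its soundness.

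The fundamental obstacle, and the reason UGC remains open despite two decades of effort, is precisely the bijection requirement on the constraints. Standard PCP constructions (including the 2-prover-1-round Label Cover reductions underlying H{\aa}stad-style hardness results) produce \emph{projection} constraints $\pi: [R] \to [L]$ with $R \gg L$; these are not bijections. Attempts to symmetrize projections into bijections have historically lost either the near-perfect completeness or the tight soundness. The recent proof of the 2-to-2 Games Conjecture by Khot, Minzer, and Safra, via Grassmann-graph expansion, comes tantalizingly close: it achieves soundness tending to $0$, but with completeness only $\tfrac{1}{2}$. Boosting completeness from $\tfrac{1}{2}$ to $1 - \eta$ while preserving bijective (rather than 2-to-2) constraints appears to require genuinely new ideas, and would be the crux of any successful proof. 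In short, the plan is clear in outline, but the hard part, unresolved for twenty years, is executing the soundness analysis in the bijective regime with near-perfect completeness.
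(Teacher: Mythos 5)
You correctly recognize that the statement is a \emph{conjecture}, not a theorem: the paper does not prove the Unique Games Conjecture, it merely states it (following Khot) and uses it as a hypothesis throughout. There is therefore no ``paper's own proof'' to compare against, and your survey of why the conjecture remains open --- the bijection requirement versus projection constraints, the 2-to-2 result of Khot, Minzer, and Safra achieving only completeness $1/2$, and the difficulty of boosting completeness in the bijective regime --- is an accurate account of the state of the art, though it is background material not contained in the paper. In short, your response is appropriate: one cannot supply a proof because none exists, and you say so.
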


We say that a problem is UG-hard, if it is NP-hard assuming the UGC. Raghavendra \cite{R08} showed that any integrality gap instance of the canonical SDP relaxation can be turned into a UG-hardness result.

\subsection{Configurations of biases and pairwise biases}
As it turns out, an actual integrality gap instance is not required to derive UG-hardness results. Instead, it is sufficient to consider configurations of SDP solution vectors that appear in the same constraint. For 2-CSPs, each such configuration is represented by a triplet $\theta = (b_i, b_j, b_{ij})$, where $b_i = \bv_0 \cdot \bv_i$, and $b_j = \bv_0 \cdot \bv_j, b_{ij} = \bv_i \cdot \bv_j$. $b_i$ and $b_j$ are called \emph{biases} and $b_{ij}$ is called a \emph{pairwise bias}. A valid configuration is required to satisfy the triangle inequalities described in the previous section. As long as the triangle inequalities are satisfied, it does not matter whether such a configuration comes from an actual SDP solution. We will use $\Theta$ for a set of valid configurations, and $\Tilde{\Theta}$ for such a set endowed with a probability distribution.

\begin{definition}[Completeness]
Given a configuration $\theta = (b_i, b_j, b_{ij})$ for \MAXDICUT, its completeness is defined as $\comp(\theta)=\frac{1 + b_i - b_j - b_{ij}}{4}$. For a distribution of configurations $\Tilde{\Theta}$, its completeness  is defined as $\comp(\Tilde{\Theta})=\mathbb{E}_{\theta \sim \Tilde{\Theta}}[\comp(\theta)]$.
\end{definition}
Note that if $\Tilde{\Theta}$ actually comes from an SDP solution, then $\comp(\Tilde{\Theta})$ is simply the SDP value of this solution.

\begin{definition}[Relative pairwise bias]
Given a configuration $\theta = (b_i, b_j, b_{ij})$, the relative pairwise bias is defined as $\rho(\theta)=\frac{b_{ij} - b_ib_j}{\sqrt{(1 - b_i^2)(1 - b_j^2)}}$, if $(1 - b_i^2)(1 - b_j^2)\neq 0$, and $0$ otherwise.
\end{definition}
Geometrically, $\rho(\theta)$ is the inner product between $\bv_i$ and $\bv_j$ after removing their components parallel to $\bv_0$ and renormalizing. 

\begin{definition}[Positive configurations~\cite{Austrin10}]
Given a Boolean predicate $P(x_1, x_2)$ on two variables with Fourier expansion $\frac{\hat{P}_\emptyset + \hat{P}_1x_1+ \hat{P}_2x_2+ \hat{P}_{1, 2}x_1x_2}{4}$, a configuration $\theta = (b_i, b_j, b_{ij})$ for $\MAXCSPP(P)$ (or $\MAXCSPP^+(P)$) is called \emph{positive} if $\hat{P}_{1, 2}\cdot \rho(\theta) \geq 0$.
\end{definition}

If $P = \DICUT$, then the quadratic coefficient in the Fourier expansion is $-1/4$, which implies that a configuration is positive if and only if its relative pairwise bias is not positive. Austrin~\cite{Austrin10} presented a general mechanism to deduce UG-hardness results for $\MAXCSPP(P)$ from hard distributions of positive configurations. With very slight modifications, the same mechanism can also be used for $\MAXCSPP^+(P)$. Austrin also conjectured that positive configurations are the hardest to round. This conjecture is still open. Our results do not rely on this conjecture. 

In a \MAXDICUT\ instance, if we flip the direction of every edge in the graph, then an optimal solution to this new instance can be obtained by flipping all the signs in an optimal solution to the original instance. For configurations, this symmetry corresponds to swapping the two biases and then changing the signs.

\begin{definition}[Flipping a configuration]
Let $\theta = (b_i, b_j, b_{ij})$ be a $\DICUT$ configuration. We define its \emph{flip} to be $\flip(\theta)=(-b_j, -b_i, b_{ij})$.
\end{definition}

The following proposition can be easily verified.
\begin{proposition}\label{prop:flip_comp}
Let $\theta = (b_i, b_j, b_{ij})$ be a $\DICUT$ configuration. We have
\begin{enumerate}
\item $\rho(\theta) = \rho(\flip(\theta))$.
\item $\comp(\theta) = \comp(\flip(\theta))$.
\end{enumerate}
\end{proposition}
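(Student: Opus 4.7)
The proposition is a direct verification from the definitions of $\rho$, $\comp$, and $\flip$; the plan is simply to substitute $\flip(\theta) = (-b_j, -b_i, b_{ij})$ into each formula and simplify.

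For part (1), the key observation is that $\rho$ depends on $b_i$ and $b_j$ only through the symmetric quantities $b_i b_j$ (in the numerator) and $b_i^2, b_j^2$ (in the denominator). Substituting the flipped configuration gives $(-b_j)(-b_i) = b_i b_j$ in the numerator and $(-b_j)^2 = b_j^2$, $(-b_i)^2 = b_i^2$ under the square root, so after rearranging the factors we recover $\rho(\theta)$ exactly. One should also note that the degenerate case $(1-b_i^2)(1-b_j^2)=0$ is preserved under flipping, so both sides equal $0$ in that case as well.

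For part (2), one computes
\[
\comp(\flip(\theta)) \EQ \frac{1 + (-b_j) - (-b_i) - b_{ij}}{4} \EQ \frac{1 + b_i - b_j - b_{ij}}{4} \EQ \comp(\theta),
\]
where the middle equality is just regrouping terms. There is no real obstacle here; the only thing to double-check is that the sign pattern in the Fourier expansion of $\DICUT$ (namely $+x, -y, -xy$) is precisely what makes this symmetry work — swapping the roles of $x$ and $y$ and negating both indeed preserves $\DICUT$, which is the combinatorial reason behind the algebraic cancellation.

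In short, I would present both parts as one-line calculations, preceded by a sentence noting that this reflects the symmetry of \MAXDICUT\ under reversing every edge and simultaneously flipping every variable assignment, which is exactly the motivation given for the definition of $\flip$.
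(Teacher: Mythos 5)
Your proof is correct and is the natural direct verification; the paper itself does not spell out a proof, remarking only that the proposition ``can be easily verified,'' and your substitution-and-simplify argument is exactly the intended one, including the check that the degenerate case $(1-b_i^2)(1-b_j^2)=0$ is invariant under $\flip$.
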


\subsection{The \texorpdfstring{$\THRESH$}{THRESH} and \texorpdfstring{$\THRESH^-$}{THRESH-} families of rounding functions}

$\THRESH$ and $\THRESH^-$, first introduced in~\cite{LLZ02}, are small but powerful families of rounding functions for SDP relaxations of CSPs. In a $\THRESH^-$ rounding scheme, a continuous threshold function $f: [-1, 1] \to \mathbb{R}$ is specified. The algorithm chooses a random Gaussian vector $\br \in \mathbb{R}^{n + 1}$, and sets each variable $x_i$ to true ($-1$) if and only if $\br \cdot \bv_i^\perp \geq f(\bv_0 \cdot \bv_i)$, where 
\[
\bv_i^\perp \;=\; \frac{\bv_i - (\bv_i \cdot \bv_0)\bv_0}{\sqrt{1 - (\bv_i \cdot \bv_0)^2}}
\]
is the component of $\bv_i$ orthogonal to $\bv_0$ renormalized to a unit vector. (If $\bv_i = \pm \bv_0$, we can take~$\bv_i^\perp$ to be any unit vector that is orthogonal to every other vector in the SDP solution.) Since $\bv_i^\perp$ is a unit vector, $\br \cdot \bv_i^\perp$ is a standard normal random variable. Furthermore, for any $i, j\in[n]$, $\br \cdot \bv_i^\perp$ and $\br \cdot \bv_j^\perp$ are jointly Gaussian with correlation $\bv_i^\perp \cdot \bv_j^\perp$.

Let $\Phi, \varphi$ be the c.d.f. and p.d.f. of the standard normal distribution, respectively. For $t_1, t_2 \in \mathbb{R}$, let $\Phi_\rho(t_1, t_2) := \Pr[X \leq t_1 \wedge Y \leq t_2]$, where $X$ and $Y$ are two standard normal random variables that are jointly Gaussian with $\mathbb{E}[XY] = \rho$. Then for a $\THRESH^-$ rounding scheme with threshold function $f$, a variable $x_i$ is rounded to false with probability $\Phi(f(b_i))$. For a $\DICUT$ configuration $\theta=(b_i, b_j, b_{ij})$, the probability that it is satisfied by $\THRESH^-$ with $f$, which happens when $x_i$ is set to false and $x_j$ is set to true, is equal to 
\begin{align*}
\Pr\left[\br \cdot \bv_i^\perp \leq f(b_i) \text{ and } \br \cdot \bv_j^\perp \geq f(b_j)\right] &  \;=\; \Pr\left[\br \cdot \bv_i^\perp \leq f(b_i) \text{ and } -\br \cdot \bv_j^\perp \leq -f(b_j)\right] \\
& \;=\; \Phi_{-\rho(\theta)}(f(b_i), -f(b_j))\;.
\end{align*}
 This naturally leads to the following definition.

\begin{definition}[Soundness]
Let $f:[-1, 1] \to \mathbb{R}$ be a continuous threshold function and $\theta = (b_i, b_j, b_{ij})$ a configuration for \MAXDICUT. We define $\sound(\theta, f) = \Phi_{-\rho(\theta)}(f(b_i), -f(b_j))$. For a distribution of configurations $\Tilde{\Theta}$, its soundness $\sound(\Tilde{\Theta}, f)$ is defined as $\mathbb{E}_{\theta \sim \Tilde{\Theta}}[\sound(\theta, f)]$. 
\end{definition}

As in the case for configurations, we can also flip a $\THRESH^-$ threshold function.

\begin{definition}
Let $f: [-1, 1] \to \mathbb{R}$ be a continuous threshold function. We define $\flip(f)$ as the function $x \mapsto -f(-x)$.
\end{definition}

\begin{proposition}\label{prop:flip_sound}
Let $f: [-1, 1] \to \mathbb{R}$ be a continuous threshold function and $\theta = (b_i, b_j, b_{ij})$ a configuration. Then
\[
\sound(\theta, f) \EQ \sound(\flip(\theta), \flip(f))\;.
\]
\end{proposition}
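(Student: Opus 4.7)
The plan is to prove this by unfolding the definitions of $\flip$ on both the configuration and the threshold function, then invoking the basic symmetry of the bivariate normal CDF together with Proposition~\ref{prop:flip_comp}.

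First I would expand the right-hand side. By the definition of soundness applied to $\flip(\theta) = (-b_j, -b_i, b_{ij})$ and $\flip(f): x \mapsto -f(-x)$, we have
\[
\sound(\flip(\theta), \flip(f)) \EQ \Phi_{-\rho(\flip(\theta))}\bigl(\flip(f)(-b_j),\ -\flip(f)(-b_i)\bigr).
\]
The two threshold evaluations simplify: $\flip(f)(-b_j) = -f(-(-b_j)) = -f(b_j)$ and $-\flip(f)(-b_i) = -(-f(-(-b_i))) = f(b_i)$. Moreover, by Proposition~\ref{prop:flip_comp}(1), $\rho(\flip(\theta)) = \rho(\theta)$. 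So the right-hand side collapses to $\Phi_{-\rho(\theta)}(-f(b_j),\ f(b_i))$.

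The remaining step is to check that $\Phi_{-\rho(\theta)}(-f(b_j),\ f(b_i)) = \Phi_{-\rho(\theta)}(f(b_i),\ -f(b_j))$, which is nothing more than the symmetry of the bivariate normal CDF in its two threshold arguments: if $(X,Y)$ is jointly Gaussian with $\E[XY]=\rho$ and each marginal standard, then $(X,Y)$ and $(Y,X)$ have the same distribution, so $\Pr[X \le t_1, Y \le t_2] = \Pr[X \le t_2, Y \le t_1]$. Chaining these equalities yields the claim.

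There is no real obstacle here; the entire content is bookkeeping with signs and arguments, plus the elementary exchangeability of a symmetric bivariate normal. The only place to be careful is tracking the four negations that appear (two from the definition of $\flip$ on the configuration, and two from the definition of $\flip$ on $f$) so that the arguments of $\Phi_{-\rho(\theta)}$ line up correctly before invoking the symmetry.
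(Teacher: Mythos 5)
Your proposal is correct and matches the paper's own argument in all essentials: both unfold the definitions of $\flip$ on the configuration and the threshold function, use Proposition~\ref{prop:flip_comp} to see that $\rho$ is unchanged, and conclude via the symmetry $\Phi_\rho(t_1,t_2)=\Phi_\rho(t_2,t_1)$ of the bivariate normal CDF. The only difference is cosmetic: you start from the right-hand side and simplify toward $\sound(\theta,f)$, while the paper chains the same identities starting from the left-hand side.
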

\begin{proof}
By Proposition~\ref{prop:flip_comp}, we have that $\rho(\theta) = \rho(\flip(\theta)) = \rho$. By definition of soundness, 
\begin{align*}
\sound(\theta, f) & \EQ \Phi_{-\rho}(f(b_i), -f(b_j)) \\
& \EQ \Phi_{-\rho}(-\flip(f)(-b_i), \flip(f)(-b_j)) \\
& \EQ \Phi_{-\rho}(\flip(f)(-b_j), -\flip(f)(-b_i)) \\
& \EQ \sound(\flip(\theta), \flip(f))\;. \qedhere
\end{align*}
\end{proof}

A rounding scheme from $\THRESH$ can be thought of as a distribution over $\THRESH^-$ rounding schemes. Formally speaking, a $\THRESH$ rounding scheme is specified by a continuous function $T: \mathbb{R} \times [-1, 1] \to \mathbb{R}$, and a variable $x_i$ is set to true if and only if $\br \cdot \bv_i^\perp \geq T(\bv_0 \cdot \br, \bv_0 \cdot \bv_i)$. This allows for a continuous distribution over $\THRESH^-$ rounding schemes.

The following partial derivatives are helpful for analyzing $\THRESH$ and $\THRESH^-$ rounding schemes.

\begin{proposition}[Partial derivatives of $\Phi_\rho(t_1,t_2)$]\label{prop:Phi_partial}
\begin{align*}
\frac{\partial \Phi_\rho(t_1, t_2)}{\partial \rho} & \EQ \frac{1}{2\pi \sqrt{1 - \rho^2}} \exp\left(-\frac{t_1^2 - 2\rho t_1t_2 + t_2^2}{2(1 - \rho^2)}\right)\;, \\
\frac{\partial \Phi_\rho(t_1, t_2)}{\partial t_1} & \EQ \varphi(t_1)\Phi\left(\frac{t_2 - \rho t_1}{\sqrt{1 - \rho^2}}\right)\;, \\
\frac{\partial \Phi_\rho(t_1, t_2)}{\partial t_2} & \EQ \varphi(t_2)\Phi\left(\frac{t_1 - \rho t_2}{\sqrt{1 - \rho^2}}\right)\;. 
\end{align*}
\end{proposition}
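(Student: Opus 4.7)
The plan is to prove all three identities by working directly from the definition
$$\Phi_\rho(t_1, t_2) \EQ \int_{-\infty}^{t_1} \int_{-\infty}^{t_2} \phi_\rho(x, y) \, dy \, dx,$$
where
$$\phi_\rho(x,y) \EQ \frac{1}{2\pi\sqrt{1-\rho^2}} \exp\left(-\frac{x^2 - 2\rho xy + y^2}{2(1-\rho^2)}\right)$$
is the density of the bivariate standard normal with correlation $\rho$. The claimed formula for $\partial \Phi_\rho / \partial \rho$ is literally $\phi_\rho(t_1, t_2)$, so the third identity is really the assertion $\partial \Phi_\rho/\partial \rho = \phi_\rho(t_1, t_2)$.

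For the $t_1$-derivative, I apply the Leibniz rule (the integrand is smooth and decays rapidly in $y$) to obtain $\partial \Phi_\rho / \partial t_1 = \int_{-\infty}^{t_2} \phi_\rho(t_1, y)\,dy$. I then factor the integrand as
$$\phi_\rho(t_1, y) \EQ \varphi(t_1)\cdot \frac{1}{\sqrt{1-\rho^2}}\,\varphi\!\left(\frac{y-\rho t_1}{\sqrt{1-\rho^2}}\right),$$
i.e., as the marginal density of $X$ at $t_1$ times the conditional density of $Y\mid X=t_1 \sim N(\rho t_1, 1-\rho^2)$. Substituting $u = (y-\rho t_1)/\sqrt{1-\rho^2}$ in the inner integral then produces $\varphi(t_1)\,\Phi\!\left((t_2-\rho t_1)/\sqrt{1-\rho^2}\right)$, exactly as claimed. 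The $t_2$-derivative follows by an identical argument with the roles of $x$ and $y$ interchanged.

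The $\rho$-derivative is the main substantive step and relies on Plackett's identity
$$\frac{\partial \phi_\rho}{\partial \rho}(x,y) \EQ \frac{\partial^2 \phi_\rho}{\partial x\,\partial y}(x,y).$$
I would verify this by direct calculation from the explicit form of $\phi_\rho$: differentiating $\log \phi_\rho$ in $\rho$ produces $\rho/(1-\rho^2)$ from the prefactor plus a rational function of $x,y,\rho$ coming from the quadratic form in the exponent, while the mixed $xy$-derivative of $\log \phi_\rho$ yields an even simpler rational expression; multiplying both sides by $\phi_\rho$ and comparing the resulting polynomial pieces in $x,y,\rho$ makes the identity transparent. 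Granted Plackett, I differentiate under the integral sign and apply the fundamental theorem of calculus twice:
$$\frac{\partial \Phi_\rho(t_1,t_2)}{\partial \rho} \EQ \int_{-\infty}^{t_1}\int_{-\infty}^{t_2} \frac{\partial^2 \phi_\rho}{\partial x\,\partial y}(x,y)\,dy\,dx \EQ \phi_\rho(t_1,t_2),$$
where the boundary terms at $-\infty$ vanish due to the Gaussian decay of $\phi_\rho$ and its first derivatives. The only genuine obstacle is the bookkeeping for Plackett's identity; everything else is routine calculus.
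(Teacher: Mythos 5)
Your proposal is correct. Note, however, that the paper itself does not prove this proposition: it cites Drezner and Wesolowsky for the $\rho$-derivative formula and states that the two $t$-derivatives ``follow easily from the definition.'' You have supplied the self-contained argument the paper leaves implicit. Your treatment of the $t$-derivatives (Leibniz rule, then factoring $\phi_\rho(t_1,y)$ as the marginal density of $X$ at $t_1$ times the conditional density of $Y \mid X = t_1$, then completing the inner integral) is exactly the routine computation the authors are alluding to. For the $\rho$-derivative, your reduction to Plackett's identity
\[
\frac{\partial \phi_\rho}{\partial \rho} \EQ \frac{\partial^2 \phi_\rho}{\partial x\,\partial y}
\]
followed by two applications of the fundamental theorem of calculus (with boundary terms at $-\infty$ killed by Gaussian decay) is the standard and, as far as I know, the same route as the cited source; I verified by direct computation that both sides equal $\phi_\rho \cdot\bigl[(x-\rho y)(y-\rho x)/(1-\rho^2)^2 + \rho/(1-\rho^2)\bigr]$, so the identity indeed checks out as you sketch. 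One cosmetic slip: you refer to the $\rho$-derivative formula as ``the third identity,'' but it is listed first in the proposition; this does not affect the argument.
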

A derivation of the formula given for $\frac{\partial \Phi_\rho(t_1, t_2)}{\partial \rho}$ can be found in Drezner and Wesolowsky \cite{DW90}. The formulas for $\frac{\partial \Phi_\rho(t_1, t_2)}{\partial t_1}$ and $\frac{\partial \Phi_\rho(t_1, t_2)}{\partial t_2}$ follow easily from the definition of $\Phi_\rho(t_1,t_2)$.

\section{Upper bounds for \texorpdfstring{\MAXDICUT}{MAX DI-CUT}}\label{sec:upper}

\subsection{Separating \texorpdfstring{\MAXDICUT}{MAX DI-CUT} from \texorpdfstring{\MAXCUT}{MAX CUT}}

In this section, we prove the following theorem, which separates \MAXDICUT\ from \MAXCUT.

\begin{theorem}\label{thm:dicut_upper}
Assuming the Unique Games Conjecture, it is NP-hard to approximate \MAXDICUT\ within a factor of $0.87461$.
\end{theorem}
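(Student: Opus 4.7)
The overall plan is to construct a distribution $\tilde{\Theta}$ over \emph{positive} DI-CUT configurations (i.e., those satisfying the triangle inequalities and having $\rho(\theta)\LE 0$, as required for $\DICUT$ since its quadratic Fourier coefficient is $-1/4$) such that
\[
\sup_{f \in \THRESH^-} \sound(\tilde{\Theta}, f) \LE 0.87461 \cdot \comp(\tilde{\Theta}),
\]
and then promote this integrality-gap distribution to a UG-hardness result via the dictatorship-test machinery of Austrin~\cite{Austrin10}. A preliminary step is to check that Austrin's mechanism, originally tailored for $\MAXCSPP(\AND)$, transfers to the positive variant $\MAXCSPP^+(\DICUT)$ with only cosmetic changes, so that a hard distribution of positive $\DICUT$ configurations suffices.

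The next ingredient is symmetry. I would choose $\tilde{\Theta}$ to be invariant under the flip map $\theta\mapsto\flip(\theta)=(-b_j,-b_i,b_{ij})$. By Propositions~\ref{prop:flip_comp} and~\ref{prop:flip_sound}, both $\comp(\tilde{\Theta})$ and the functional $f\mapsto \sound(\tilde{\Theta},f)$ are then invariant under $f\mapsto \flip(f)$. As the introduction flags, with a carefully chosen symmetric support the worst-case $f$ can then be shown to be \emph{odd}, i.e.\ to satisfy $\flip(f)=f$. On a self-flip configuration $(b,-b,c)$, an odd $f$ yields the clean one-parameter expression $\sound((b,-b,c),f)=\Phi_{-\rho}(f(b),f(b))$, reducing the optimization over $\THRESH^-$ to a low-dimensional problem determined by the finitely many bias values in the support.

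For the construction itself, I would search for the smallest flip-invariant family of positive configurations whose ratio drops below $0.87461$. Following the hint in the introduction that at least one configuration should have \emph{no} triangle inequality tight, a natural candidate is a mixture of a self-symmetric triple $(b,-b,c)$ with $c$ strictly less than $b^2$ (so $\rho<0$) together with a flip-pair placed strictly in the interior of the triangle-inequality region. The free parameters $(b,c,\dots)$ and the mixing weights would be tuned so that the optimal odd $f$ against this mixture achieves exactly $0.87461$. Using the partial-derivative formulas of Proposition~\ref{prop:Phi_partial}, the first-order conditions on $f$ restricted to the handful of bias values in the support become a finite KKT system that can be solved, and the resulting numerical ratio verified analytically.

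The main obstacle, and where I expect the bulk of the work to lie, is the last step: showing that \emph{every} $f\in\THRESH^-$—not just odd ones—obeys the $0.87461$ bound. The flip symmetry gives $\sound(\tilde\Theta,f)=\sound(\tilde\Theta,\flip(f))$, but since $\Phi_{-\rho}(\cdot,\cdot)$ is not globally concave in the threshold pair, one cannot simply average $f$ and $\flip(f)$. I expect this to be handled by exploiting that the biases appearing in $\tilde{\Theta}$ form a small symmetric set: the optimization decouples across bias values, and at each bias one can verify, via a direct second-derivative computation at the symmetric critical point (using the Mehler-type identity underlying $\partial \Phi_\rho/\partial\rho$), that the odd critical point is the \emph{global} optimum. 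Having established this, the stated upper bound follows by plugging in the chosen parameters and applying the Austrin-style reduction recorded in the preliminary step.
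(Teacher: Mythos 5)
Your overall plan matches the paper's: build a flip-invariant distribution $\tilde\Theta$ of positive $\DICUT$ configurations, show the optimal $\THRESH^-$ rounding for it is effectively odd, compute the resulting ratio, and transfer to UG-hardness via Austrin's mechanism adapted to $\MAXCSPP^+(\DICUT)$. The specific support you sketch (a self-flip triple $(b,-b,c)$ plus a flip-pair) is exactly the paper's construction. Two small corrections: for $(b,-b,c)$ one has $\rho=\frac{c+b^2}{1-b^2}$, so the positivity requirement is $c<-b^2$, not $c<b^2$; and in the paper's distribution it is the self-flip configuration $(b,-b,c)$ that lies strictly in the interior of the triangle-inequality region, while the flip-pair $(\pm b,\pm b,-1+2b)$ each have one triangle inequality tight --- you have those roles reversed.

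The genuine gap is in the step you yourself flag as the hard one: ruling out non-odd $f$. Your proposed mechanism --- ``the optimization decouples across bias values'' followed by a per-bias second-derivative check --- does not work here. The soundness as a function of the two thresholds $t_1=f(-b)$, $t_2=f(b)$ contains the cross term $p_2\,\Phi_{-\rho_2}(t_2,-t_1)$ coming from $\theta_2=(b,-b,c)$, so the first-order conditions genuinely couple $t_1$ and $t_2$; there is no decoupling. Moreover a local second-derivative test at a symmetric critical point would at best certify a local maximum, not a global one. The paper instead proves, by a sign analysis of the coupled first-order conditions (writing $1-\Phi(t)=\Phi(-t)$ and using monotonicity of $|\Phi(t)-\Phi(-t)|$), that \emph{every} finite critical point $(x,y)$ satisfies $y=-x<0$; it then shows uniqueness of this critical point by reducing to a one-variable function $g(t)$ and analyzing its derivative, which requires choosing the weights so that $p_1>p_2$; and finally it treats the boundary cases $t_i=\pm\infty$ separately. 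None of these three ingredients (the sign analysis, the $p_1>p_2$ condition, the boundary analysis) appear in your sketch, and as written the proposal does not establish the global bound on $\sound(\tilde\Theta,f)$ over all of $\THRESH^-$.
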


To prove Theorem~\ref{thm:dicut_upper} we construct a distribution of positive configurations $\Tilde{\Theta}$, compute its completeness, and show that no $\THRESH^-$ rounding scheme can achieve a performance ratio of 0.87461 on it. The UG-hardness result then follows from a slight generalization of a reduction of Austrin~\cite{Austrin10}. \footnote{Since the distribution is fixed, the optimal $\THRESH^-$ rounding scheme for it is also the best $\THRESH$ rounding scheme.} (For completeness, we describe this reduction in Appendix~\ref{A-reduction}.)

The distribution $\Tilde{\Theta}$ used to obtain the upper bound is extremely simple. 
Let $p_1, p_2, b, c$ be some parameters to be chosen later. We will choose them so that $b, p_1, p_2 \in (0, 1)$, $c \in (-1, -b^2)$, and $2p_1 + p_2 = 1$. 
Consider the following distribution of configurations $\Tilde{\Theta} = \{\theta_1, \theta_2, \theta_3\}$:
\begin{center}
\begin{tabular}{lc}
    
    $\theta_1 \EQ (-b, -b, -1 + 2b)$ & \text{with probability} $p_1$ \\
    
    $\theta_2 \EQ (\phantom{-}b, -b, \hspace*{17pt}c\hspace*{17pt})$ & \text{with probability} $p_2$ \\
   
    $\theta_3 \EQ (\phantom{-}b, \phantom{-}b, -1 + 2b)$ & \text{with probability} $p_1$ \\
    
\end{tabular}
\end{center}

Note that in the $\theta_1$ and $\theta_3$ one of the triangle inequalities is tight, while in $\theta_2$ none of the triangle inequalities are tight, as was mentioned earlier. Also, this distribution is symmetric with respect to $\flip$, since $\flip(\theta_1) = \theta_3$ and $\flip(\theta_2) = \theta_2$.

We first verify that $\Tilde{\Theta}$ satisfies the positivity condition. 

\begin{proposition}\label{prop:positive}
$\Tilde{\Theta}$ is a distribution of positive configurations.
\end{proposition}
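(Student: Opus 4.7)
The plan is to reduce positivity to a single sign check on the numerator of $\rho(\theta)$. Since $\DICUT(x,y) = \frac{1 + x - y - xy}{4}$, the quadratic Fourier coefficient is $\hat{P}_{1,2} = -1$, so by definition a $\DICUT$ configuration $\theta = (b_i, b_j, b_{ij})$ is positive exactly when $\rho(\theta) \le 0$. Because $b \in (0,1)$, for each of $\theta_1, \theta_2, \theta_3$ we have $(1-b_i^2)(1-b_j^2) = (1-b^2)^2 \ne 0$, so $\rho(\theta_k)$ is given by the formula and its sign is determined by the sign of $b_{ij} - b_i b_j$.

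I would then check the three configurations in turn. For $\theta_1 = (-b,-b,-1+2b)$, compute
\[
b_{ij} - b_i b_j \;=\; -1 + 2b - b^2 \;=\; -(1-b)^2 \;\le\; 0.
\]
For $\theta_3 = (b,b,-1+2b)$, the same calculation applies (alternatively, invoke Proposition~\ref{prop:flip_comp} together with $\flip(\theta_1) = \theta_3$, which preserves $\rho$). For $\theta_2 = (b,-b,c)$,
\[
b_{ij} - b_i b_j \;=\; c - (b)(-b) \;=\; c + b^2 \;<\; 0,
\]
where the strict inequality uses the assumption $c < -b^2$.

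In all three cases $\rho(\theta_k) \le 0$, so every configuration in the support of $\tilde{\Theta}$ is positive, completing the proposition. The argument is entirely mechanical; there is no real obstacle, and the only thing to be careful about is matching the sign convention for $\hat{P}_{1,2}$ and the denominator non-degeneracy condition in the definition of $\rho$.
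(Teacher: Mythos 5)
Your proposal is correct and follows essentially the same route as the paper: compute $\rho$ for each of $\theta_1,\theta_2,\theta_3$ (the paper gets $\rho_1=-\frac{1-b}{1+b}<0$ and $\rho_2=\frac{c+b^2}{1-b^2}<0$) and note that positivity for $\DICUT$, whose quadratic Fourier coefficient is negative, amounts to $\rho(\theta)\le 0$. Your sign check on the numerator $b_{ij}-b_ib_j$ is the same calculation in slightly different clothing, so there is nothing to add.
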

\begin{proof}
In $\theta_1$ and $\theta_3$, the relative pairwise bias is equal to 
$
\rho_1 = \frac{-1 + 2b - b^2}{1 - b^2} = -\frac{1-b}{1+b} < 0.
$
In $\theta_2$, the relative pairwise bias is equal to 
$
\rho_2 = \frac{c + b^2}{1 - b^2} < 0
$
since we choose $c < -b^2$.
\end{proof}

The completeness of this instance can be easily computed.
\begin{proposition}
$\displaystyle\quad
\comp(\Tilde{\Theta}) = p_1\cdot(1-b) + p_2\cdot \frac{1 + 2b - c}{4}.
$
\end{proposition}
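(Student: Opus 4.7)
The proposal is a direct computation: since $\comp(\Tilde{\Theta})=\mathbb{E}_{\theta\sim \Tilde{\Theta}}[\comp(\theta)]$, I plan to expand this expectation using the three-point distribution and evaluate $\comp(\theta)=\frac{1+b_i-b_j-b_{ij}}{4}$ on each configuration separately. No cleverness is needed; the entire content is arithmetic.

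First I would compute $\comp(\theta_1)$ with $\theta_1=(-b,-b,-1+2b)$: the linear terms $b_i-b_j=-b-(-b)=0$ cancel, and $-b_{ij}=1-2b$, giving $\comp(\theta_1)=\frac{2-2b}{4}=\frac{1-b}{2}$. By Proposition~\ref{prop:flip_comp} (or an identical direct computation on $\theta_3=(b,b,-1+2b)$), we have $\comp(\theta_3)=\comp(\flip(\theta_1))=\comp(\theta_1)=\frac{1-b}{2}$. For $\theta_2=(b,-b,c)$, the linear terms contribute $b-(-b)=2b$ and the pairwise term contributes $-c$, so $\comp(\theta_2)=\frac{1+2b-c}{4}$.

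Finally I would assemble the weighted sum:
\begin{equation*}
\comp(\Tilde{\Theta}) \EQ p_1\cdot\frac{1-b}{2} + p_2\cdot\frac{1+2b-c}{4} + p_1\cdot\frac{1-b}{2} \EQ p_1\cdot(1-b) + p_2\cdot\frac{1+2b-c}{4},
\end{equation*}
where the two $\theta_1$ and $\theta_3$ contributions combine because they carry equal probability $p_1$ and equal completeness. Since every step is an immediate substitution, there is no real obstacle; the only thing to be careful about is the sign in the $-b_j$ term for $\theta_2$, which flips $-b$ into $+b$ and produces the $2b$ in the numerator.
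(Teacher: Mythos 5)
Your computation is correct and follows essentially the same route as the paper: evaluate $\comp(\theta_i)$ on each of the three configurations directly from the definition and take the weighted sum. The optional appeal to Proposition~\ref{prop:flip_comp} to handle $\theta_3$ is a small cosmetic shortcut; the paper simply computes $\comp(\theta_3)$ directly, and both yield $\frac{1-b}{2}$.
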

\begin{proof}
We have
\begin{align*}
    &\,\, \comp(\Tilde{\Theta}) \\
     = &\,\, p_1 \cdot \frac{1 + (-b) - (-b) - (-1 + 2b)}{4} + p_2 \cdot \frac{1+b-(-b)-c}{4} +  p_1 \cdot \frac{1 + b - b - (-1 + 2b)}{4} \\
     = &\,\, p_1 \cdot \frac{2 - 2b}{4} + p_2 \cdot \frac{1+2b-c}{4} +  p_1 \cdot \frac{2 - 2b}{4} \\
     = &\,\, p_1\cdot(1-b) + p_2\cdot \frac{1 + 2b - c}{4}. \qedhere
\end{align*}
\end{proof}

We now give an upper bound on the performance of any $\THRESH^-$ rounding scheme on this distribution. Let $t_1, t_2$ be the thresholds for $-b, b$ respectively. Let $s(t_1, t_2)$ be the soundness of this rounding scheme on $\Tilde{\Theta}$. By definition of $\THRESH^-$, we have 
\[
s(t_1, t_2) \EQ p_1 \cdot \Phi_{-\rho_1}(t_1, -t_1) + p_2 \cdot \Phi_{-\rho_2}(t_2, -t_1) + p_1 \cdot \Phi_{-\rho_1}(t_2, -t_2)\;,
\]
where $\rho_1, \rho_2 < 0$ are computed in Proposition~\ref{prop:positive}. We first look at the case where $-\infty<t_1,t_2<\infty$. The case where $t_1=\pm\infty$ or $t_2=\pm\infty$, which corresponds to always setting one or both variables to $1$ or $-1$, can be dealt with separately via a simple case analysis. 

As we discussed in the introduction, a $\THRESH^-$ rounding scheme for \MAXDICUT\ is not necessarily odd, but as the following lemma shows, the simple and symmetric structure of our construction ensures that any finite critical point of $s$ is necessarily symmetric around the origin.

\begin{lemma}
Let $x, y \in \mathbb{R}$. If $(x, y)$ is a critical point of $s(t_1, t_2)$, then $y = -x = |x|$.
\end{lemma}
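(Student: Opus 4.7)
The plan is to compute $\nabla s$ explicitly via Proposition~\ref{prop:Phi_partial} and then extract the conclusion from two separate sign comparisons: one that forces $x<0<y$ (giving $|x|=-x$), and another that forces $x+y=0$ (giving $y=-x$). Write $A(u):=\Phi_{-\rho_1}(u,-u)$, so that $s(t_1,t_2)=p_1 A(t_1)+p_2\,\Phi_{-\rho_2}(t_2,-t_1)+p_1 A(t_2)$. A direct application of Proposition~\ref{prop:Phi_partial} together with the chain rule gives $A'(u)=\varphi(u)\bigl[\Phi(-\alpha u)-\Phi(\alpha u)\bigr]$, where $\alpha:=\sqrt{(1-\rho_1)/(1+\rho_1)}>0$; in particular $A'$ is odd, vanishes only at $0$, and satisfies $A'(u)>0\iff u<0$. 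Similarly,
\[
\frac{\partial s}{\partial t_1}=p_1 A'(t_1)-p_2\varphi(t_1)\,\Phi\!\left(\tfrac{t_2-\rho_2 t_1}{\sqrt{1-\rho_2^{\,2}}}\right),\qquad
\frac{\partial s}{\partial t_2}=p_1 A'(t_2)+p_2\varphi(t_2)\,\Phi\!\left(\tfrac{-t_1+\rho_2 t_2}{\sqrt{1-\rho_2^{\,2}}}\right).
\]

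At a critical point $(x,y)$, setting each expression to zero and using $\varphi,\Phi>0$, the right-hand sides have strictly determined signs, so that $A'(x)>0$ and $A'(y)<0$. By the sign behaviour of $A'$ noted above, this forces $x<0<y$, which already yields the $|x|=-x$ half of the conclusion.

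To obtain $y=-x$, I would divide each critical-point equation by $\varphi(t_i)>0$ (using the explicit formula for $A'$) and then subtract one from the other. Using the identity $1-\Phi(z)=\Phi(-z)$, the resulting left side collapses to $2p_1\bigl[\Phi(-\alpha x)-\Phi(\alpha y)\bigr]$, whose sign is the opposite of $\operatorname{sign}(x+y)$ (it vanishes exactly when $-\alpha x=\alpha y$). On the right side, the two arguments of $\Phi$ differ by $(1-\rho_2)(x+y)/\sqrt{1-\rho_2^{\,2}}$, and the construction of $\tilde\Theta$ ensures $\rho_2=(c+b^{2})/(1-b^{2})<0$ because $c<-b^{2}$; hence $1-\rho_2>0$ and the right side has the same sign as $x+y$. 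The only way the two sides can be equal is therefore $x+y=0$, i.e.\ $y=-x$, completing the proof.

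The main obstacle I expect is simply tracking signs cleanly through the chain rule: both the negative correlation $-\rho_1$ inside $A$ and the appearance of $-t_1$ as the second argument of $\Phi_{-\rho_2}(t_2,-t_1)$ introduce sign flips, and the crucial cancellation in the subtraction step depends on arranging the arguments in just the right form so that $\Phi(-z)=1-\Phi(z)$ can be applied. Once the signs are correctly bookkept, the argument is essentially a one-line monotonicity comparison that relies structurally on $\rho_2<0$, which is exactly why the construction is set up with $c<-b^{2}$.
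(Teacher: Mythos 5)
Your argument is correct, and the second half takes a genuinely different and cleaner route than the paper. For $x<0<y$ you do exactly what the paper does: read off the sign of $A'$ at a critical point from the strict sign of the $p_2$ term. For $y=-x$, however, the paper argues by contradiction: it supposes $|x|\neq|y|$, splits into the cases $|x|>|y|$ and $|y|>|x|$, and uses the strict monotonicity of $u\mapsto|\Phi(u)-\Phi(-u)|$ to compare the magnitudes of the two $p_1$ sides, feeding that through (\ref{eq:1})--(\ref{eq:2}) into a contradiction. Your route instead \emph{combines} the two critical-point equations once and reads off $x+y=0$ from a single sign comparison: the $p_1$ side collapses to $2p_1\bigl[\Phi(-\alpha x)-\Phi(\alpha y)\bigr]$, which has the sign of $-(x+y)$, while the $p_2$ side has the sign of $x+y$ because the two $\Phi$-arguments differ by $(1-\rho_2)(x+y)/\sqrt{1-\rho_2^2}$, so the only consistent possibility is $x+y=0$. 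This avoids the case split and is, if anything, tidier than the paper's argument. Two small bookkeeping remarks: the operation that produces $2p_1[\Phi(-\alpha x)-\Phi(\alpha y)]$ is \emph{adding} the two (divided-by-$\varphi$) equations $\partial s/\partial t_1=0$ and $\partial s/\partial t_2=0$, not subtracting them --- subtracting gives $2p_1[\Phi(\alpha y)-\Phi(\alpha x)]$, which is uselessly positive once you know $x<0<y$; your stated collapsed form is the one you get from addition, so the slip is only in the verb. Second, you attribute the needed inequality $1-\rho_2>0$ to the design choice $c<-b^2$ (hence $\rho_2<0$), but in fact $1-\rho_2>0$ holds automatically for any valid correlation $\rho_2<1$; the condition $\rho_2<0$ is what makes the configuration \emph{positive} and is used later for uniqueness of the critical point, but this particular step only needs $\rho_2<1$.
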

\begin{proof}
Recall that by Proposition~\ref{prop:Phi_partial}
$\frac{\partial}{\partial t_1}\Phi_\rho(t_1, t_2) = \varphi(t_1) \cdot \Phi\left(\frac{t_2 - \rho t_1}{\sqrt{1 - \rho^2}}\right)$.

The partial derivatives of $s(t_1, t_2)$ are
\begin{align*}
\frac{\partial s}{\partial t_1} & = p_1 \left(\varphi(t_1) - 2\varphi(t_1)\cdot \Phi\left(\sqrt{\frac{1-\rho_1}{1+\rho_1}}t_1\right)\right) + p_2\left( - \varphi(t_1)\cdot\Phi\left(\frac{t_2 - \rho_2 t_1}{\sqrt{1-\rho_2^2}}\right)\right),\\
\frac{\partial s}{\partial t_2} & = p_1 \left(\varphi(t_2) - 2\varphi(t_2)\cdot \Phi\left(\sqrt{\frac{1-\rho_1}{1+\rho_1}}t_2\right)\right) + p_2\left(\varphi(t_2) - \varphi(t_2)\cdot\Phi\left(\frac{t_1 - \rho_2 t_2}{\sqrt{1-\rho_2^2}}\right)\right).
\end{align*}

In the above computation, we used Proposition~\ref{prop:Phi_partial} and the chain rule. Since $(x, y)$ is a critical point of $s$ and $\varphi$ is strictly positive, we have
\begin{align*}
 p_1 \left(1 - 2\Phi\left(\sqrt{\frac{1-\rho_1}{1+\rho_1}}x\right)\right) + p_2\left( - \Phi\left(\frac{y - \rho_2 x}{\sqrt{1-\rho_2^2}}\right)\right) & \EQ 0\;,\\
 p_1 \left(1 - 2\Phi\left(\sqrt{\frac{1-\rho_1}{1+\rho_1}}y\right)\right) + p_2\left(1 - \Phi\left(\frac{x - \rho_2 y}{\sqrt{1-\rho_2^2}}\right)\right) & \EQ 0\;.
\end{align*}

The first equation can be rewritten as
\begin{equation}\label{eq:1}
p_1 \left(1 - 2\Phi\left(\sqrt{\frac{1-\rho_1}{1+\rho_1}}x\right)\right) \EQ p_2\cdot  \Phi\left(\frac{y - \rho_2 x}{\sqrt{1-\rho_2^2}}\right)\;.
\end{equation}
Since $\Phi$ is a positive function, the right hand side of ($\ref{eq:1}$) is positive and therefore we have $1 - 2\Phi\left(\sqrt{\frac{1-\rho_1}{1+\rho_1}}x\right) > 0$, which implies that $x < 0$.

Since $1 - \Phi(t) = \Phi(-t)$, the second equation can be rewritten as
\begin{equation}\label{eq:2}
 p_1 \left(1 - 2\Phi\left(\sqrt{\frac{1-\rho_1}{1+\rho_1}}y\right)\right) \EQ - p_2\cdot\Phi\left(\frac{-x + \rho_2 y}{\sqrt{1-\rho_2^2}}\right)\;.
\end{equation}
By similar logic we can deduce that $y > 0$. We now show that we must have $|x| = |y|$. Assume for the sake of contradiction that $|x| \neq |y|$. We have two cases:
\begin{itemize}
    \item $|x| > |y|$. It follows that
    \begin{align*}
    p_1 \cdot \left|1 - 2\Phi\left(\sqrt{\frac{1-\rho_1}{1+\rho_1}}x\right)\right| 
    &\EQ p_1 \cdot \left|\Phi\left(-\sqrt{\frac{1-\rho_1}{1+\rho_1}}x\right) - \Phi\left(\sqrt{\frac{1-\rho_1}{1+\rho_1}}x\right)\right| \\
    &\GT p_1 \cdot \left|\Phi\left(-\sqrt{\frac{1-\rho_1}{1+\rho_1}}y\right) - \Phi\left(\sqrt{\frac{1-\rho_1}{1+\rho_1}}y\right)\right| \\
    &\EQ p_1 \cdot \left|1 - 2\Phi\left(\sqrt{\frac{1-\rho_1}{1+\rho_1}}y\right)\right| \;.  \\
    \end{align*}
    Note that here we again used  $1 - \Phi(t) = \Phi(-t)$, as well as the fact that $|\Phi(t) - \Phi(-t)|$ is an increasing function in $|t|$. On the other hand, by~(\ref{eq:1}) and (\ref{eq:2}) this implies that 
    \[
    \left|p_2\cdot  \Phi\left(\frac{y - \rho_2 x}{\sqrt{1-\rho_2^2}}\right)\right| \GT \left|- p_2\cdot\Phi\left(\frac{-x + \rho_2 y}{\sqrt{1-\rho_2^2}}\right)\right|\;.
    \]
    Since $\Phi$ is a positive and monotone function, this implies that 
    \[
    \frac{y - \rho_2 x}{\sqrt{1-\rho_2^2}} \GT \frac{-x + \rho_2 y}{\sqrt{1-\rho_2^2}}\;,
    \]
    Rearranging the terms, we obtain
    \[
    (1 - \rho_2)y \GT (1 - \rho_2) \cdot (-x)\;.
    \]
    But this would imply that $|y| > |x|$, which contradicts our assumption.
    \item $|y| > |x|$. This can be dealt with in a similar manner.
\end{itemize}
We conclude that we must have $y = -x = |x|$.
\end{proof}

\begin{lemma}\label{lem:unique_critical}
If $p_1 > p_2$, then $s(t_1, t_2)$ has a unique critical point. 
\end{lemma}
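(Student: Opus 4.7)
The plan is to reduce the uniqueness of critical points of $s$ to the uniqueness of a zero of a one-variable function. By the previous lemma, any finite critical point $(t_1,t_2)$ of $s$ satisfies $t_2 = -t_1 > 0$. Setting $u = -t_1 > 0$ and substituting into equation~(\ref{eq:1}) (using $\Phi(-t)=1-\Phi(t)$ and the identity $(1+\rho_2)/\sqrt{1-\rho_2^2}=\sqrt{(1+\rho_2)/(1-\rho_2)}$), the critical-point equation reduces to $g(u)=0$, where
\[
g(u) \EQ p_1\bigl(2\Phi(au)-1\bigr)\;-\;p_2\,\Phi(bu),
\qquad a \EQ \sqrt{\tfrac{1-\rho_1}{1+\rho_1}},
\qquad b \EQ \sqrt{\tfrac{1+\rho_2}{1-\rho_2}},
\]
with $a,b>0$ since $\rho_1,\rho_2\in(-1,0)$ by Proposition~\ref{prop:positive}. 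A parallel computation shows that equation~(\ref{eq:2}) reduces to the \emph{same} equation under the symmetry $t_2=-t_1$, so it suffices to show $g$ has a unique zero on $(0,\infty)$.

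For existence, I would directly evaluate the endpoints: $g(0)=-p_2/2 <0$ and $\lim_{u\to\infty}g(u)=p_1-p_2 >0$, where the latter uses the hypothesis $p_1>p_2$. The intermediate value theorem then produces at least one zero of $g$ in $(0,\infty)$.

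For uniqueness, the key step is to show that $g'$ has at most one zero on $(0,\infty)$. Computing $g'(u)=2p_1 a\,\varphi(au)-p_2 b\,\varphi(bu)$ and using $\varphi(t)=\frac{1}{\sqrt{2\pi}}e^{-t^2/2}$, the equation $g'(u)=0$ rearranges to $\exp\!\bigl(\tfrac{b^2-a^2}{2}u^2\bigr)=\tfrac{2p_1 a}{p_2 b}$. When $a\neq b$, the left side is strictly monotone in $u>0$, so there is at most one positive solution; when $a=b$, the equation degenerates to $2p_1=p_2$, which contradicts $p_1>p_2>0$, so in that case $g'>0$ throughout and $g$ is strictly increasing. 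In either case $g$ has at most one critical point, so by Rolle's theorem $g$ has at most two zeros. If $g$ had two zeros $u_1<u_2$, then the sign pattern on $(0,u_1),(u_1,u_2),(u_2,\infty)$ would be $-,+,-$ (using $g(0)<0$), contradicting $\lim_{u\to\infty}g(u)=p_1-p_2>0$. Hence $g$ has exactly one zero, which establishes the lemma.

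The main obstacle is not conceptual but notational: carefully tracking the substitution $t_2=-t_1$ in equations~(\ref{eq:1})--(\ref{eq:2}), verifying they collapse to the same one-variable equation, and handling the degenerate boundary case $a=b$ when analyzing the zero set of $g'$. Once these are in place, the rest is a clean monotonicity--plus--endpoint-sign argument that crucially uses the hypothesis $p_1>p_2$ to fix the sign of $g$ at infinity.
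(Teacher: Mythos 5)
Your argument is correct and essentially reproduces the paper's proof: substitute the symmetric ansatz $t_2=-t_1$ from the preceding lemma to reduce to a one-variable function $g$, check the endpoint signs $g(0)<0$ and $\lim_{u\to\infty}g(u)=p_1-p_2>0$, and show $g'$ vanishes at most once (the paper then reads off the increasing-then-decreasing profile of $g$ directly, where you invoke Rolle plus a sign-pattern count, but the content is the same). Two small remarks: the degenerate case $a=b$ you set aside never actually arises, since $\rho_1<0$ forces $a=\sqrt{(1-\rho_1)/(1+\rho_1)}>1$ while $\rho_2<0$ forces $b=\sqrt{(1+\rho_2)/(1-\rho_2)}<1$; and your final $-,+,-$ sign-pattern step tacitly assumes both zeros of $g$ are simple, but a tangent zero would itself be a root of $g'$ and, together with the Rolle root between the two zeros of $g$, would give $g'$ two roots, which you have already excluded, so the patch is immediate.
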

\begin{proof}
Assume $(x, y)$ is a critical point. In the previous lemma, we established that $x = -y < 0$, so we can now plug $y = -x$ into (\ref{eq:1}) and get
\[
p_1 \left(1 - 2\Phi\left(\sqrt{\frac{1-\rho_1}{1+\rho_1}}\cdot x\right)\right) \EQ p_2\cdot  \Phi\left(\frac{-1 - \rho_2 }{\sqrt{1-\rho_2^2}}\cdot x\right)\EQ p_2\cdot  \Phi\left(-\sqrt{\frac{1 + \rho_2 }{1 - \rho_2}}\cdot x\right)\;.
\]
We need to show the equation above has only one solution when $p_1 > p_2$. To this end, define 
\[
g(t) \EQ p_1 \left(1 - 2\Phi\left(\sqrt{\frac{1-\rho_1}{1+\rho_1}}\cdot t\right)\right) - p_2\cdot  \Phi\left(-\sqrt{\frac{1 + \rho_2 }{1 - \rho_2}}\cdot t\right)\;, \qquad t < 0\;.
\]

We have $g(0) = -p_2 < 0$ and $\lim_{t \to -\infty} g(t) = p_1 - p_2 > 0$, so $g(t) = 0$ has at least one solution in $(-\infty, 0)$ by Intermediate Value Theorem. To show that the solution is unique, we compute the derivative of $g$:
\begin{align*}
g'(t) &  \EQ p_1 \left( - 2 \sqrt{\frac{1-\rho_1}{1+\rho_1}} \cdot \varphi\left(\sqrt{\frac{1-\rho_1}{1+\rho_1}}\cdot t\right)\right) + p_2\cdot \sqrt{\frac{1 + \rho_2 }{1 - \rho_2}}\cdot\varphi\left(-\sqrt{\frac{1 + \rho_2 }{1 - \rho_2}}\cdot t\right)\;. \\
\end{align*}
By setting $g'(t) = 0$, we obtain 
\[
 2p_1 \sqrt{\frac{1-\rho_1}{1+\rho_1}} \cdot \varphi\left(\sqrt{\frac{1-\rho_1}{1+\rho_1}}\cdot t\right) = p_2\cdot \sqrt{\frac{1 + \rho_2 }{1 - \rho_2}}\cdot\varphi\left(-\sqrt{\frac{1 + \rho_2 }{1 - \rho_2}}\cdot t\right)
\]
Plugging in the definition of $\varphi$, we get
\[
 2p_1 \sqrt{\frac{1-\rho_1}{1+\rho_1}} \cdot \frac{1}{\sqrt{2\pi}}\exp\left(-\frac{1-\rho_1}{1+\rho_1}\cdot \frac{t^2}{2}\right) = p_2\cdot \sqrt{\frac{1 + \rho_2 }{1 - \rho_2}}\cdot\frac{1}{\sqrt{2\pi}}\exp\left(-\frac{1 + \rho_2 }{1 - \rho_2}\cdot \frac{t^2}{2}\right),
\]
which is equivalent to
\[
 2p_1 \sqrt{\frac{1-\rho_1}{1+\rho_1}} \cdot \exp\left(-\left(\frac{1-\rho_1}{1+\rho_1} - \frac{1 + \rho_2 }{1 - \rho_2}\right)\cdot \frac{t^2}{2}\right) \EQ p_2\cdot \sqrt{\frac{1 + \rho_2 }{1 - \rho_2}}\;.
\]
Since $\rho_1, \rho_2 < 0$ and $\exp$ is monotone, this equation has exactly one solution $t^* \in (-\infty, 0)$. Furthermore, $g'(t) > 0$ for $t \in (-\infty, t^*)$ and $g'(t) < 0$ for $t \in (t^*, 0)$. It follows that $g$ has no root in $(-\infty, t^*)$ and has a unique root in $(t^*, 0)$.
\end{proof}

We now deal with the boundary cases. Since our distribution is symmetric with respect to $\flip$, it is sufficient to look at the case where $t_1 = \pm \infty$.

\begin{lemma}\label{lemma:boundary}
We have $s(+\infty, +\infty) = s(-\infty, -\infty) = s(+\infty, -\infty) = 0$, $s(-\infty, +\infty) = p_2$. For $t_2 \in \mathbb{R}$, we have $s(-\infty, t_2) > s(+\infty, t_2)$. Furthermore, if $p_1 > p_2$, then $s(-\infty, t_2)$ is maximized when $t_2 = t^* = \sqrt{\frac{1 + \rho_1}{1 - \rho_1}}\cdot\Phi^{-1}(\frac{p_1 + p_2}{2p_1}).$
\end{lemma}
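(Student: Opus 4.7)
The plan is to treat each of the four assertions by direct computation from the defining expression
\[
s(t_1,t_2) \EQ p_1\,\Phi_{-\rho_1}(t_1,-t_1) + p_2\,\Phi_{-\rho_2}(t_2,-t_1) + p_1\,\Phi_{-\rho_1}(t_2,-t_2),
\]
relying only on the standard limiting behaviour of the bivariate Gaussian CDF: $\Phi_\rho(-\infty,\cdot) = \Phi_\rho(\cdot,-\infty) = 0$, $\Phi_\rho(+\infty,t) = \Phi_\rho(t,+\infty) = \Phi(t)$, and $\Phi_\rho(+\infty,+\infty)=1$. For the four corner values I would plug in $(t_1,t_2) \in \{\pm\infty\}^2$, taking care that the second coordinate inside each $\Phi$ carries the opposite sign of the corresponding $t$. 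In three of the four corners every summand is forced to vanish; at $(-\infty,+\infty)$ only the middle term survives and contributes exactly $p_2$.

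For the strict inequality $s(-\infty,t_2) > s(+\infty,t_2)$ with $t_2 \in \mathbb{R}$, I would simply subtract the two expressions. The first and third summands do not depend on $t_1$ and cancel, leaving $p_2\bigl[\Phi_{-\rho_2}(t_2,+\infty) - \Phi_{-\rho_2}(t_2,-\infty)\bigr] = p_2\,\Phi(t_2) > 0$.

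For the maximization, set $h(t_2) := s(-\infty,t_2) = p_2\,\Phi(t_2) + p_1\,\Phi_{-\rho_1}(t_2,-t_2)$. I would differentiate the second term via the chain rule, using both partial-derivative formulas from Proposition~\ref{prop:Phi_partial} together with $\Phi(-x) = 1-\Phi(x)$; after cancellation the two endpoint contributions should collapse into $\varphi(t_2)\bigl[1 - 2\Phi(t_2\sqrt{(1-\rho_1)/(1+\rho_1)})\bigr]$, yielding
\[
h'(t_2) \EQ \varphi(t_2)\left[p_1+p_2 \;-\; 2p_1\,\Phi\!\left(t_2\sqrt{(1-\rho_1)/(1+\rho_1)}\right)\right].
\]
Setting the bracket to zero and applying $\Phi^{-1}$ produces $t_2 = t^*$ as stated; the hypothesis $p_1>p_2$ is exactly what places $(p_1+p_2)/(2p_1)$ inside $(0,1)$, so the inversion is legitimate. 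Strict monotonicity of $\Phi$ then makes $h'$ positive for $t_2<t^*$ and negative for $t_2>t^*$, identifying $t^*$ as the unique global maximum.

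I do not expect any genuine obstacle: the statement is a collection of boundary evaluations plus one scalar critical-point calculation. The most delicate step is the sign bookkeeping inside $\Phi_{-\rho_1}(t_2,-t_2)$ when differentiating, where the $-1$ from differentiating $-t_2$ must be combined with $\Phi(-x)=1-\Phi(x)$ in the correct order to produce the clean form of $h'$; once that simplification is in hand, the rest is immediate.
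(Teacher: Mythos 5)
Your proposal is correct and follows essentially the same route as the paper: direct evaluation of the bivariate-Gaussian limits at the four corners, cancellation of the $t_1$-free parts for the strict inequality, and a chain-rule differentiation of $s(-\infty,t_2)$ that collapses to $\varphi(t_2)\bigl[p_1+p_2-2p_1\Phi\bigl(t_2\sqrt{(1-\rho_1)/(1+\rho_1)}\bigr)\bigr]$. One small imprecision: the first summand $p_1\Phi_{-\rho_1}(t_1,-t_1)$ does depend on $t_1$ (it is the third summand that does not); it merely happens to equal $0$ at both $t_1=\pm\infty$, so the cancellation in the difference $s(-\infty,t_2)-s(+\infty,t_2)$ still goes through and your conclusion $p_2\Phi(t_2)>0$ is correct.
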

\begin{proof}
Setting a threshold to $+\infty$ corresponds to always setting a variable to false, and $-\infty$ corresponds to always true. When $(t_1, t_2) \in \{(+\infty, +\infty),(-\infty, -\infty),(+\infty, -\infty)\}$, none of the configurations are satisfied, giving a soundness of 0. When $(t_1, t_2) = (-\infty, +\infty)$, only the second configuration is satisfied and this gives a soundness of $p_2$.
For aim, we have 
\[s(-\infty, t_2) \EQ p_2 \cdot \Phi(t_2) + p_1\cdot\Phi_{-\rho_1}(t_2, -t_2)>p_1\cdot\Phi_{-\rho_1}(t_2, -t_2)\EQ s(+\infty, t_2)\;,\] and
\[
\frac{\partial s(-\infty, t_2)}{\partial t_2} \EQ \varphi(t_2) \left(p_2 + p_1\left(1  - 2\Phi\left(\sqrt{\frac{1-\rho_1}{1+\rho_1}}t_2\right)\right)\right)\;.
\]
When $p_1 > p_2$, we have $\frac{\partial s(-\infty, t_2)}{\partial t_2} > 0$ on $(-\infty, t^*)$ and $\frac{\partial s(-\infty, t_2)}{\partial t_2} < 0$ on $(t^*, \infty)$.
\end{proof}

With Lemma~\ref{lem:unique_critical} and Lemma~\ref{lemma:boundary}, it becomes very easy to determine the maximum of $s$ by simply computing the unique critical point and comparing it with the boundary cases. It turns out that when $b=0.1757079776$, $c= -0.6876930116$, $p_1= 0.3770580295$, the unique critical point of $s(t_1, t_2)$ is at $(-t_0, t_0)$ where $t_0\simeq 0.1887837358$, which is also a global maximum whose value is about $0.8746024732$.  A plot of $s(t_1, t_2)/\comp(\Tilde{\Theta})$ with these parameters can be found in Figure~\ref{fig:upper_bound}. It follows that with these parameters, any $\THRESH^-$ rounding scheme achieves a ratio of at most $0.87461$. This can then be converted into Unique Games hardness via standard and well-known techniques, which we include in the appendix for completeness.

\begin{figure}
\begin{center}
\includegraphics[width=3in]{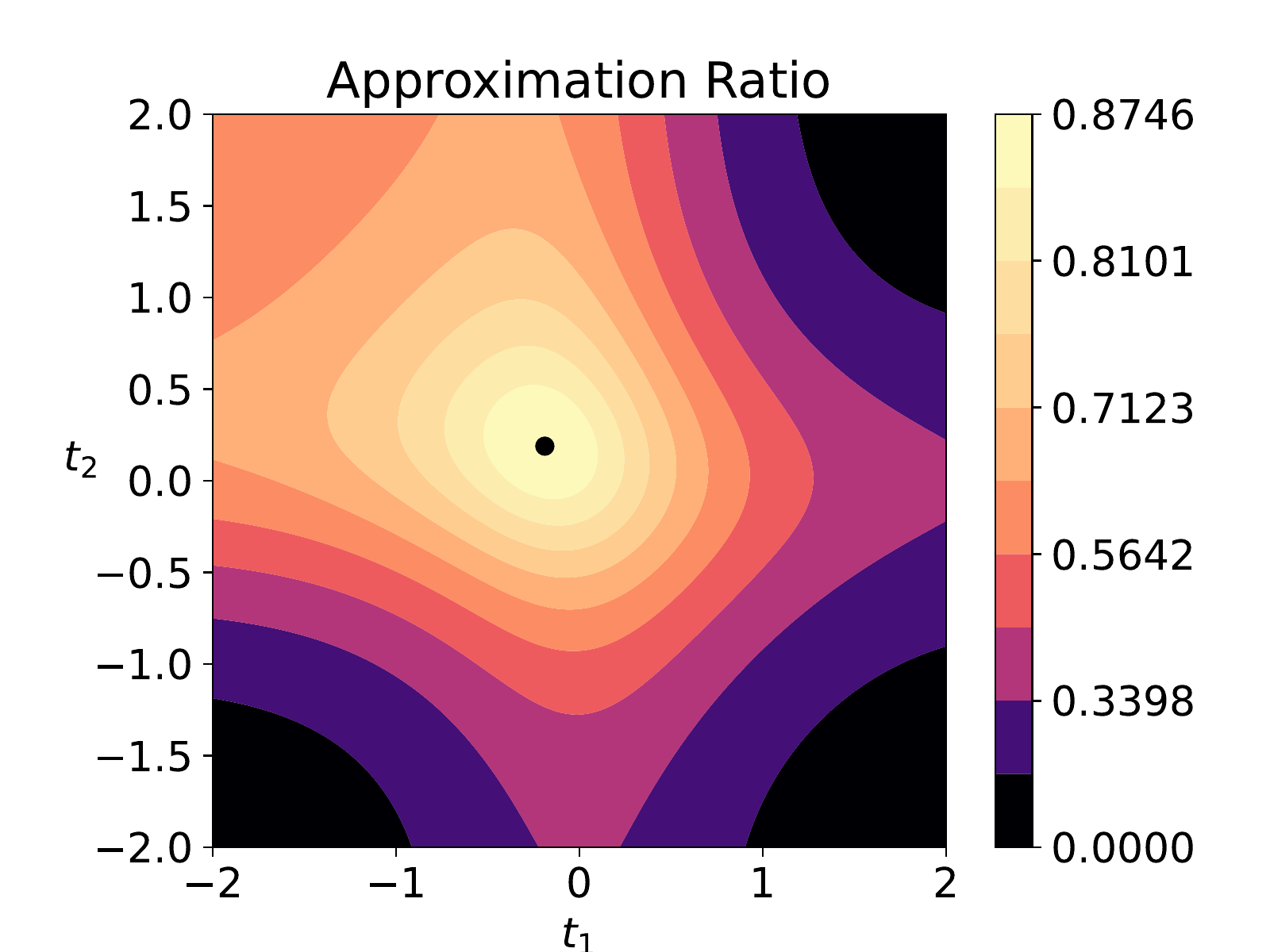}
\includegraphics[width=3in]{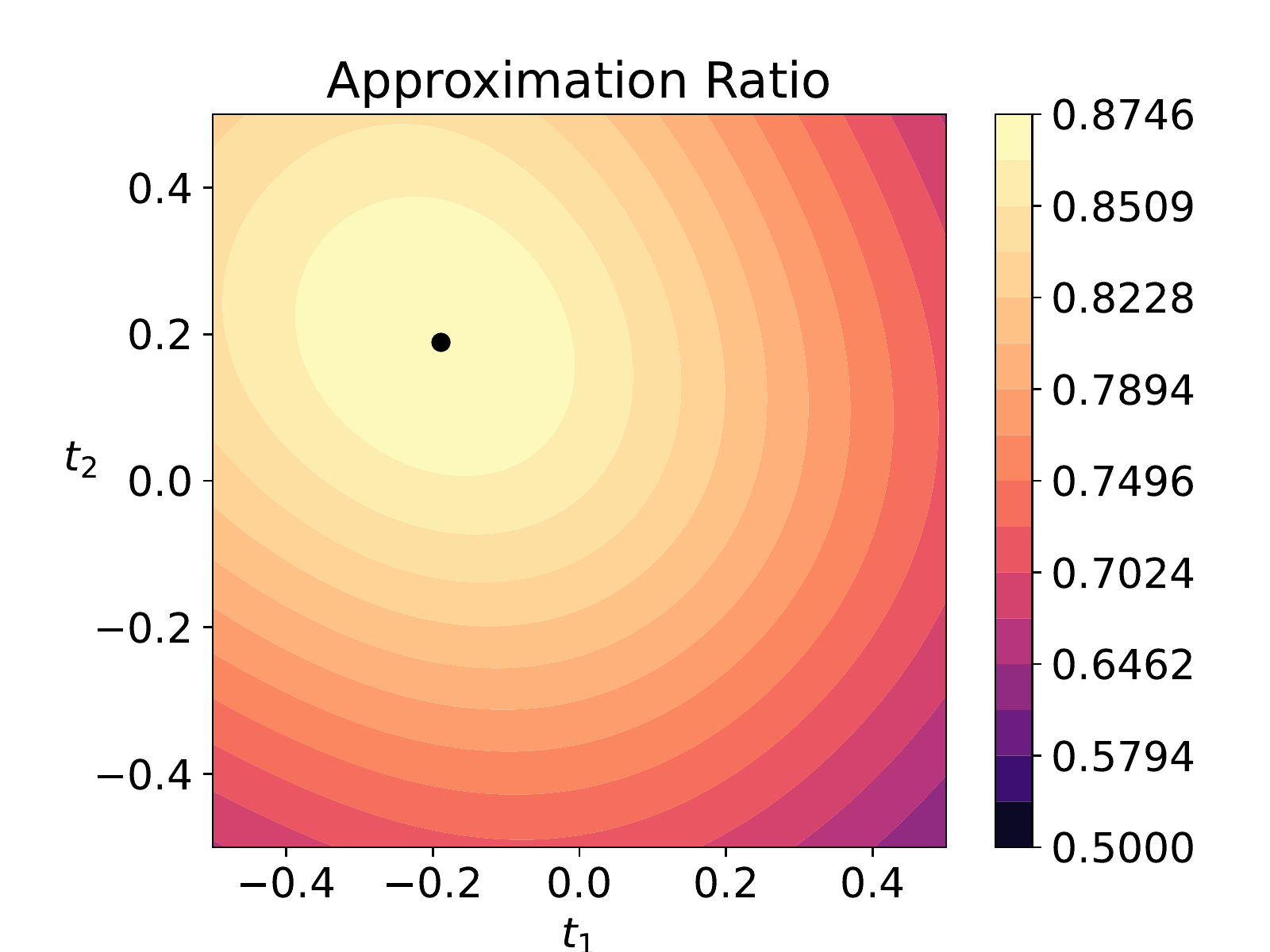}
\end{center}
\caption{Contour plots of $s(t_1, t_2)/\comp(\Tilde{\Theta})$ with optimal parameters. The black dot represents the global maximum $(-t_0, t_0)$ where $t_0 \simeq 0.1887837358$. All plots in this paper are made with Matplotlib~\cite{Hunter:2007}. %
}\label{fig:upper_bound}
\end{figure}

\subsection{Intuition for the upper bound}

While we found this integrality gap instance with a computer search, we now give some intuition for why this integrality gap instance works well. Previously, the best algorithm for \MAXDICUT\ was the LLZ algorithm \cite{LLZ02} which works equally well for \MAXAND.

If we restrict our attention to points $(b_1,b_2,-1 + |b_1 + b_2|)$ where the triangle inequality is tight (so the completeness is as large as possible given $b_1$ and $b_2$), using experimental simulations, the performance of LLZ in terms of $b_1$ and $b_2$ is as follows:

\begin{figure}[ht]
\begin{center}
\includegraphics[width=3in]{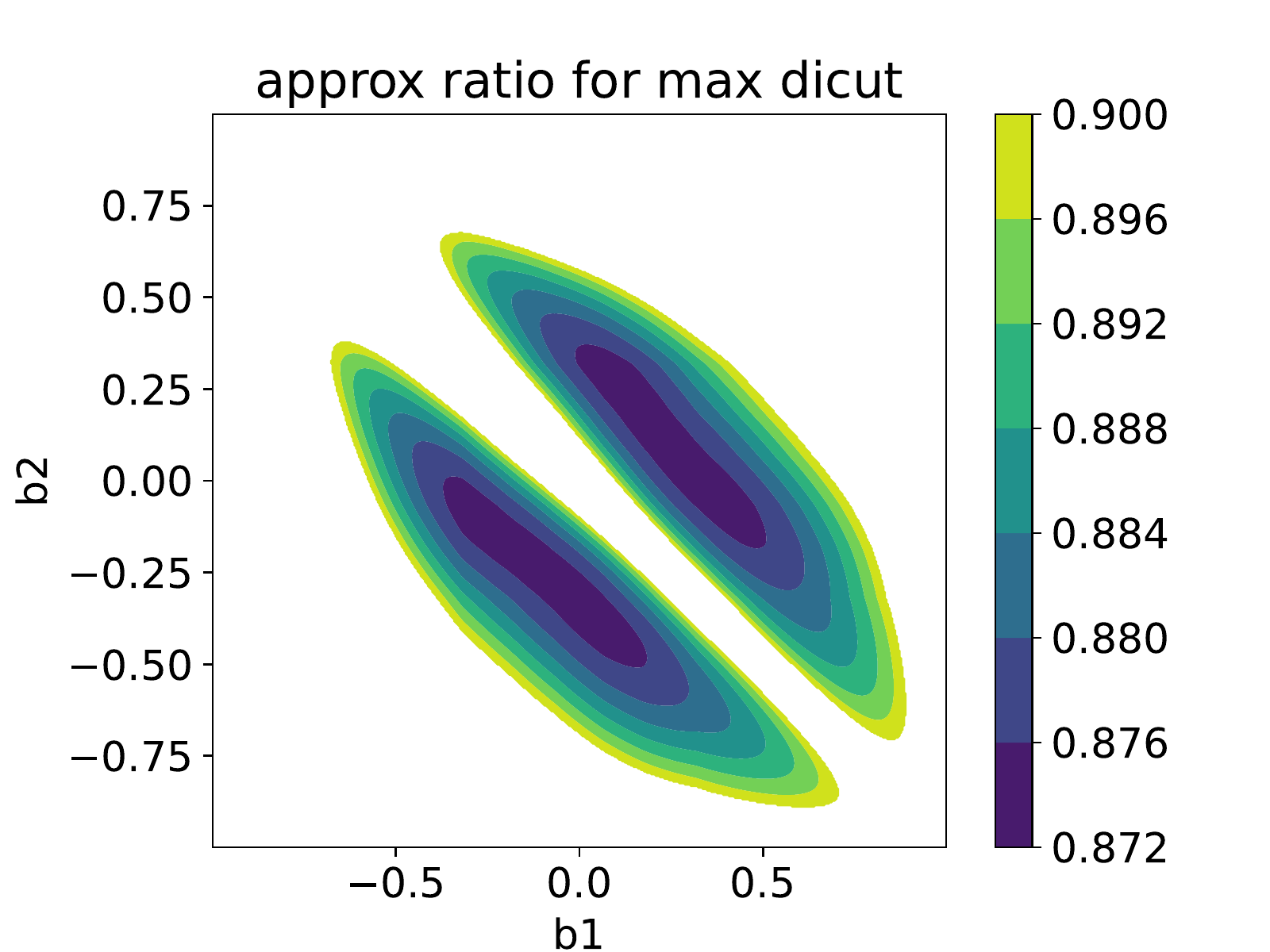}
\includegraphics[width=3in]{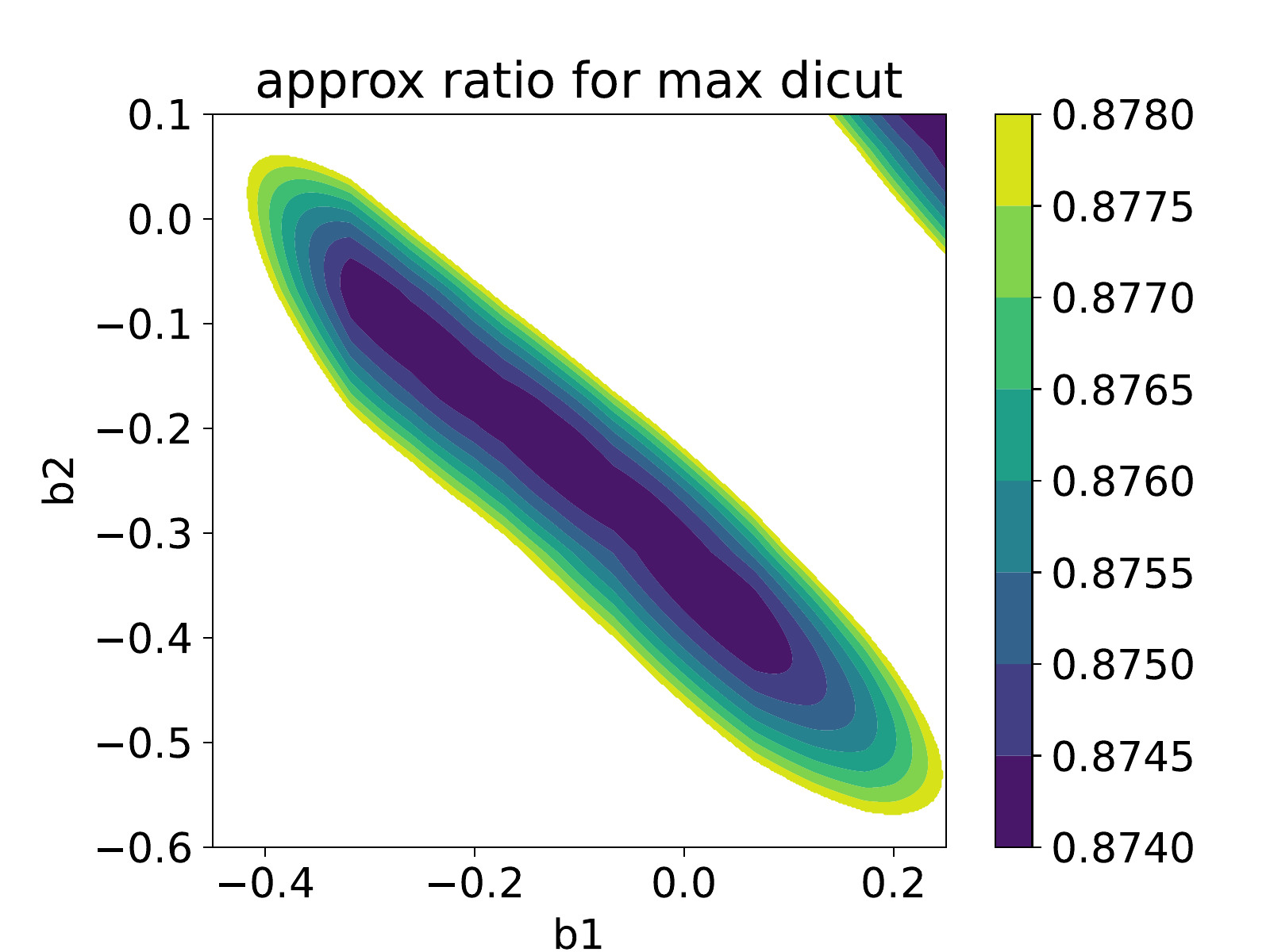}
\end{center}
\caption{Contour plots of the performance of the LLZ function~\cite{LLZ02} for \MAXAND\ and \MAXDICUT. %
}\label{fig:LLZ}
\end{figure}

We observe that there is a strip where $b_1 + b_2 \approx .35$ and a strip where $b_1 + b_2 \approx -.35$ where the LLZ algorithm does poorly. In order to reduce the degrees of freedom for rounding schemes for our instance, it makes sense to choose $b_2 = b_1 = \pm{b}$. With this choice, there are only two degrees of freedom, the threshold for $b$ and the threshold for $-b$. $b \simeq 0.1757079776$ puts us right in the middle of the hard strips for LLZ.

Once we have these two points, we can also add points of the form $(b,-b,c)$ and $(-b,b,c')$ without additional degrees of freedom. While we originally thought that points where the triangle inequality is tight may be optimal, this turned out to not be the case. Instead, we found experimentally that adding the point $(b,-b,c)$ with $c \simeq -0.6876930116$ worked best. The completeness for $(-b,b,c')$ is too low, so adding this kind of point does not help.

\subsection{Possibly improved upper bounds}

We believe that slightly improved upper bounds for \MAXDICUT\ can be obtained using more than one pair of biases. In Appendix~\ref{A-Upper} we give more complicated distributions that use up to 4 pairs of biases that seem to indicate that $\aDC\le 0.8745794663$ (not verified rigorously). It would probably be very hard to prove this inequality analytically. It is probably possible to prove that, say, $\aDC\le 0.8745795$, using interval arithmetic, but we have not done so yet.

\section{A new approximation algorithm for \texorpdfstring{\MAXDICUT}{MAX DI-CUT}}\label{sec:lower}

In this section, we present the techniques used for proving Theorem~\ref{theorem:lower}. We first briefly give some intuition for why a rounding scheme for \MAXDICUT\ better than those possible for \MAXAND\ should exist. Then, after describing the rounding scheme, we explain how this rounding scheme was discovered experimentally. Finally, we discuss how we rigorously verify the approximation guarantees of this rounding scheme using interval arithmetic.

\subsection{Intuition for the separation between \texorpdfstring{\MAXAND}{MAX 2-AND} and \texorpdfstring{\MAXDICUT}{MAX DI-CUT}}
We now try to give some intuition for why there is a gap between \MAXAND\ and \MAXDICUT. We first observe that Austrin's hard distributions of configurations for \MAXAND\ (see Section 6 of \cite{Austrin10}) can be easily beaten for \MAXDICUT. For simplicity, we consider Austrin's simpler two-configuration distribution which is as follows
\begin{enumerate}
    \item $(0,-b,b-1)$ with probability $0.64612$
    \item $(0,b,b-1)$ with probability $0.35388$ 
\end{enumerate}
where $b = 0.33633$. This gives an inapproximability of 0.87451 for \MAXAND.

For \MAXAND, since variables can be negated, we can assume without loss of generality that when $b = 0$, for each rounding scheme in our distribution the variable has an equal probability of being rounded to true or false.

For \MAXDICUT, we only have the symmetry of $\theta \mapsto \flip(\theta)$. When we add this symmetry to the integrality gap instance, we obtain:
\begin{enumerate}
    \item $(0,-b,b-1)$ with probability $0.32306$
    \item $(b,0,b-1)$ with probability $0.32306$
    \item $(0,b,b-1)$ with probability $0.17694$ 
    \item $(-b,0,b-1)$ with probability $0.17694$ 
\end{enumerate}
where $b = 0.33633$.

The following distribution of rounding functions trivially satisfies $\frac{1}{2}$ of the configurations of this \MAXDICUT\  instance.
\begin{enumerate}
    \item With probability $\frac{1}{2}$, round all variables with bias $0$ to $1$ and round all variables with bias~$-b$ or $b$ to $-1$.
    \item With probability $\frac{1}{2}$, round all variables with bias $0$ to $-1$ and round all variables with bias~$-b$ or $b$ to $1$.
\end{enumerate}
Since the completeness of these configurations are all at most $\frac{1}{2}$, we obtain a ratio which is at least~$1$.

While this is an extreme example, this shows that making the variables with zero or low bias more likely to be rounded to $1$ or more likely to be rounded to $-1$ can help round other variables more effectively as the behavior of these variables is more predictable.

This means that configurations where one or more variables has bias $0$ are easier for \MAXDICUT. Instead, the configurations in our simple distribution (i.e., $(b,b,-1 + 2b)$ where $b\simeq0.1757079776$) are hard configurations and all rounding schemes in the distribution have essentially the same behavior at these configurations. %

\subsection{The rounding scheme}\label{sec:scheme}

We now describe a $\THRESH$ scheme, that separates \MAXDICUT\ from \MAXAND. As mentioned, a $\THRESH$ scheme is a distribution over $\THRESH^-$ schemes. We use discrete distributions over a relatively small number of $\THRESH^-$ schemes. For computational convenience, we choose the $\THRESH^{-}$ functions to be \emph{piecewise linear} functions. More precisely, we pick a finite set $S \subset [-1, 1]$ of \emph{control points} with $-1, 1 \in S$. For each of these control points $s \in R$, we assign a real threshold $f(s)$. Then, for every $x \in (-1, 1) \setminus S$, we identify $x_{-} = \max (S \cap [-1, x))$ and $x_{+} = \min (S \cap (x, 1])$, and set
\[
    f(x) \EQ f(x_{-}) + \frac{x - x_{-}}{x_{+} - x_{-}} (f(x_{+}) - f(x_{-})) \;.
\]
We use the same set of control points for every function in our $\THRESH$ scheme.
 
For our application to \MAXDICUT, we picked a set $S$ of $17$ control points as follows:
\[ 0 \quad \pm 0.1 \quad \pm 0.164720 \quad \pm 0.179515 \quad \pm 0.25 \quad \pm 0.3 \quad \pm 0.45 \quad \pm 0.7 \quad \pm 1 \]
The choice of most control points is fairly arbitrary. It seemed important, however, to choose the four control points $\pm 0.164720$ and $\pm 0.179515$ as they seem to be situated in regions in which very fine control over the values of the rounding functions are needed. Further small improvement are probably possible by slightly moving some of the control points or by adding new control points.

Then, using the algorithm presented in Section~\ref{sec:discovery}, we produced a ``raw'' $\THRESH$ rounding scheme which is a probability distribution over $39$ piecewise-linear rounding functions. %
After a careful ad-hoc analysis, we were able to simplify the distribution to a ``clean'' $\THRESH$ scheme with only $7$ piecewise rounding functions, which we summarize in Table~\ref{tbl:clean-dicut} and Figure~\ref{fig:clean-dicut}. 

It is interesting to note that the function $f_1$, which is used in about $99.7\%$ of the time, is very close to the single function used by~\cite{LLZ02}. We do not yet a satisfactory explanation of the shape of the other functions. Some of the values of the functions, especially at control points $\pm 0.45$, $\pm 0.7$ and $\pm 1$ can be changed slightly without affecting the performance ratio obtained. We also note that the last two functions do not seem to contribute much. We have a scheme with only 5 functions with only a very slightly smaller performance ratio. 

\begin{table}
\begin{center}
\scriptsize
\begin{tabular}{r|r|rr|rr|rr}
 &  \hfill\nobreak $f_1$  \hfill\nobreak  &  \hfill\nobreak $f_2$  \hfill\nobreak  &  \hfill\nobreak $f_3$  \hfill\nobreak  &  \hfill\nobreak $f_4$  \hfill\nobreak  &  \hfill\nobreak $f_5$  \hfill\nobreak  &  \hfill\nobreak $f_6$  \hfill\nobreak  &  \hfill\nobreak $f_7$  \hfill\nobreak  \\
\hline
\hfill\nobreak prob \hfill\nobreak  & $ 0.996902$ & $ 0.000956$ & $ 0.000956$ & $ 0.000393$ & $ 0.000393$ & $ 0.000200$ & $ 0.000200$\\
\hline
$-1.000000$ & $-1.601709$ & $-2.000000$ & $-2.000000$ & $-0.034381$ & $-0.430994$ & $-2.000000$ & $ 2.000000$ \\
$-0.700000$ & $-0.853605$ & $-2.000000$ & $-2.000000$ & $-0.034381$ & $-0.430994$ & $-2.000000$ & $ 2.000000$ \\
$-0.450000$ & $-0.517014$ & $-2.000000$ & $-0.629564$ & $-0.440988$ & $-0.896878$ & $-2.000000$ & $ 2.000000$ \\
$-0.300000$ & $-0.333109$ & $-1.520523$ & $ 1.711824$ & $-1.406591$ & $ 1.643936$ & $-2.070000$ & $ 1.970000$ \\
$-0.250000$ & $-0.274589$ & $-0.687582$ & $ 2.019266$ & $-0.622399$ & $-0.127984$ & $-1.629055$ & $ 2.070000$ \\
$-0.179515$ & $-0.192926$ & $-0.195474$ & $-0.229007$ & $-0.268471$ & $-0.339566$ & $-0.544957$ & $-0.103307$ \\
$-0.164720$ & $-0.175942$ & $-0.381789$ & $-0.649998$ & $-0.116530$ & $-0.073069$ & $-0.361234$ & $-0.575047$ \\
$-0.100000$ & $-0.105428$ & $-0.026636$ & $-1.175439$ & $ 0.066139$ & $-0.123693$ & $ 2.070000$ & $-1.351740$ \\
\hline
$ 0.000000$ & $ 0.000000$ & $ 2.046025$ & $-2.046025$ & $ 1.728858$ & $-1.728858$ & $ 2.050000$ & $-2.050000$ \\
\hline
$ 0.100000$ & $ 0.105428$ & $ 1.175439$ & $ 0.026636$ & $ 0.123693$ & $-0.066139$ & $ 1.351740$ & $-2.070000$ \\
$ 0.164720$ & $ 0.175942$ & $ 0.649998$ & $ 0.381789$ & $ 0.073069$ & $ 0.116530$ & $ 0.575047$ & $ 0.361234$ \\
$ 0.179515$ & $ 0.192926$ & $ 0.229007$ & $ 0.195474$ & $ 0.339566$ & $ 0.268471$ & $ 0.103307$ & $ 0.544957$ \\
$ 0.250000$ & $ 0.274589$ & $-2.019266$ & $ 0.687582$ & $ 0.127984$ & $ 0.622399$ & $-2.070000$ & $ 1.629055$ \\
$ 0.300000$ & $ 0.333109$ & $-1.711824$ & $ 1.520523$ & $-1.643936$ & $ 1.406591$ & $-1.970000$ & $ 2.070000$ \\
$ 0.450000$ & $ 0.517014$ & $ 0.629564$ & $ 2.000000$ & $ 0.896878$ & $ 0.440988$ & $-2.000000$ & $ 2.000000$ \\
$ 0.700000$ & $ 0.853605$ & $ 2.000000$ & $ 2.000000$ & $ 0.430994$ & $ 0.034381$ & $-2.000000$ & $ 2.000000$ \\
$ 1.000000$ & $ 1.601709$ & $ 2.000000$ & $ 2.000000$ & $ 0.430994$ & $ 0.034381$ & $-2.000000$ & $ 2.000000$ \\
\end{tabular}
\end{center}

\caption{A $\THRESH$ rounding scheme that gives a rigorously verified approximation ratio of at least $0.874473$ for \MAXDICUT. (The actual ratio is probably about $0.874502$.) The scheme uses 7 piecewise-linear rounding functions $f_1,f_2,\ldots,f_7$ defined on 17 control points. The function $f_1$ is odd and is very close to the single function used by \cite{LLZ02}. The other six functions come in pairs. The two functions in each pair are flips of each other.}\label{tbl:clean-dicut}
\end{table}

\begin{figure}[t]

\begin{center}
\begin{tabular}{cc}
\includegraphics[width=3in]{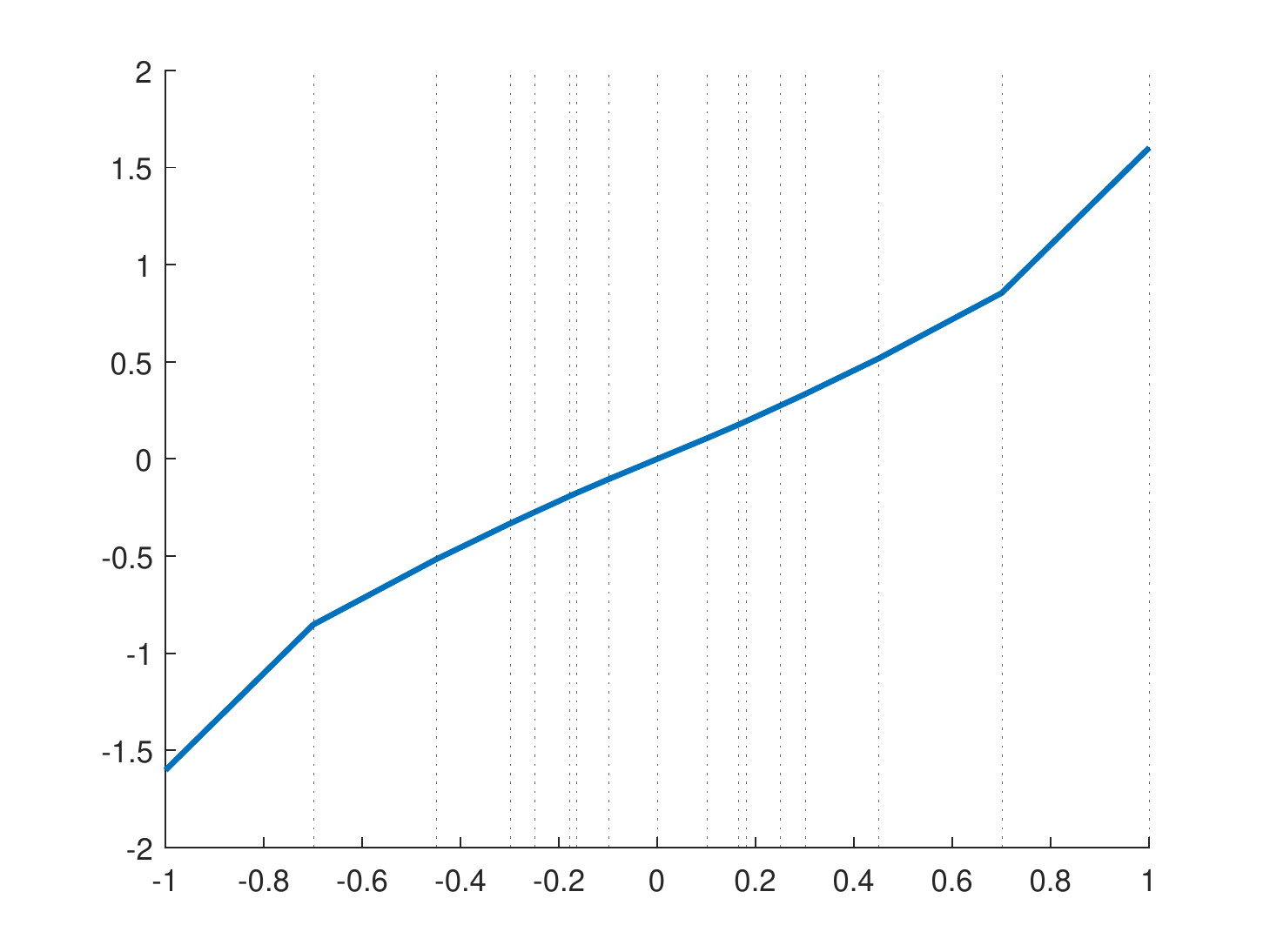} &
\includegraphics[width=3in]{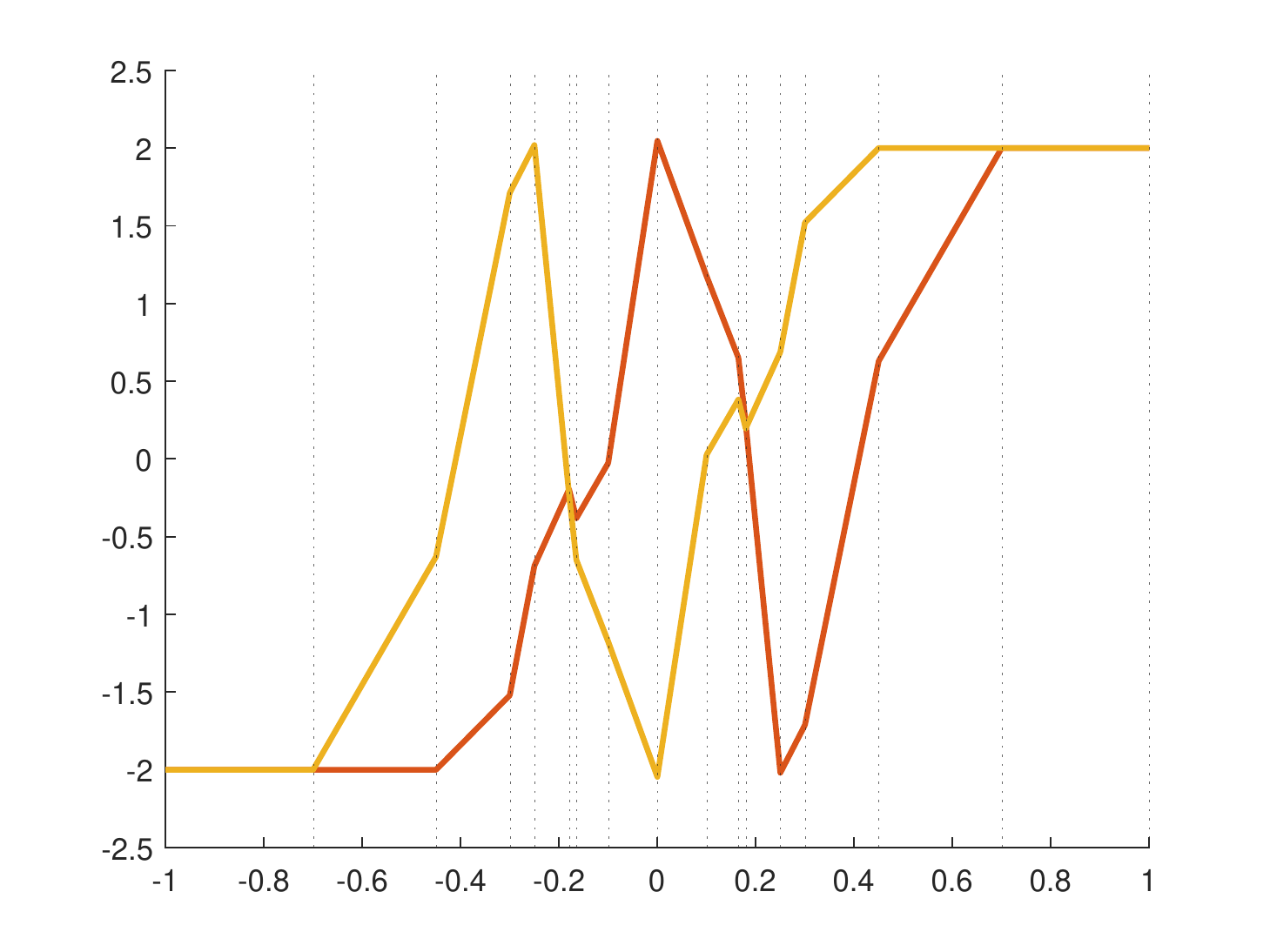} \\[-7pt]
$f_1$ with probabilities $0.996902$  & $f_2$ and $f_3$ each with probability $0.000956$ \\[5pt]
\includegraphics[width=3in]{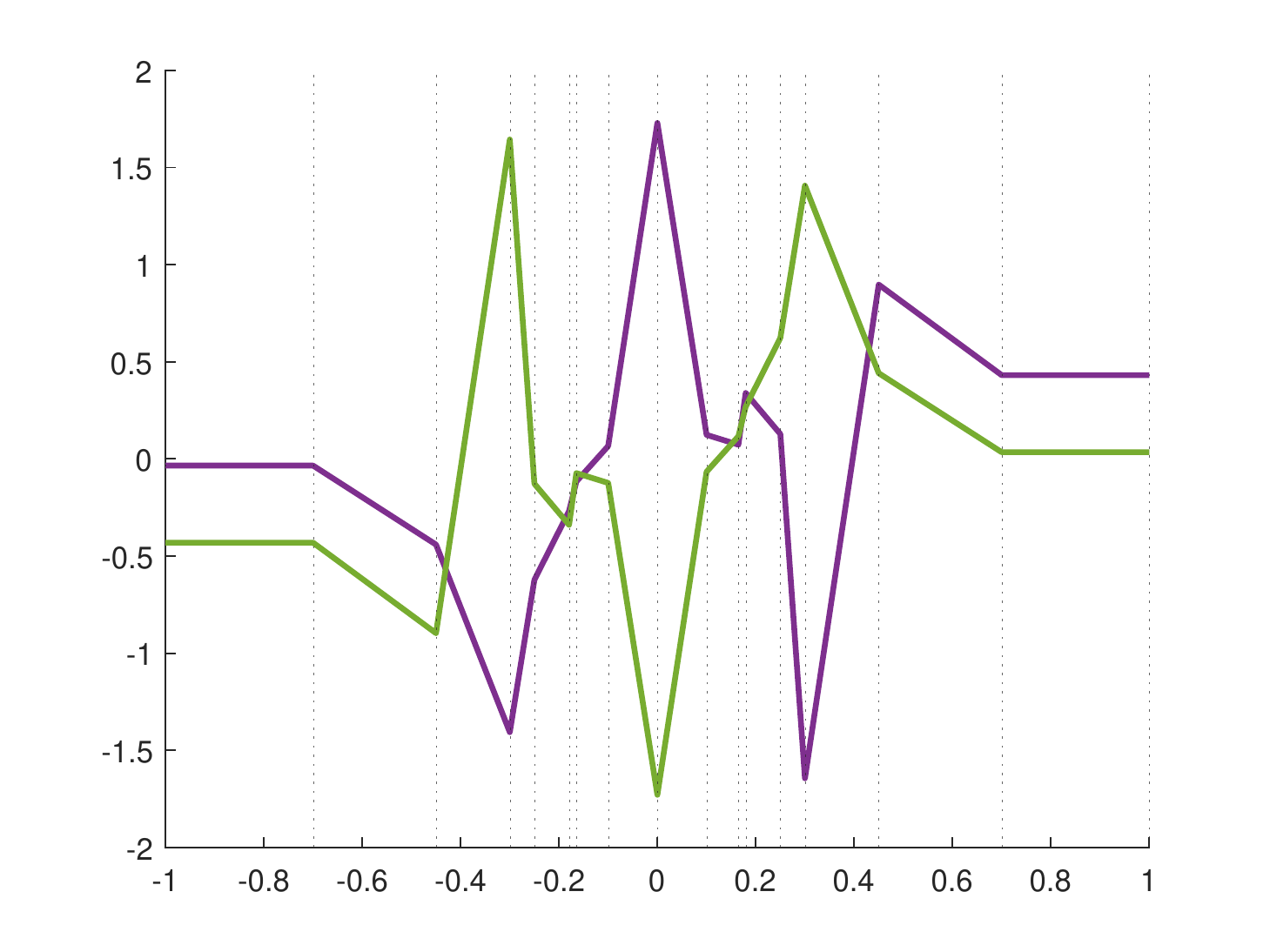} &
\includegraphics[width=3in]{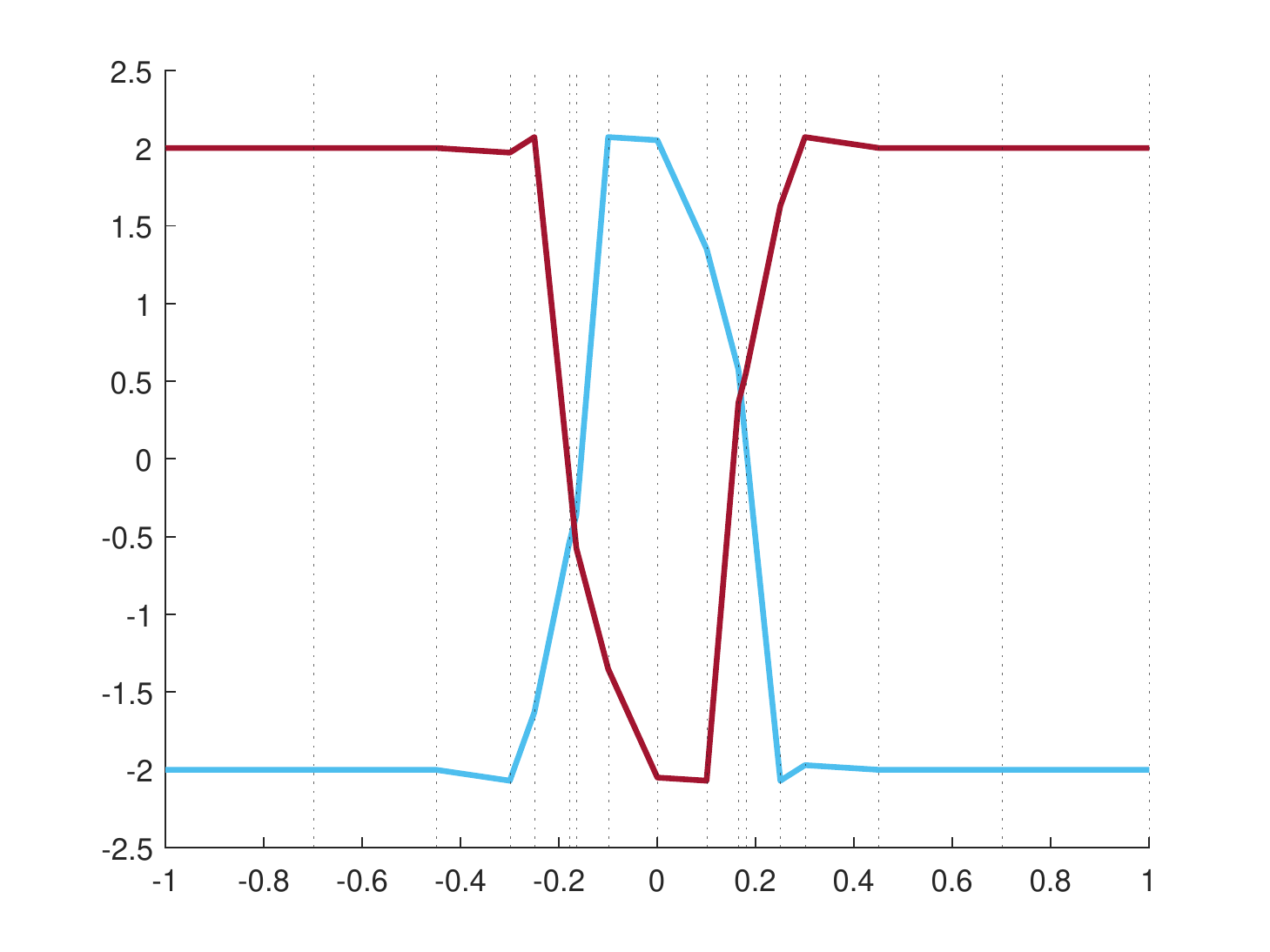} \\[-7pt]
$f_4$ and $f_5$ each with probability $0.000393$ & $f_6$ and $f_7$ each with probability $0.000200$
\end{tabular}
\end{center}
\caption{Plots of the seven rounding functions used in the $\THRESH$ rounding scheme given in Table~\ref{tbl:clean-dicut} that achieves a verified approximation ration of at least $\bestDICUT$ for \MAXDICUT.}\label{fig:clean-dicut}
\end{figure}

\begin{figure}[t]

\begin{center}
\includegraphics[width=5in]{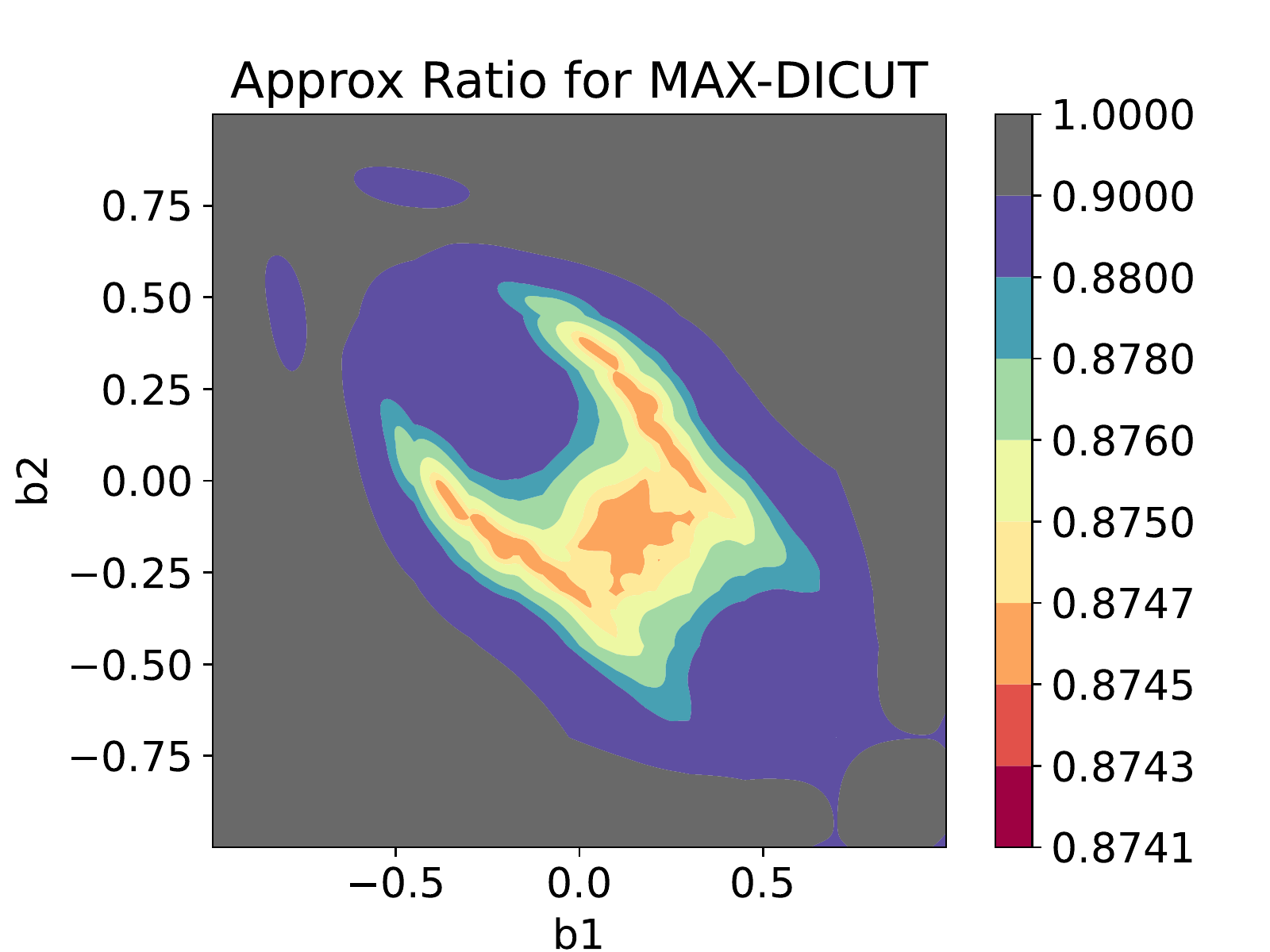}
\end{center}

\caption{This plot is a contour plot of the performance of the $\THRESH$ scheme with $7$ piecewise-linear rounding functions for various choices of $b_1$ and $b_2$ (with an approximately worst-case choice of $b_{12}$) selected.}
\end{figure}

\subsection{Discovery of the \texorpdfstring{$\THRESH$}{THRESH} scheme}\label{sec:discovery}

In this section, we discuss the process of experimentally discovering the ``raw'' $\THRESH$ scheme described in Section~\ref{sec:scheme}. For now, we will make a couple of assumptions, which will be fully worked out in Section~\ref{sub:details}.

\begin{itemize}
\item[(1)] Instead of optimizing over all valid configurations of \MAXDICUT, we restrict to optimizing over a finite set $\Theta$ of configurations, where $\comp(\theta) > 0$ for all $\theta \in \Theta$.
\item[(2)] Let $\mathcal F$ be a restricted family of $\THRESH^{-}$ schemes (e.g., the piecewise linear functions).  We shall further assume throughout this discussion that we have access to an oracle $\mathcal O_{\mathcal F}$ which, when given a probability distribution $\tilde{\Theta} \in \Dist(\Theta)$, identifies a function $f \in \mathcal F$ which maximizes $\sound(\tilde{\Theta}, f)$. 
\end{itemize}

\subsubsection{Finite \texorpdfstring{$F$}{F}: a game-theoretic approach}
Assume further we have found a finite set $F \subset \mathcal F$ of candidate rounding functions. We would like to identify the following:

\begin{itemize}
\item[(a)] An optimal (worst) distribution $\tilde{\Theta}$ over $\Theta$ such that
\[\tilde{\Theta} \EQ \underset{\tilde{\Theta} \in \Dist(\Theta)}{\argmin} \; \max_{f \in F} \; \frac{\sound(\tilde{\Theta}, f)}{\comp(\tilde{\Theta})}\;.\]
\item[(b)] An optimal (best) distribution $\tilde{F}$ over $F$ such that
\[
\tilde{F} \EQ \underset{\tilde{F} \in \Dist(F)}{\argmax} \; \min_{\theta \in \Theta} \; \underset{f \sim \tilde{F}} {\mathbb E}  \left[\frac{\sound(\tilde{\Theta}, f)}{\comp(\tilde{\Theta})}\right]\;.
\]
\end{itemize}

It turns out that both of these objectives can be solved by mutually dual LPs. This is best seen by casting both questions as a \emph{zero-sum game}. Fix a real number $\alpha$, which should be thought of as an estimate of the approximation ratio of this restricted \MAXDICUT\ problem. In our game, which we call \emph{the $\alpha$-game}, there are two players Alice and Bob that play simultaneously: Alice picks $\theta \in \Theta$ and Bob picks $f \in F$. We then have the following payoffs
\begin{align*}
\text{Alice: }\;&  \text{alice}_{\alpha}(\theta, f) \;:=\; \alpha\, \comp (\theta) - \sound(\theta, f)\\
\text{Bob: }\;& \text{bob}_{\alpha}(\theta, f) \;:=\; \sound(\theta, f) - \alpha\, \comp (\theta)
\end{align*}
Note that this game is a finite zero-sum game and thus by standard theory (e.g., Von Neumann's minimax theorem~\cite{neumann1928theorie} and Nash equilibria~\cite{nash1951non}), there is a single\footnote{Depending on the singular values of the payoff matrices, there may be multiple Nash-equilibrium, but they all have the same value. In that situation, we pick one of the Nash equilibriums arbitrarily to be representative Nash equilibrium.} Nash-equilibrium $(\tilde{\Theta}_{\alpha}, \tilde{F}_{\alpha})$ which is the optimal mixed strategy for both players. Let $v(\alpha)$ be the expected payoff of this optimal strategy for Alice (i.e., the \emph{value} of the game). We now make the following simple observation.

\begin{proposition}\label{prop:strict}
The function $v(\alpha)$ is strictly increasing in~$\alpha$. 
\end{proposition}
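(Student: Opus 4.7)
My plan is to exploit two features of the game: the affine dependence of the payoff on $\alpha$, and the strict positivity of $\comp$ on $\Theta$ guaranteed by assumption~(1). The main observation is that Bob's best reply to any fixed $\tilde{\Theta}$ is completely independent of $\alpha$. Indeed, once $\tilde{\Theta}$ is fixed, Bob maximizes $\mathbb{E}_{\theta \sim \tilde{\Theta}}[\sound(\theta, f)] - \alpha\, \mathbb{E}_{\theta \sim \tilde{\Theta}}[\comp(\theta)]$ over $f$, and the second term is a constant with respect to $f$. Hence, whichever $\tilde{F}^*$ is Bob's best reply to $\tilde{\Theta}^*$ in the $\alpha_1$-game will remain Bob's best reply to $\tilde{\Theta}^*$ in every $\alpha$-game.

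Fix $\alpha_1 < \alpha_2$, and let $(\tilde{\Theta}^*, \tilde{F}^*)$ be a Nash equilibrium of the $\alpha_1$-game (guaranteed by Von Neumann's minimax theorem since $\Theta$ and $F$ are finite). Write $C = \mathbb{E}_{\theta \sim \tilde{\Theta}^*}[\comp(\theta)]$ and $S = \mathbb{E}_{\theta \sim \tilde{\Theta}^*,\, f \sim \tilde{F}^*}[\sound(\theta, f)]$, so that $v(\alpha_1) = \alpha_1 C - S$. Now in the $\alpha_2$-game, Alice may commit to her $\alpha_1$-optimal strategy $\tilde{\Theta}^*$; by the observation above, $\tilde{F}^*$ is still Bob's best reply, so Alice's guaranteed payoff under this deviation is $\alpha_2 C - S$. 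By the definition of $v(\alpha_2)$ as Alice's maximum guaranteed payoff,
\[
v(\alpha_2) \;\geq\; \alpha_2 C - S \;=\; v(\alpha_1) + (\alpha_2 - \alpha_1)\, C .
\]

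Finally, by assumption~(1), $\comp(\theta) > 0$ for every $\theta \in \Theta$, so $C > 0$, and therefore $v(\alpha_2) > v(\alpha_1)$. I do not anticipate any significant obstacle: the argument is essentially a one-line comparison after isolating the $\alpha$-invariance of Bob's best-reply correspondence. The only conceptual point worth flagging is that the strict positivity of $\comp$ on $\Theta$ (assumption~(1)) is precisely what upgrades the weak inequality $\geq$ to the strict inequality $>$; without it one would only be able to conclude that $v$ is non-decreasing.
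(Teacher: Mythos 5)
Your proof is correct and rests on the same two facts the paper uses: the payoff depends affinely on $\alpha$, and $\comp$ is strictly positive on $\Theta$. The paper's version is slightly more direct: it has Alice replay $\tilde{\Theta}_{\alpha_1}$ in the $\alpha_2$-game and simply lower-bounds her payoff against an \emph{arbitrary} Bob strategy $\tilde{F}$ via the identity $\mathrm{alice}_{\alpha_2} = \mathrm{alice}_{\alpha_1} + (\alpha_2-\alpha_1)\comp$, using the Nash-equilibrium property of $\tilde{\Theta}_{\alpha_1}$ to bound the first term by $v(\alpha_1)$. Your extra observation that Bob's best-reply correspondence is $\alpha$-invariant is true and clean, but it is not needed: one never has to know what Bob's best reply is, only that Alice's payoff against \emph{every} Bob response shifts up by $(\alpha_2-\alpha_1)C > 0$. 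So your route works, but carries a small redundancy that the paper's one-liner avoids.
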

\begin{proof}
Fix $\alpha < \alpha'$. Assume for the $\alpha'$-game that Alice plays $\tilde{\Theta}_{\alpha}$. Assume Bob plays an arbitrary mixed strategy $\tilde{F}$. Then, Alice's expected payoff is
\begin{align*}
\underset{\theta \sim \tilde{\Theta}_{\alpha}, f \sim \tilde{F}}{\E}[\text{alice}_{\alpha'}(\theta, f)] &\EQ \underset{\theta \sim \tilde{\Theta}_{\alpha}, f \sim \tilde{F}}{\E}[\text{alice}_{\alpha}(\theta, f) + (\alpha'-\alpha)\comp (\theta)]\\
&\EQ \underset{\theta \sim \tilde{\Theta}_{\alpha}, f \sim \tilde{F}}{\E}[\text{alice}_{\alpha}(\theta, f)] + (\alpha'-\alpha)\underset{\theta \sim \tilde{\Theta}_{\alpha}}{\E}[\comp (\theta)]\\
&\GT v(\alpha)\;,
\end{align*}
where we use the fact that $\tilde{\Theta}_{\alpha}$ is the Nash equilibrium for the $\alpha$-game and that $\comp(\theta) > 0$ for all $\theta \in \Theta$. In other words, Alice can assure for the $\alpha'$-game a payoff strictly greater than $v(\alpha)$. Thus, $v(\alpha') > v(\alpha)$.
\end{proof}

It is easy to see that $v(0) \le 0$ (as $\text{alice}_{0} \le 0$). Further $\text{alice}_{\alpha}(\theta, f) \to \infty$ as $\alpha \to \infty$. Thus, by Proposition~\ref{prop:strict}, there is a unique $\alpha_{\Theta, F}$ for which $v(\alpha_{\Theta, F}) = 0$ with a corresponding Nash equilibrium of $\Tilde \Theta_{F}$ and $\Tilde F_{\Theta}$. Unpacking the definition of Nash equilibrium and using the fact that $\sound(\tilde{\Theta}, f)$ is an affine function in $\tilde{\Theta}$, we have that
\begin{itemize}
\item[(a)] For all $f \in F$, we have that \[\frac{\sound(\tilde{\Theta}_{F}, f)}{\comp(\tilde{\Theta})} \LE \alpha\;.\]
\item[(b)] For all $\theta \in \Theta$, we have that \[\underset{f \sim \tilde{F}_\Theta}{\mathbb E} \left[\frac{\sound(\theta, f)}{\comp(\theta)}\right] \GE \alpha\;.\]
\end{itemize}
Thus, $\Tilde \Theta_F$ and $\Tilde F_{\Theta}$ are the optimal distributions for problems (a) and (b) from before. We can efficiently compute these distributions through a suitable linear program. Let $w_\theta$ be the weights of the optimal distribution $\Tilde \Theta_F$ and let $p_f$ be the weights of the optimal distribution $\Tilde F_\Theta$. By definition of the Nash equilibrium, we have that
\begin{align}
& \sum_{\theta \in \Theta} w_{\theta} (\alpha_{\Theta, F}\, \comp (\theta) - \sound(\theta, f)) \GE 0 \quad,\quad \forall f \in F \label{eq:alice}\\
& \sum_{f \in F} p_f (\sound(\theta, f) - \alpha_{\Theta, F}\, \comp (\theta)) \GE 0 \quad,\quad \forall \theta \in \Theta \label{eq:bob}
\end{align}
To formulate this as a pair of linear programs, we will have $\alpha_{\Theta, F}$ be our objective. Since $v(\alpha) \ge 0$ for all $\alpha \ge \alpha_{\Theta, F}$ we will have a ``minimize'' objective to compute the $w_{\theta}$'s and a ``maximize'' objective to compute the $p_f$'s. 

However, neither set of constraints is currently an LP as $\alpha_{\Theta, F}$ is also a variable of our LP (in fact the objective function). This is easy to fix for (\ref{eq:bob}), as $\sum_{f \in F} p_f = 1$, so we can rewrite~(\ref{eq:bob}) as
\[
\sum_{f \in F} p_f \sound(\theta, f) \GE \alpha_{\Theta, F}\, \comp (\theta)\;.
\]
For (\ref{eq:alice}), we use a `clever' trick. We renormalize the weights so that $\sum_{\theta} \hat{w}_\theta\, \comp(\theta) = 1$ instead of $\sum_{\theta} w_\theta = 1$, and use the $\hat{w}_\theta$'s as the variables of the LP. With this normalization, we then get the linear constraints
\[
    \sum_{\theta \in \Theta} \hat{w}_\theta\, \sound(\theta, f) \LE \alpha_{\Theta, F}\;.
\]
After solving the LP, We can find the original weights by setting $w_\theta = \hat{w}_\theta / \sum_{\theta' \in \Theta} \hat{w}_{\theta'}$. Formally, the two LPs we solve are as follows.

\begin{equation*}
\boxed {
\begin{aligned}
& \quad \quad \text{Primal LP (finding ${\tilde \Theta}_F$)}\\
\textbf{minimize: } & \alpha\\
\textbf{subject to: } & \sum_{\theta \in \Theta} \sound(\theta_i, f) \hat{w}_\theta \le \alpha \quad,\quad \forall f \in F\\
& \sum_{\theta \in \Theta} \comp(\theta_i) \hat{w}_\theta = 1\\
& \hat{w}_\theta \ge 0 \quad,\quad \forall \theta \in \Theta
\end{aligned}
}
\end{equation*}

\begin{equation*}
\boxed {
\begin{aligned}
& \quad \quad \quad \quad \quad \quad  \text{Dual LP (finding ${\tilde F}_\Theta$)}\\
\textbf{maximize: } & \alpha\\
\textbf{subject to: } & \sum_{f \in F} \sound(\theta, f_j) p_f \ge \alpha\,\comp (\theta) \quad,\quad \forall \theta \in \Theta\\
& \sum_{f \in F} p_f = 1\\
& p_f \ge 0 \quad,\quad \forall f \in F
\end{aligned}
}
\end{equation*}

It is straightforward to prove that these two LPs are dual to each other and thus will both achieve the same objective value $\alpha_{\Theta,F}$

\subsubsection{Extending to infinite \texorpdfstring{$\mathcal F$}{F}}\label{sub:extend-F}

Since the full family of functions we optimize over is infinite, we cannot hope to find a (near) optimal distribution over $\mathcal F$ by just solving a suitable finite linear program. Instead, we work with a small set of candidate functions $F$ which we iteratively improve. In particular, for a fixed $F$, we can compute the hardest distribution $\tilde{\Theta}_F$ for this family of functions and then use the oracle to find the function $f = \mathcal O_{\mathcal F}(\tilde{\Theta}_F)$ which does the best on this hard distribution. (Note that $\mathcal O_{\mathcal F}(\tilde{\Theta}_F)$ needs to solve a non-linear, and probably non-convex, optimization problem.)  We add $f$ to $F$ and continue for some fixed number $T$ of steps. We note that similar minimax algorithms are prevalent in machine learning, such as in generative adversarial networks~\cite{goodfellow2020generative}. See Algorithm~\ref{alg:thresh} for the formal details.

\begin{algorithm}
\caption{$\THRESH$ discovery algorithm (fixed $\Theta$)}\label{alg:thresh}
\begin{algorithmic}[1]

\Procedure{FindThresh}{$\Theta$, $T$}

\State Pick an initial distribution $\Tilde \Theta_0 \in \Dist(\Theta)$
\State $f_1 \leftarrow \mathcal O_{\mathcal F}(\Tilde \Theta_0)$
\State $F_1 \leftarrow \{f_1\}$.
\For {$i \in \{1, 2, \hdots, T-1\}$}
    \State Find the hardest $\tilde \Theta_i $ for $F_{i}$ using the Primal LP with objective value $\alpha_i$
    \State $f_{i+1} \leftarrow \mathcal O_{\mathcal F}(\tilde \Theta_i)$
    \State $F_{i+1} \leftarrow F_{i} \cup \{f_{i+1}\}$
\EndFor

\State Find the optimal distribution $\tilde F_T$ over $F_T$ for $\Theta$ using the Dual LP

\State \Return $\tilde F_T$
  
\EndProcedure
\end{algorithmic}
\end{algorithm}

It is easy to see that the objective value $\alpha_i$ of the Primal LP in Algorithm~\ref{alg:thresh} increases at each step of the loop. Further,  it is not hard to prove that $\alpha_i \le 1/\min_{\theta \in \Theta} \comp(\theta)$.  Thus, as $T \to \infty$, the objective value of the Primal LP tends to a limit $\alpha_{\lim}$. Our main correctness guarantee of our algorithm is that we converge to this limit at an effective rate and that this limit is the best we can hope for.

\begin{theorem}\label{thm:alg-correct}
Fix $\eps > 0$ and assume $C := 1/\min_{\theta \in \Theta} \comp(\theta)$ is finite. Let $\alpha_{T}$ be the objective value of the Dual LP computing $\Tilde F_T$. Assume that $T > (C/\eps)^{|\Theta|}$, then $\alpha_T \ge \alpha_{\lim} - \eps$. Further, for every finite distribution $\Tilde F$ over functions in $\mathcal F$, 
\[\underset{f \sim \tilde{F}}{\mathbb E} \left[\frac{\sound(\tilde{\Theta}, f)}{\comp(\tilde{\Theta})}\right] \LE \alpha_{\lim}\;.\]
\end{theorem}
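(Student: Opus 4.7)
The plan is to combine monotonicity of $(\alpha_i)$ with LP duality and a pigeonhole argument on the primal variables. Let $V(\tilde{\Theta}, f) := \sound(\tilde{\Theta}, f)/\comp(\tilde{\Theta})$. In the rescaled weights $\hat{w} \ge 0$ (with $\sum_\theta \hat{w}_\theta \comp(\theta) = 1$, as in the Primal LP), $V$ is bilinear: $V(\hat{w}, f) = \langle \hat{w}, s_f\rangle$, where $s_f(\theta) := \sound(\theta, f) \in [0,1]$. Since $F_i \subseteq F_{i+1}$ only widens the inner maximum, $\alpha_i$ is non-decreasing; and $V(\theta, f) \le 1/\comp(\theta) \le C$ yields $\alpha_i \le C$. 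Hence $\alpha_{\lim} := \lim_i \alpha_i$ exists. Set $\alpha^* := \inf_{\tilde{\Theta}} \sup_{f \in \cF} V(\tilde{\Theta}, f)$; since $F_i \subseteq \cF$, $\alpha_i \le \alpha^*$, so $\alpha_{\lim} \le \alpha^*$.

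The crux is the following separation lemma: for every $i < j \le T$,
\[
\langle \hat{w}_i - \hat{w}_j, \, s_{i+1}\rangle \GE \alpha^* - \alpha_j.
\]
For the left term, $f_{i+1} = \mathcal{O}_{\cF}(\hat{w}_i)$ is the best response in $\cF$, so $V(\hat{w}_i, f_{i+1}) = \sup_{f \in \cF} V(\hat{w}_i, f) \ge \alpha^*$. For the right term, $f_{i+1} \in F_j$ for $j \ge i + 1$, and $\hat{w}_j$ is Primal-optimal for $F_j$, so $V(\hat{w}_j, f_{i+1}) \le \max_{f \in F_j} V(\hat{w}_j, f) = \alpha_j$. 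Assuming for contradiction that $\alpha_T < \alpha^* - \eps$, monotonicity gives $\alpha_j \le \alpha^* - \eps$ for every $j \le T$, whence $\langle \hat{w}_i - \hat{w}_j, s_{i+1}\rangle > \eps$. Since $\|s_{i+1}\|_\infty \le 1$, the $\hat{w}_i$'s are pairwise $\ell_1$-separated by more than $\eps$.

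Each $\hat{w}_i$ lies in $[0,C]^{|\Theta|}$, so a volumetric packing argument bounds the number of such pairwise $\eps$-separated points by $(C/\eps)^{|\Theta|}$ (up to constants depending on $|\Theta|$). Thus $T \le (C/\eps)^{|\Theta|}$, contradicting the hypothesis on $T$, so $\alpha_T \ge \alpha^* - \eps \ge \alpha_{\lim} - \eps$, which is the first claim. Letting $T \to \infty$ further yields $\alpha_{\lim} \ge \alpha^* - \eps$ for every $\eps > 0$, and hence $\alpha_{\lim} = \alpha^*$. For the ``further'' statement (interpreted as ``against the worst $\tilde{\Theta}$ for $\tilde{F}$, the ratio is at most $\alpha_{\lim}$''), any finite $\tilde{F}$ with support $G \subset \cF$ is feasible for the Dual LP restricted to $G$, so strong LP duality gives
\[
\min_{\theta \in \Theta} \E_{f \sim \tilde{F}} V(\theta, f) \LE \min_{\tilde{\Theta}} \max_{f \in G} V(\tilde{\Theta}, f) \LE \alpha^* \EQ \alpha_{\lim}.
\]

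The delicate step is extracting the sharp exponent base $C/\eps$ in the packing argument: a naive axis-aligned grid with cells of $\ell_\infty$-side $\eps/|\Theta|$ loses a factor polynomial in $|\Theta|$ inside the base, so matching the stated bound exactly likely requires exploiting that the $\hat{w}_i$ lie on the $(|\Theta|-1)$-dimensional hyperplane $\langle \hat{w}, \comp\rangle = 1$ and applying a tighter volumetric or metric-entropy estimate there.
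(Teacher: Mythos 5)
Your argument has the right two ingredients (best-response lower bound $\sound(\tilde\Theta_i,f_{i+1})/\comp(\tilde\Theta_i)\GE\alpha^*$, and primal optimality $\sound(\tilde\Theta_j,f_{i+1})/\comp(\tilde\Theta_j)\LE\alpha_j$ for $f_{i+1}\in F_j$), but as written it does not prove the theorem with the stated threshold $T>(C/\eps)^{|\Theta|}$, and the shortfall is exactly the point you flag at the end. Packing Alice's weight vectors $\hat w_i\in[0,C]^{|\Theta|}$ under $\ell_1$-separation $\eps$ unavoidably costs a dimension-dependent factor (a volume bound gives roughly $|\Theta|!\,((C+\eps)/\eps)^{|\Theta|}$, an axis-aligned grid gives $(C|\Theta|/\eps)^{|\Theta|}$), and restricting to the hyperplane $\langle\hat w,\comp\rangle=1$ only drops one dimension, so it will not recover the constant. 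The fix is to apply H\"older on the other factor of your separation inequality: from $\langle\hat w_{j-1},\,s_\Theta(f_j)-s_\Theta(f_i)\rangle>\eps$ and $\|\hat w_{j-1}\|_1\LE 1/\min_\theta\comp(\theta)=C$ you get $\|s_\Theta(f_j)-s_\Theta(f_i)\|_\infty>\eps/C$, i.e., you should separate the \emph{soundness profiles} $s_\Theta(f)=(\sound(\theta,f))_{\theta\in\Theta}\in[0,1]^{\Theta}$ of the functions in $\ell_\infty$, not the weights in $\ell_1$. Then the axis-aligned grid of side $\eps/C$ has exactly $(C/\eps)^{|\Theta|}$ cells, pigeonhole on $f_1,\dots,f_T$ gives two profiles in one cell, and the contradiction chain $\alpha_T\GE\alpha_{j-1}\GE \sound(\tilde\Theta_{j-1},f_i)/\comp(\tilde\Theta_{j-1})\GE \sound(\tilde\Theta_{j-1},f_j)/\comp(\tilde\Theta_{j-1})-\eps\GE\alpha_{\lim}-\eps$ closes the argument. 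This transposed version is precisely the paper's proof; the rest of your reasoning (bilinearity in the rescaled weights, monotonicity and boundedness of $\alpha_i$, the separation lemma itself) is correct.

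Beyond that constant, your route differs from the paper's in ways that are worth keeping. You identify the limit with the true minimax value $\alpha^*=\inf_{\tilde\Theta}\sup_{f\in\cF}\sound(\tilde\Theta,f)/\comp(\tilde\Theta)$ and prove convergence to $\alpha^*$ directly (the paper only ever compares against $\alpha_{\lim}$, obtained by a limiting argument along the iterates), and you derive the second claim by plain LP duality over the finite support $G$ of $\tilde F$: the worst-case ratio of $\tilde F$ is at most the value of the $G$-restricted game, which is at most $\alpha^*=\alpha_{\lim}$. The paper instead proves the second claim by rerunning the contradiction argument with an extra slack $\eps'$; your duality argument is cleaner and decoupled from the iteration. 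Note also that the identity $\alpha^*=\alpha_{\lim}$ only needs the first claim for \emph{some} finite threshold in $T$, so the packing-constant issue does not affect your proof of the second claim. One small point to keep explicit: throughout you must work with the ratio of expectations $\sound(\tilde\Theta,f)/\comp(\tilde\Theta)$ (linear in the rescaled weights $\hat w$), not the expectation of per-configuration ratios; your duality step is consistent with this, since the Dual LP constraint $\sum_f p_f\,\sound(\theta,f)\GE\alpha\,\comp(\theta)$ is exactly the per-$\theta$ ratio condition.
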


\begin{proof}

Observe that for all $j > i \ge 1$, we have that \[\frac{\sound(\Tilde \Theta_i, f_{i+1})}{\comp(\Tilde \Theta_i)} \GE \frac{\sound(\Tilde \Theta_j, f_{j})}{\comp(\Tilde \Theta_j)}\;,\]  because the distribution $\Tilde \Theta_i$ certifies that no finite distribution of rounding functions over $F_j$ can do better than $\frac{\sound(\Tilde \Theta_i, \mathcal O_{\mathcal F}(\Tilde \Theta_i))}{\comp(\Tilde \Theta_i)} $. In particular, by taking the limit as $j \to \infty$, this implies that for all $i \ge 1$,
\begin{align}
\frac{\sound(\Tilde \Theta_i, f_{i+1})}{\comp(\Tilde \Theta_i)} \GE \alpha_{\lim}\;.\label{eq:abc}
\end{align}

Assume for sake of contradiction that $\alpha_T < \alpha_{\lim} - \eps$. Thus, $\alpha_i < \alpha_{\lim} - \eps$ for all $i \in \{1, \hdots, T-1\}$. Pick $\delta = \eps / C$. Define the function $s_\Theta : \mathcal F \to [0, 1]^{\Theta}$ as
\[
    s_\Theta(f) \EQ (\sound(\theta, f) : \theta \in \Theta)\;.
\]
Observe that if $f$ and $f'$ are such that $\|s_\Theta(f) - s_\Theta(f')\|_{\infty} \le \delta$, then for any fixed distribution $\Tilde \Theta$, we have that 
\begin{align}
\left|\frac{\sound(\Tilde \Theta, f)}{\comp(\Tilde \Theta)} - \frac{\sound(\Tilde \Theta, f')}{\comp(\Tilde \Theta)}\right| \LE \frac{\delta}{\min_{\theta \in \Theta} \comp(\theta)} \EQ \eps\;.\label{eq:eps-net}
\end{align}
Divide $[0,1]^{\Theta}$ in $(1/\delta)^{|\Theta|}$ hypercubes with $\ell_\infty$-diameter $\delta$. Let $\mathcal H$ be the this family of hypercubes. Since $T > (C/\eps)^{|\Theta|}$, by the pigeonhole principle there exists $i,j \in \{1,\hdots, T\}$ with $f_i$ and $f_j$ in the same hypercube but $i < j$. In particular, we have by the minimax guarantee of $\Theta_{j-1}$, (\ref{eq:eps-net}), and (\ref{eq:abc}),
\begin{align}
\alpha_T \GE \alpha_{j-1} \GE \frac{\sound(\Tilde \Theta_{j-1}, f_{i})}{\comp(\Tilde \Theta_{j-1})} \GE \frac{\sound(\Tilde \Theta_{j-1}, f_{j})}{\comp(\Tilde \Theta_{j-1})} - \eps \GE \alpha_{\lim} - \eps\;, \label{eq:cba}
\end{align}
as desired.

For the claim about $\Tilde{F}$, assume for sake of contradiction that there is a $\eps' > 0$ such that 
\[\underset{f \sim \tilde{F}}{\mathbb E} \left[\frac{\sound(\tilde{\Theta}, f)}{\comp(\tilde{\Theta})}\right] \GE \alpha_{\lim} + \eps'\;.\]
Then, we must have that for all $i \ge 1$,
\[
\frac{\sound(\Tilde \Theta_i, f_{i+1})}{\comp(\Tilde \Theta_i)} \GE \alpha_{\lim} + \eps'\;.
\]
However, if we take the limit in (\ref{eq:cba}) as $\eps \to 0$ and $T \ge (C/\eps)^{|\Theta|}$, we obtain that $\alpha \ge \alpha_{\lim} + \eps'$, a contradiction.
\end{proof}

\begin{remark}
The second claim of \emph{Theorem~\ref{thm:alg-correct}} can also be proved for continuous distributions $\tilde{\mathcal F}$ over $\mathcal F$. In that case, we can approximately discretize $\tilde{\mathcal F}$ by picking representative functions which cover the space of functions in the $\ell_{\infty}$ metric with respect to $s_{\Theta}$. We omit further details.
\end{remark}

\begin{remark}
Although this proof only gives correctness when $T$ is exponential in the size of $\Theta$, in practice our simulation only requires $T = |\Theta|^{O(1)}$ rounds to converge with $\eps \approx 10^{-6}$. Perhaps this suggests that the theoretical analysis can also be improved. 
\end{remark}

\subsubsection{Extension to infinite \texorpdfstring{$\Theta$}{Theta}}\label{sub:theta}

We now briefly discuss how to extend Algorithm~\ref{alg:thresh} to allow $\Theta$ to grow. Let $\Theta_{\text{valid},\eps}$ be the space of all valid configurations of \MAXDICUT\ with completeness at least $\eps$. Assume we also have an oracle $\mathcal O_{\Theta}$ which when given a distribution of rounding functions $\tilde{F}$ outputs the configuration $\theta \in \Theta_{\text{valid},\eps}$ on which $\tilde{F}$ performs the worst. We can then dynamically grow our ``working set'' of configurations~$\Theta$ using the following procedure.

\begin{algorithm}
\caption{$\THRESH$ discovery algorithm (growing $\Theta$)}\label{alg:thresh2}
\begin{algorithmic}[1]

\Procedure{FindThreshFull}{$T$, $T'$}

\State Pick $\Theta_0$ arbitrarily.
\For {$i \in \{1, 2, \hdots, T'-1\}$}
    \State $\tilde{F_i} \leftarrow \textsf{FindThresh}(\Theta_{i-1}, T)$.
    \State $\theta_i \leftarrow \mathcal O_{\Theta}(\Tilde F_i)$
    \State $\Theta_i \leftarrow \Theta_{i-1} \cup \{\theta_{i}\}$.
\EndFor

\State \Return $\textsf{FindThresh}(\Theta_{T'}, T)$
  
\EndProcedure
\end{algorithmic}
\end{algorithm}

Let $\alpha_i$ the performance guarantee of $\tilde{F}_i$ over $\Theta_{i-1}$ and let $\hat{\alpha}_i$ be optimal approximation ratio if $T$ were to tend to $\infty$. Note that $\hat{\alpha}_i$ must monotonically decrease (although non-necessarily strictly). Since each $\hat{\alpha}_i$ is nonnegative, they must have a limit $\hat{\alpha}_{\lim}$. Via an argument similar\footnote{This further requires that the family of functions $\mathcal F$ is uniformly continuous: that is small changes in the configurations imply that the rounding functions do not change much. This is true for uniformly bounded, piecewise linear functions.} to Theorem~\ref{thm:alg-correct}, we can take an $\delta$-net over the configuration space $\Theta_{\text{valid}, \eps}$ and argue that if both $\theta_i$ and $\theta_j$ are in the same region of the $\delta$-net, then $\Tilde F_j$ must perform with a ratio at least $\hat{\alpha}_{\lim} -\eps$ on all configurations\footnote{In practice, the distribution of functions also does well on instances with completeness less than $\eps$.} in $\Theta_{\text{valid},\eps}$. In particular, we can guarantee that when $T'$ is sufficiently large, then nearly all $\tilde{F_i}$'s with $i \in \{T'/2, \hdots, T'\}$ are near-optimal distributions. This proves to be an adequate guarantee for practical simulation.

\subsubsection{Implementation details}\label{sub:details}

We now discuss the implementation details for how the ``raw'' $\THRESH$ scheme was generated as well as details of how the ``clean'' $\THRESH$ scheme was derived from it. %

\paragraph{The raw distribution.} Overall, the algorithm for discovering the ``raw'' $\THRESH$ distribution was implemented in Python (version 3.10).

The oracle $\mathcal O_{\mathcal F}$ is computed using the SciPy library's {\sf minimize} routine~\cite{2020SciPy-NMeth} which finds a locally maximum rounding function $f$ when given a starting function $\hat{f} : S \to (-\infty, \infty)$ as input. For numerical stability, we assume that all thresholds are in the range $[-2, 2]$. We compute $\mathcal O_{\Theta}$ by computing $\sound(\theta, \Tilde F)$ for $\theta$'s in a suitably spaced grid and then calling {\sf minimize} on the worst grid point to further tune the parameters.

The $\sound$ routine was computed using Genz's numerical algorithms for approximate multivariate normal integration \cite{genz1992numerical,genz1993comparison} which is bundled with SciPy.  The linear programming routines were implemented using CVXPY~\cite{diamond2016cvxpy,agrawal2018rewriting} as a wrapper around the ECOS solver~\cite{domahidi2013ecos}. 

In practice, we found that the convergence was more stable by additionally adding $\flip(f_i)$ to $F_i$ in Algorithm~\ref{alg:thresh}.  Likewise, in Algorithm~\ref{alg:thresh2}, it was best to add $\flip(\theta_i)$ along with $\theta_i$.

Perhaps the most sensitive part of this algorithm is the choice of the initial $\Theta_0$ in Algorithm~\ref{alg:thresh2}. We found it best to set $\Theta_0$ to be a near-optimal hard distribution. With this choice, it only took $T' \approx 150$. In practice, we did not aim for a fixed $T$ in Algorithm~\ref{alg:thresh}, but rather a more complicated stopping criteria based on how fast $\alpha_i$ is stabilizing. This roughly translates to $T \ll 100$. In total, it took a few hours of single-core computation on a standard desktop computer to find the $\THRESH$ function described in Section~\ref{sec:scheme}. However, as mentioned in Section~\ref{sub:theta}, the worst-case performance of the distribution $\Tilde F_i$ is not monotone in $i$, so it took a few instances of trial and error (i.e., run for a few more iterations) until the worst-case performance was satisfactory.

We further remark that routines similar to the ones described in this section were used to discover (approximately) the configurations used to prove the upper bound on \MAXDICUT\ in Section~\ref{sec:upper} (in this case $\Theta_0$ was seeded to be a fixed $\eps$-spaced grid).

\paragraph{The clean distribution.} 
Inspecting the 39 functions of the raw distribution revealed that they naturally divide into 7 families of functions, with the functions in each family being fairly similar to each other. Taking a weighted average of the functions in each family yielded a scheme with only 7 functions that did almost as well as the original scheme. Further inspection revealed that one of these 7 functions was almost odd, and that the other six functions divide into three pairs in which functions are close to being flips of each other. The first function was made odd by taking the average of the function and its flip. Similarly, the functions in each pair were made flips of each other. This slightly improved the performance ratio obtained. Finally, numerical optimization was used to perform small optimizations. The resulting 7 functions are the ones given in Table~\ref{tbl:clean-dicut}. The final performance ratio obtained was slightly better than the one achieved by the raw distribution. The computations and optimizations were done using MATLAB.

\subsection{Verification using interval arithmetic}

\subsubsection{Sketch of the algorithm}

From now on, we use $\Tilde{F}$ to refer to the clean distribution of 7 functions from the previous subsection. To prove that the claimed distribution $\Tilde{F}$ of rounding functions achieves an approximation ratio of at least $\alpha$ for \MAXDICUT, we need to show that
\[
\forall \theta, \left(\comp(\theta)\neq0 \;\implies\;\frac{\E_{f \sim \Tilde{F}}[\sound(\theta, f)]}{\comp(\theta)} \geq \alpha\right),
\]
or equivalently
\[
\forall \theta, \qquad \E_{f \sim \Tilde{F}}[\sound(\theta, f)] - \alpha \cdot \comp(\theta) \GE 0\;.
\]

Note that in the above expression, $\comp(\theta)$ only involves simple arithmetic operations, and $\E[\sound(\theta, f)]$ is a weighted sum of two-dimensional Gaussian integrals, while $\theta$ takes value in $[-1,1]^3$, modulo the triangle inequalities.

To rigorously verify the inequality for all configurations, we deploy the technique of \emph{interval arithmetic}. In interval arithmetic, instead of doing arithmetics with numbers, we apply arithmetic operations to intervals. Let $op$ be a $k$-ary operation and $I_1, I_2, \ldots, I_k$ be $k$ intervals, then the interval arithmetic on $op(I_1, I_2, \ldots, I_k)$ will produce an interval $I_{op}$ with the following \emph{rigorous} guarantee: $op(x_1, x_2, \ldots, x_k) \in I_{op}$ for every $(x_1, x_2, \ldots, x_k) \in I_1 \times I_2 \times \cdots \times I_k$. By transitivity of set inclusion, if we implement a function $g$ as a composition of such operations in interval arithmetic, then it is guaranteed that the range of $g$ is included in the output interval $I_g$.

This property is useful when it comes to certifying the nonnegativity of $g$. Indeed, if the output interval $I_g$ lies entirely in $[0, \infty)$, then we can establish that $g$ is a nonnegative function on the given input intervals. However, since the computation is usually not exact, to maintain correctness, $I_{g}$ will also contain elements that are not in the range of $g$. In particular, if $g$ attains $0$, then we cannot hope to certify the nonnegativity of $g$ with interval arithmetic unless some very special conditions on $g$ allow for exact evaluation.

Even in the case where $\inf(g) > 0$, $I_g$ may still contain negative elements. For example, if $g = g_1 + g_2$, then $I_g$ might be obtained by adding $I_{g_1}$ and $I_{g_2}$. This will imply that $\sup(I_{g_1}) + \sup(I_{g_2}) \in I_g$, while in reality $g_1$ and $g_2$ may attain maximum/supremum on very different inputs. This issue can be resolved via a simple \emph{divide-and-conquer} algorithm. Whenever the check on $I_g$ is inconclusive, i.e., it contains both positive and negative numbers, then we split one of the input intervals into halves, and recursively apply the same computation to each half. This is like using a microscope: if we cannot see a region clearly, we zoom in to get a better view.%

\begin{algorithm}
\caption{Interval arithmetic verification algorithm}\label{alg:int_1}
\begin{algorithmic}[1]

\Procedure{CheckRatio}{$I_1$, $I_2$, $I_{1,2}$}

\If{$\textsc{CheckValidity}(I_1, I_2, I_{1,2}) = \text{FALSE}$}
\State \Return TRUE
\EndIf
\State $I \gets \textsc{IntervalArithmeticEvaluate}(I_1, I_2, I_{1, 2})$.

\If{$I \subseteq [0, \infty)$}
  \State \Return TRUE
\ElsIf{$I \subseteq (\infty, 0)$}
  \State \Return FALSE
\Else
  \If{$|I_1| = \max(|I_1|, |I_2|, |I_{1, 2}|)$}
      \State Split $I_1$ into two equal-length sub-intervals $I_1 = I_1^l \cup I_1^r$.
      \State \Return $\textsc{CheckRatio}(I_1^l, I_2, I_{1,2}) \wedge \textsc{CheckRatio}(I_1^r, I_2, I_{1,2})$
  \ElsIf{$|I_2| = \max(|I_1|, |I_2|, |I_{1, 2}|)$}
      \State Split $I_2$ into two equal-length sub-intervals $I_2 = I_2^l \cup I_2^r$.
      \State \Return $\textsc{CheckRatio}(I_1, I_2^l, I_{1,2}) \wedge \textsc{CheckRatio}(I_1, I_2^r, I_{1,2})$
  \Else
      \State Split $I_{1, 2}$ into two equal-length sub-intervals $I_{1, 2} = I_{1, 2}^l \cup I_{1, 2}^r$.
      \State \Return $\textsc{CheckRatio}(I_1, I_2, I_{1,2}^l) \wedge \textsc{CheckRatio}(I_1, I_2, I_{1,2}^r)$
  \EndIf
\EndIf
  
\EndProcedure
\end{algorithmic}
\end{algorithm}

The pseudocode of the algorithm is presented in Algorithm~\ref{alg:int_1}. The $\textsc{CheckValidity}$ function checks if there exists a valid configuration in $I_1 \times I_2 \times I_{1,2}$, i.e., a configuration that satisfies all triangle inequalities, and returns true if it does. If $\textsc{CheckValidity}$ returns false, then the algorithm returns true, since in this case the region consists entirely of invalid configurations and there is nothing to check. Otherwise, the algorithm continues to compute an interval $I$, using the $\textsc{IntervalArithmeticEvaluate}$ subroutine, such that
\[
\forall \theta \in I_1 \times I_2 \times I_{1,2}, \quad \E_{f \sim \Tilde{F}}[\sound(\theta, f)] - \alpha \cdot \comp(\theta) \in I.
\]
The algorithm then checks if $I$ is entirely non-negative or entirely negative, in which cases we can decide that either the ratio is achieved over the entire region, or there exists a valid configuration that violates the ratio, and exit the algorithm accordingly. Otherwise, $I$ consists of both positive and negative values, but the negative values may come from evaluation of invalid configurations, or more intrinsically the error produced by interval arithmetic itself. In this case, we subdivide the longest interval into two equal-length sub-intervals and recursively apply the algorithm, as explained earlier.

We implemented this verification algorithm in C using the interval arithmetic library Arb~\cite{johansson2017arb}. Specific advantages of this library is that it has rigorous implementations of the error function~\cite{johansson2019computing} as well as a routine for rigorous numerical integration~\cite{johansson2018numerical}. To speed up the computation, we split the various tasks between cores using GNU Parallel~\cite{Tange2011a}. We obtain the following lemma.

\begin{lemma}\label{lemma:int_arith_cut}
$\Tilde{F}$ achieves an approximation ratio of $\bestDICUTverified$ on all $\MAXDICUT$ configurations with completeness at least $10^{-6}$.
\end{lemma}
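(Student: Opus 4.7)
The plan is to invoke Algorithm~\ref{alg:int_1} on the root cube $I_1 \times I_2 \times I_{1,2} = [-1,1]^3$ with $\alpha = \bestDICUTverified$, and argue that the recursion terminates by returning TRUE on every leaf, which by the soundness of interval arithmetic implies the claimed inequality on every valid configuration. To begin, I would explicitly set the target function to $g(\theta) = \E_{f\sim\Tilde F}[\sound(\theta,f)] - \alpha\cdot\comp(\theta)$ and express it as a composition of operations that Arb implements rigorously: arithmetic on real intervals, the piecewise-linear evaluation of each $f_i$ at the biases $b_i, b_j$ (which, since the $17$ control points are fixed rationals, reduces to interval linear interpolation on a branch chosen by containment), and the bivariate Gaussian probabilities $\Phi_{-\rho(\theta)}(f_k(b_i), -f_k(b_j))$ obtained via Arb's rigorous numerical integration and rigorous error function. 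The relative pairwise bias $\rho(\theta) = (b_{ij} - b_ib_j)/\sqrt{(1-b_i^2)(1-b_j^2)}$ is also a simple composition of interval-arithmetic primitives, with the square roots handled by standard Arb routines.

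The second step is to justify the completeness cutoff. Outside the root cube the function $\comp(\theta) = (1 + b_i - b_j - b_{ij})/4$ can be arbitrarily small or negative; the statement of the lemma restricts to $\comp(\theta) \ge 10^{-6}$, so at the start of \textsc{CheckRatio} I would add an additional early-exit branch: if the interval $\textsc{IntervalArithmeticEvaluate}$ of $\comp$ lies entirely in $(-\infty, 10^{-6})$, return TRUE, since such sub-cubes contain no configurations relevant to the lemma. Similarly, \textsc{CheckValidity} checks the four triangle inequalities by evaluating the four bilinear forms $(\bv_0 \pm \bv_i)\cdot(\bv_0 \pm \bv_j)$ on the interval cube and returning FALSE when all of them are forced to be negative, which prunes the (large) infeasible regions of the cube.

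Next, I would run the recursion until every leaf returns TRUE. Correctness is immediate: any valid configuration $\theta$ with $\comp(\theta) \ge 10^{-6}$ is contained in some leaf cube $I_1\times I_2\times I_{1,2}$ for which \textsc{IntervalArithmeticEvaluate} certifies $g(\theta)\in [0,\infty)$, so $\E_{f\sim\Tilde F}[\sound(\theta,f)] \ge \alpha\cdot\comp(\theta)$, which is exactly the approximation ratio bound. To translate termination into a rigorous proof, I would only need to report that the concrete Arb-based C implementation, run to completion (with the work distributed across cores via GNU Parallel), returns TRUE on the root, and refer the reader to the code repository for reproducibility; no human-checkable convergence rate need be proved.

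The hard part, in practice, is controlling the depth of the recursion so that the computation actually finishes in a reasonable time. There are two sources of blow-up. First, interval arithmetic is notoriously loose for bivariate normal CDFs near the diagonal $\rho = \pm 1$ and near ill-conditioned square roots $\sqrt{1-b_i^2}$ as $|b_i|\to 1$: here Arb's numerical integrator produces wide output intervals and the cube must be split many times. Second, the ratio $\sound/\comp$ is nearly tight (within $\approx 10^{-5}$) on a small set of ``worst'' configurations near the biases used in the discovery phase of Section~\ref{sec:discovery}, so the recursion must locally zoom to precision on the order of $10^{-6}$. To mitigate both issues I would use adaptive splitting (always bisect the axis whose contribution to the output width is largest, refining the heuristic beyond ``longest interval''), precompute once the branches of each piecewise-linear $f_k$ on a tensor grid of bias control points, and parallelize across top-level octants. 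With these engineering choices the computation is feasible on a standard multi-core workstation, completing the proof of Lemma~\ref{lemma:int_arith_cut}.
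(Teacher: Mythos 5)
Your high-level plan---recursive interval-arithmetic bisection \`a la Algorithm~\ref{alg:int_1}, pruning infeasible or low-completeness sub-cubes, and certifying that the recursion returns TRUE on the root---is exactly the paper's strategy for Lemma~\ref{lemma:int_arith_cut}. But there is a gap at the core primitive: you say the bivariate Gaussian probabilities $\Phi_{-\rho}(f_k(b_i),-f_k(b_j))$ are ``obtained via Arb's rigorous numerical integration,'' whereas Arb only provides rigorous integration for \emph{one-dimensional} integrals, so no such call exists. The paper closes this gap with the Drezner--Wesolowsky identity
\[
\Phi_\rho(t_1,t_2)\EQ \Phi(t_1)\Phi(t_2)+\frac{1}{2\pi}\int_0^\rho \frac{1}{\sqrt{1-r^2}}\exp\!\left(-\frac{t_1^2-2rt_1t_2+t_2^2}{2(1-r^2)}\right)dr,
\]
reducing each bivariate probability to a single 1-D integral that Arb can certify; without that (or some equivalent) your $\textsc{IntervalArithmeticEvaluate}$ routine cannot be built.

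A second, subtler issue concerns how you obtain $\rho(\theta)$. You propose interval division by $\sqrt{(1-b_i^2)(1-b_j^2)}$ and acknowledge instability as $|b_i|\to1$, but your cure (more adaptive splitting) does not close the hole: any sub-cube whose $I_1$ contains $\pm1$ produces an unbounded interval for $\rho$ no matter how finely you split $I_{1,2}$, and the feasible set degenerates to a thin sliver as $|b_i|\to1$, so splitting $I_1$ is slow as well. The paper sidesteps the division entirely by re-parametrizing configurations as $(b_i,b_j,\rho)$ rather than $(b_i,b_j,b_{ij})$, making the working domain a genuine box. You should adopt that parametrization (and, ideally, the paper's further optimization of collapsing an axis to an endpoint when the sign of the relevant partial derivative is determined over a sub-cube, which is a stronger pruning than your longest-contributing-axis heuristic) before the ``just run it to completion'' step can plausibly terminate.
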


We address the requirement on completeness in the next subsection.

\subsubsection{Removing the completeness requirement and a proof of Theorem~\ref{theorem:lower}}\label{section:remove_comp}

As we discussed, interval arithmetic in general cannot certify nonnegativity of a function which attains 0. Unfortunately, the function that we care about, $\E_{f \sim \Tilde{F}}[\sound(\theta, f)] - \alpha \cdot \comp(\theta)$, does attain 0, regardless of the choice of $r$, as the following proposition shows.

\begin{proposition}
Let $\theta = (b_i, b_j, b_{ij})$ be a configuration with $b_i = b_j = b$ and $\rho(\theta) = 1$. Then for any $f$, 
\[
\sound(\theta, f) \EQ \comp(\theta) \EQ 0\;.
\]
\end{proposition}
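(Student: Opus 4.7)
The plan is to directly evaluate both $\comp(\theta)$ and $\sound(\theta, f)$ under the hypotheses $b_i = b_j = b$ and $\rho(\theta) = 1$. Both quantities have closed-form expressions in terms of $(b_i, b_j, b_{ij})$, so the proof reduces to extracting what $b_{ij}$ must be from the relative-pairwise-bias condition and then substituting.

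First I would use the definition of $\rho$. Under $b_i = b_j = b$ (with $b \ne \pm 1$, the case excluded by the hypothesis as discussed below), we have
\[
\rho(\theta) \EQ \frac{b_{ij} - b^2}{1 - b^2}\;,
\]
so $\rho(\theta) = 1$ forces $b_{ij} = 1$. Plugging this into the completeness formula immediately gives
\[
\comp(\theta) \EQ \frac{1 + b - b - 1}{4} \EQ 0\;.
\]

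Next I would compute the soundness. By definition $\sound(\theta, f) = \Phi_{-\rho(\theta)}(f(b_i), -f(b_j)) = \Phi_{-1}(f(b), -f(b))$. The key observation is that two jointly Gaussian standard normals $X, Y$ with correlation $-1$ satisfy $Y = -X$ almost surely, so
\[
\Phi_{-1}(t_1, t_2) \EQ \Pr[X \le t_1 \text{ and } -X \le t_2] \EQ \Pr[-t_2 \le X \le t_1]\;.
\]
With $t_1 = f(b)$ and $t_2 = -f(b)$ the interval collapses to the single point $\{f(b)\}$, so the probability is $0$, yielding $\sound(\theta, f) = 0$.

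The only subtlety — and the main thing to get right — is the boundary case $b = \pm 1$. There, $(1 - b_i^2)(1 - b_j^2) = 0$ and the definition of $\rho(\theta)$ explicitly sets the relative pairwise bias to $0$ rather than computing a fraction; hence the hypothesis $\rho(\theta) = 1$ already excludes this case and no separate analysis is needed. The rest is just substitution, so I expect the whole proof to take only a few lines.
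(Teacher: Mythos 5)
Your proof is correct and takes essentially the same route as the paper: derive $b_{ij}=1$ from $\rho(\theta)=1$, substitute into the completeness formula, and observe that $\Phi_{-1}(f(b),-f(b))$ is the probability of a measure-zero event since perfectly anti-correlated Gaussians satisfy $Y=-X$. Your explicit handling of the boundary case $b=\pm1$ (ruled out because the definition sets $\rho=0$ there) is a small extra care the paper leaves implicit, but the argument is otherwise identical.
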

\begin{proof}
Since $\rho(\theta) = 1$, we have $b_{ij} = b_ib_j + \rho \sqrt{1 - b_i^2}\sqrt{1 - b_j^2} = b^2 + (1 - b^2) = 1$ and
\[
\comp(\theta) \EQ \frac{1 + b_i - b_j - b_{ij}}{4} \EQ \frac{1 + b - b - 1}{4} \EQ 0\;.
\]
For soundness, we have $\sound(\theta, f) = \Phi_{-\rho}(f(b_i), -f(b_j)) = \Phi_{-\rho}(f(b), -f(b))$. Since $\rho = 1$, this is equal to $\Pr_{X \sim N(0, 1)}[X \leq f(b) \wedge -X \leq -f(b)] = \Pr_{X \sim N(0, 1)}[X = f(b)] = 0$.
\end{proof}

Luckily, on configurations with small completeness, it is known that independent rounding, which assigns true to each variable independently with probability 1/2, does very well. Indeed, this rounding scheme satisfies each $\MAXDICUT$ constraint with probability 1/4 on every configuration. This implies that $\Tilde{F}$ combined with the independent rounding will achieve a good approximation ratio over all $\DICUT$ configurations.

\begin{proof}[Proof of Theorem~\ref{theorem:lower}]
Consider the rounding algorithm where we use the $\THRESH$ rounding scheme $\Tilde{F}$ with probability $(1 - 10^{-5})$ and independent rounding with probability $10^{-5}$. We show that this algorithm achieves a ratio of $\bestDICUT$ on all configurations of $\MAXDICUT$.

Let $\theta$ be a $\DICUT$ configuration. If $\comp(\theta) \geq 10^{-6}$, then by Theorem~\ref{lemma:int_arith_cut}, we achieve a ratio of at least $\bestDICUTverified \times (1 - 10^{-5}) > \bestDICUT$. If $\comp(\theta) < 10^{-6}$, then independent rounding contributes a soundness of $0.25 \times 10^{-5} = 2.5 \times 10^{-6} > \bestDICUT\cdot\comp(\theta)$.
\end{proof}

\subsubsection{Further optimizations}
To further speed up the computation, we compute partial derivatives of $\E_{f \sim \Tilde{F}}[\sound(\theta, f)] - \alpha \cdot \comp(\theta)$, and reduce an interval to its boundary point if the corresponding partial derivative is nonnegative or nonpositive.

For example, if we have
\[
\forall \theta \in I_1 \times I_2 \times I_{1,2}, \quad  \frac{\partial}{\partial b_1} \left(\E_{f \sim \Tilde{F}}[\sound(\theta, f)] - \alpha \cdot \comp(\theta)\right) \geq 0\;,
\]
and $I_1 = [l, r]$, then to certify the nonnegativity of $\E_{f \sim \Tilde{F}}[\sound(\theta, f)] - \alpha \cdot \comp(\theta)$, it is sufficient to check
\[
\forall \theta \in \{l\} \times I_2 \times I_{1,2}, \quad  \E_{f \sim \Tilde{F}}[\sound(\theta, f)] - \alpha \cdot \comp(\theta) \GE 0\;.
\]
We remark that we only perform this optimization in regions that are entirely valid, i.e., consisting only of valid configurations. This is because otherwise we may reduce the region to an invalid subregion, on which the program returns true without checking the ratio.
\subsubsection{Implementation details}

To compute the soundness, we need to evaluate bivariate Gaussian distributions of the form $\Phi_{\rho}(t_1, t_2)$. However, Arb only has implementation of one-dimensional integration. To overcome this, we use the following formula from~\cite{DW90}, which transforms $\Phi_{\rho}(t_1, t_2)$ into a one-dimensional integral:
\[
\Phi_{\rho}(t_1, t_2) \EQ \frac{1}{2\pi}\int_0^\rho \frac{1}{\sqrt{1 - r^2}}\exp\left(-\frac{t_1^2 - 2rt_1t_2 + t_2^2}{2(1 - r^2)}\right) dr + \Phi(t_1)\Phi(t_2)\;. 
\]

Another potential issue is numerical stability. Computing $\rho$ from $(b_i, b_j, b_{ij})$ involves division by $\sqrt{(1 - b_i^2)(1 - b_j^2)}$, which can be unstable when $b_i$ or $b_j$ is close to $\pm 1$. In the actual implementation, we overcome this by representing a configuration using $(b_i, b_j, \rho)$.

\section{A new approximation algorithm for \texorpdfstring{\MAXAND}{MAX 2-AND}}\label{sec:lower-AND}

Recall that $\THRESH$ rounding schemes for \MAXAND\ are nearly identical to those for \MAXDICUT, except that the rounding schemes for \MAXAND\ are required to be \emph{odd} functions. It is easy to enforce in the discovery algorithm that the family of piecewise-linear functions we consider are odd (in fact, the oracle runs quicker as the number of free parameters is cut in half). Empirically, we found a ``raw'' distribution of 15 rounding functions which attains a ratio of approximately $0.8741$. %
Using a clean-up procedure similar to that for \MAXDICUT, we were able to simplify it to another distribution $\Tilde{F'}$ with only $3$ functions. See Table~\ref{tbl:clean-2and} for details.

\begin{table}%
\begin{center}
\begin{tabular}{cc}
\scriptsize
\vspace*{-1.5cm}
\begin{tabular}[b]{r|r|r|r}
 &  \hfill\nobreak $f_1$  \hfill\nobreak  &  \hfill\nobreak $f_2$  \hfill\nobreak  &  \hfill\nobreak $f_3$  \hfill\nobreak  \\
\hline
\hfill\nobreak prob \hfill\nobreak  & $ 0.998105$ & $ 0.001126$ & $ 0.000769$\\
\hline
$-1.000000$ & $-1.585394$ & $ 0.934459$ & $ 0.163540$ \\
$-0.700000$ & $-0.870350$ & $ 0.443616$ & $-0.212976$ \\
$-0.450000$ & $-0.512239$ & $ 0.675617$ & $-1.435794$ \\
$-0.300000$ & $-0.332896$ & $-1.446206$ & $ 0.289432$ \\
$-0.250000$ & $-0.274526$ & $-1.495506$ & $ 2.000000$ \\
$-0.179515$ & $-0.193131$ & $-0.382870$ & $-0.492446$ \\
$-0.164720$ & $-0.176869$ & $ 0.015196$ & $-0.933550$ \\
$-0.100000$ & $-0.107901$ & $ 2.000000$ & $-1.568231$ \\
\hline
$ 0.000000$ & $ 0.000000$ & $ 0.000000$ & $ 0.000000$ \\
\hline
$ 0.100000$ & $ 0.107901$ & $-2.000000$ & $ 1.568231$ \\
$ 0.164720$ & $ 0.176869$ & $-0.015196$ & $ 0.933550$ \\
$ 0.179515$ & $ 0.193131$ & $ 0.382870$ & $ 0.492446$ \\
$ 0.250000$ & $ 0.274526$ & $ 1.495506$ & $-2.000000$ \\
$ 0.300000$ & $ 0.332896$ & $ 1.446206$ & $-0.289432$ \\
$ 0.450000$ & $ 0.512239$ & $-0.675617$ & $ 1.435794$ \\
$ 0.700000$ & $ 0.870350$ & $-0.443616$ & $ 0.212976$ \\
$ 1.000000$ & $ 1.585394$ & $-0.934459$ & $-0.163540$ \\
\end{tabular}
\vspace*{1cm} &
\null\vspace*{0.5cm}
\includegraphics[width=3.5in]{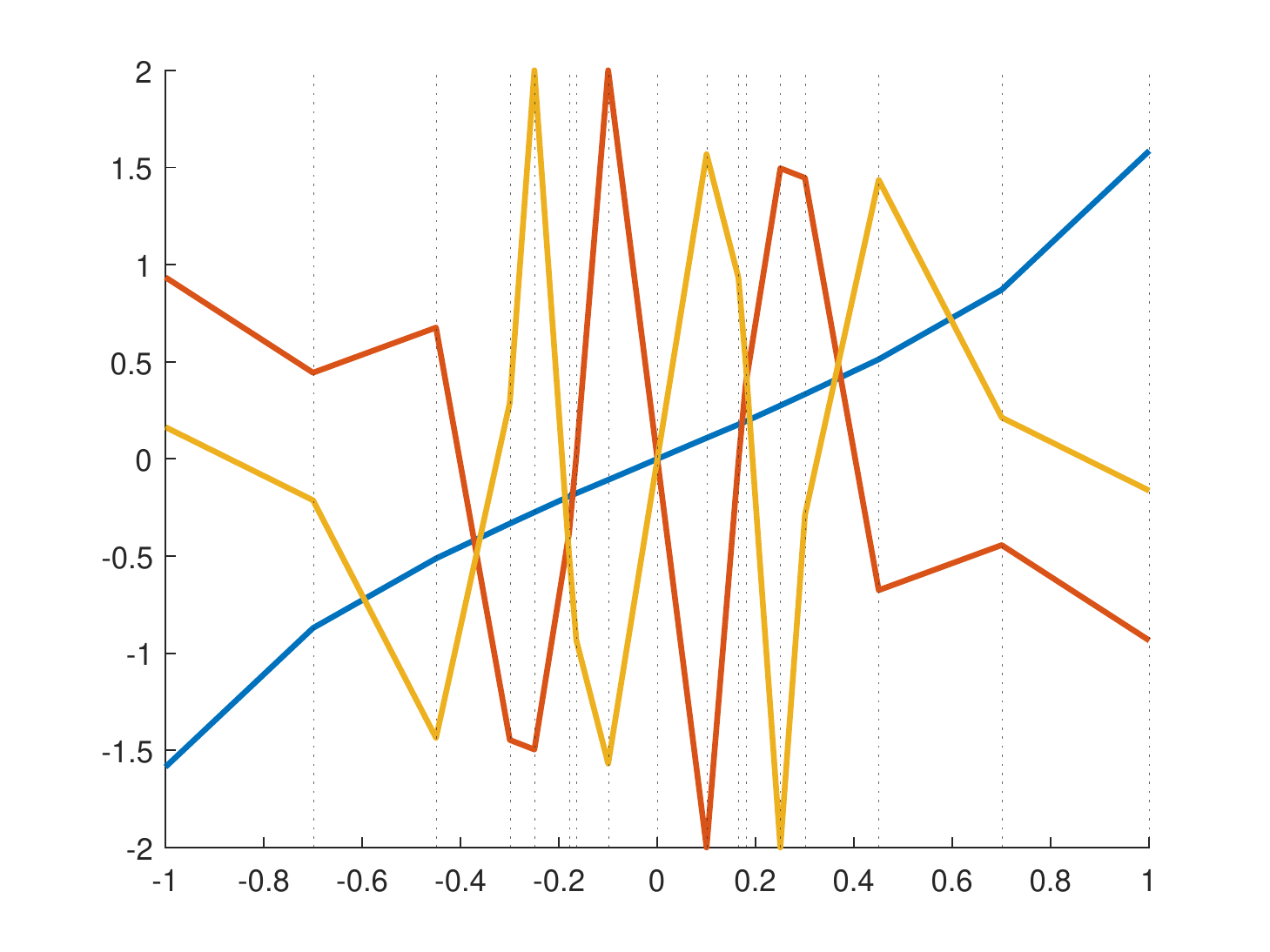}
\end{tabular}
\end{center}
\caption{A $\THRESH$ rounding scheme that gives a rigorously verified approximation ratio of at least $\bestAND$ for \MAXAND. (The actual ratio is probably about $0.874202$.) The scheme uses three piecewise-linear odd rounding functions $f_1,f_2,f_3$ defined on 17 control points. A plot of the functions is given on the right.}\label{tbl:clean-2and}
\end{table}

\begin{figure}
\begin{center}
\includegraphics[width=5in]{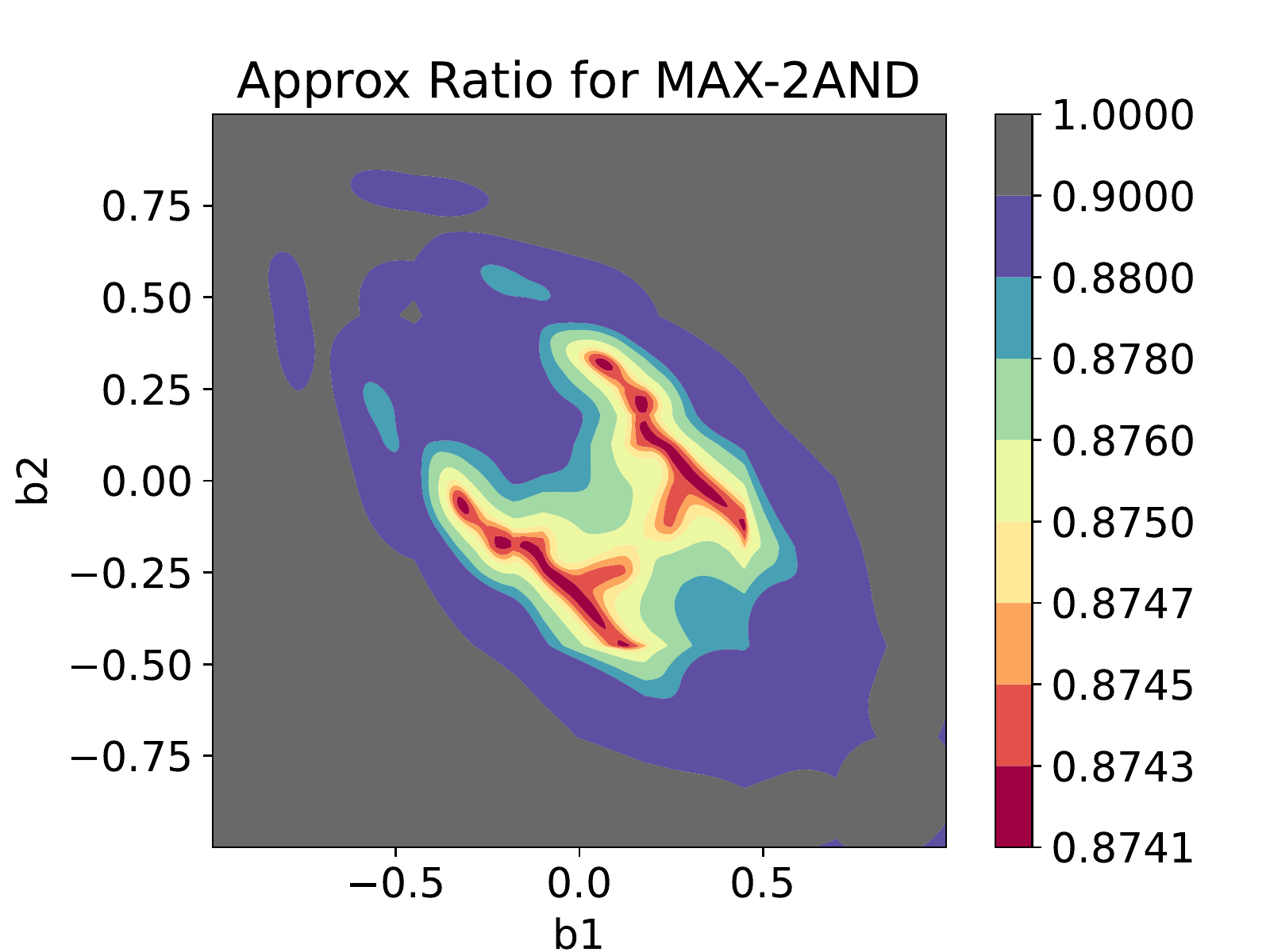}
\end{center}

\caption{This plot is a contour plot of the performance of the $\THRESH$ scheme for \MAXAND\ with $3$ piecewise-linear rounding functions for various choices of $b_1$ and $b_2$ (with an approximately worst-case choice of $b_{12}$) selected.}
\end{figure}

Using the same interval arithmetic algorithm used for \MAXDICUT, we obtain the following result.
\begin{lemma}\label{lemma:int_arith_and}
$\Tilde{F'}$ achieves an approximation ratio of $\bestANDverified$ on all $\MAXAND$ configurations with completeness at least $10^{-6}$. 
\end{lemma}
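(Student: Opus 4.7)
The plan is to apply the interval arithmetic verification algorithm of Section~\ref{sec:lower} (Algorithm~\ref{alg:int_1}) essentially verbatim, with the target ratio set to $\alpha = \bestANDverified$ and the rounding distribution replaced by the distribution $\Tilde{F'}$ of three odd piecewise-linear rounding functions from Table~\ref{tbl:clean-2and}. The core task is to certify that
\[
g(\theta) \;:=\; \mathbb{E}_{f \sim \Tilde{F'}}[\sound(\theta, f)] \;-\; \alpha \cdot \comp(\theta) \GE 0
\]
holds for every valid \MAXAND\ configuration $\theta$ with $\comp(\theta) \geq 10^{-6}$.

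Because \MAXAND\ permits negated variables, the natural symmetry to exploit is that every $f \in \Tilde{F'}$ is odd: replacing a variable $x_i$ by its negation corresponds to replacing $b_i$ by $-b_i$ and $b_{ij}$ by $-b_{ij}$, and $\sound(\theta, f)$ transforms consistently. Following Austrin~\cite{Austrin10}, this reduces the verification to positive configurations (those with $\rho(\theta) \leq 0$, given that $\hat{P}_{1,2} = -1/4$ for \DICUT), roughly halving the volume of configuration space that interval arithmetic must cover.

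With this in hand, I invoke $\textsc{CheckRatio}$ from Algorithm~\ref{alg:int_1} with three adjustments: (i)~the interior call to $\textsc{IntervalArithmeticEvaluate}$ uses the three odd functions of $\Tilde{F'}$ in place of the seven-function distribution $\Tilde{F}$; (ii)~the target is $\bestANDverified$; (iii)~the initial box is restricted to the positive-configuration slab. Every other ingredient transfers directly: the parametrization by $(b_i, b_j, \rho)$ to avoid instability near $|b_i|, |b_j| = 1$, the one-dimensional integral representation of Drezner and Wesolowsky for rigorous evaluation of $\Phi_\rho(t_1, t_2)$, Arb's rigorous numerical integration and error function routines, the partial-derivative reduction optimization in entirely-valid subboxes, and the GNU Parallel orchestration of the recursion tree.

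The main obstacle, as in the \MAXDICUT\ case, is interval-arithmetic overapproximation in regions where $g$ is small in absolute value. Subdivisions must be driven deep wherever the rounding scheme is near-tight, producing a highly unbalanced recursion tree; however, because $|\Tilde{F'}| = 3$ rather than $7$, each leaf evaluation of $g$ is roughly half as expensive, so the total runtime should be comparable. The boundary regime $\comp(\theta) < 10^{-6}$ is not addressed by the interval-arithmetic pass itself; rather, as in Section~\ref{section:remove_comp}, it will be handled a posteriori by blending $\Tilde{F'}$ with a tiny amount of independent rounding, which on its own satisfies every \MAXAND\ constraint with probability $1/4$ and thus dominates the loss incurred on low-completeness configurations.
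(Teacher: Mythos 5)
Your overall plan matches the paper's: the paper literally states the lemma follows by running ``the same interval arithmetic algorithm used for \MAXDICUT'' with the target ratio $\bestANDverified$ and the three-function odd distribution $\Tilde{F'}$, including the $(b_i, b_j, \rho)$ parametrization, the Drezner--Wesolowsky one-dimensional integral, Arb, and the derivative-based box reduction. However, there is a genuine gap in the middle step: the proposed restriction to positive configurations (those with $\rho(\theta) \leq 0$) is unjustified and would make the verification incomplete.

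The oddness of the rounding functions gives exactly one useful symmetry here, namely $\flip(\theta) = (-b_j, -b_i, b_{ij})$, and $\flip$ \emph{preserves} the sign of $\rho$; it does not map a configuration with $\rho > 0$ to one with $\rho < 0$. The manipulation you describe --- replacing $b_i$ by $-b_i$ and $b_{ij}$ by $-b_{ij}$ --- is not a symmetry of the ratio: it sends $\theta$ to a \emph{different} configuration $\theta''$ with $\comp(\theta'') \neq \comp(\theta)$, and both $\theta$ and $\theta''$ arise as legitimate effective configurations of a \MAXAND\ SDP solution (the negations in the instance are absorbed into exactly such sign flips). So the inequality $\E_{f\sim\Tilde{F'}}[\sound(\theta,f)] \geq \alpha\,\comp(\theta)$ must be certified for every valid $\theta$, including those with $\rho(\theta) > 0$. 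Dropping the positive-$\rho$ half of the box is only safe under Austrin's positivity conjecture (that positive configurations are the hardest to round), which is still open and which the paper explicitly states its results do not rely on. Austrin's restriction to positive configurations applies to the \emph{hardness} direction (where the reduction machinery needs them), not to verifying an algorithm's approximation guarantee. To close the gap, drop the restriction and run $\textsc{CheckRatio}$ over the full box of valid configurations, exactly as in the \MAXDICUT\ verification; the $\flip$ symmetry can still be used to cut the work roughly in half if desired. Your handling of the low-completeness regime via blending with independent rounding is fine, but note it belongs to the subsequent deduction of Theorem~\ref{theorem:2and}, not to this lemma, whose statement already carves out $\comp(\theta) \geq 10^{-6}$.
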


We can then use the same proof idea as that in Section~\ref{section:remove_comp} to get rid of the completeness requirement and obtain the lower bound of $\bestAND$ for \MAXAND , as claimed in Theorem~\ref{theorem:2and}.

\section{Conclusion}\label{sec:concl}

We used a ``computational lens'' to obtain a much better, and an almost complete, understanding of the \MAXDICUT\ and \MAXAND\ problems. Insights gained from numerical experiments yielded a completely analytical new upper bound for \MAXDICUT\ that can be verified by hand (see Section~\ref{sec:upper}), as well as new lower bounds, i.e., new approximation algorithms, for \MAXDICUT\ and \MAXAND, for which we obtain a rigorous computer-assisted analysis (see Section~\ref{sec:lower} and Section~\ref{sec:lower-AND}).

We have established that the \MAXDICUT\ problem has its own approximation ratio by strictly separating it from \MAXAND\ and \MAXCUT\ (assuming the unique games conjecture). Fundamental to our approach was the use of algorithmic discovery to identify both difficult instances of \MAXDICUT\ and \MAXAND\ as well as discovering $\THRESH$ rounding schemes which improve on the $20+$ year state of the art. %

As discussed in Section~\ref{sec:lower}, assuming the unique games conjecture and Austrin's positivity conjecture, the optimal $\THRESH$ schemes\footnote{Or more precisely a limiting sequence of finite, bounded $\THRESH$ schemes.} achieve $\aDC$ and $\aAND$ for \MAXDICUT\ and \MAXAND, respectively. We demonstrated a computational procedure which helps us to approximate $\aDC$ and $\aAND$ to greater precision than previously known. However, a proper theoretical understanding is still missing. In particular:

\textbf{Theoretical understanding of the optimal $\THRESH$ scheme.} Currently, we lack a satisfactory explanation of why the secondary functions in the currently best-known \MAXDICUT\ and \MAXAND\ $\THRESH$ schemes take on the shapes they do. Perhaps one can prove that the optimal functions must satisfy particular constraints (such as in the calculus of variations), or at least provide a satisfactory understand of the second-order affect these functions have.

\textbf{Theoretical understanding of the hardest configurations.} Likewise, we do not understand the structure of the hardest distributions of configurations for \MAXDICUT\ and \MAXAND. Appendices~\ref{A-Upper} and \ref{A-Upper-AND} show that some rather complex distributions appear to give increasingly better upper bounds for $\aDC$ and $\aAND$. Would it be possible to theoretically describe what the hardest configurations are? It is not clear whether the hardest distribution should even have finite support. Properly describing the hardest distributions would also resolve Austrin's positivity conjecture.

\pagebreak

\appendix

\section{Translating into UG-hardness}\label{A-reduction}

Some of the notations in this section are borrowed from~\cite{Austrin07,Austrin10}. We remark that the techniques in this section are standard and well-known, and only small modifications to that in~\cite{Austrin10}, namely, we drop the requirement that a rounding function has to be odd.

\subsection{Preliminary: (Extended) Majority is Stablest}

We recall some definitions from the analysis of Boolean functions (e.g., \cite{Austrin10,o2014analysis}). Let $B_q^n$ be the probability space over $\{-1, 1\}^n$ where each bit is independently set to $-1$ with probability~$q$ and to $1$ with probability $1 - q.$. Let $U_q(1) = \sqrt{\frac{q}{1 - q}}$ and $U_q(-1) = -\sqrt{\frac{1 - q}{q}}$, and for any $S \subseteq [n]$, let $U_q^S(x_1, \ldots, x_n) = \prod_{i \in S} U_q(x_i)$. It is easy to verify  (c.f., Proposition 2.7 of~\cite{Austrin10}) that 
\[\{U_q^S: B_q^n \to \mathbb{R} \mid S \subseteq [n] \}\]
is an orthonormal basis for real-valued functions on $B_q^n$ with respect to the inner product defined via expectation.  We define the Fourier coefficients of $f$ as $\hat{f}_S = \underset{\bx \sim B_q^n}{\E}[f(\bx) U_q^S(\bx)]$. Note that these Fourier coefficients form a basis decomposition:
\[
f = \sum_{S \subseteq [n]} \hat{f}_S U_q^S.
\]

In our application, we are also interested in computing correlation of two functions with different biases.

\begin{definition}[Definition 2.13, \cite{Austrin10}]
Let $f: B_{q_1}^n \to \mathbb{R}$ and $g: B_{q_2}^n \to \mathbb{R}$. The $\rho$-correlation between $f$ and $g$ is defined as
\[
\mathbb{S}_\rho(f, g) := \E[f(\bx)g(\by)],
\]
where $\bx \sim B_{q_1}^n$, $\by \sim B_{q_2}^n$, and furthermore the $i$-th coordinate of $\bx$ and the $i$-th coordinate of $\by$ has correlation $\rho$, i.e., $\frac{\E[x_iy_i] - \E[x_i]\E[y_i]}{\sqrt{(1 - \E[x_i]^2)(1 - \E[y_i]^2)}} = \rho$.
\end{definition}

\begin{definition}
Let $f: B_q^n \to \mathbb{R}$ and $k \in [n]$. The $k$-low-degree influence of coordinate $i$ on $f$ is defined as
\[
\Inf_i^{\leq k}[f]:= \sum_{S: i \in S\subseteq[n], |S| \leq k} \hat{f}_S^2.
\]
\end{definition}

It is straightforward from the definition that $\Inf_i^{\leq k}$ is convex.

\begin{proposition}\label{prop:Lsize}
Let $f: B_q^n \to [-1, 1]$. For any $\eta > 0$ and $k \in [n]$, we have
\[
\left|\left\{i \in [n] \mid \Inf_i^{\leq k}[f] > \eta \right\}\right| \leq \frac{k}{\eta}.
\]
\end{proposition}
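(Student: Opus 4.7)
The plan is to prove this by a standard Parseval-plus-Markov argument applied to the low-degree influences.

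First I would expand the definition and swap the order of summation to get a global bound on the total low-degree influence:
\begin{align*}
\sum_{i=1}^n \Inf_i^{\le k}[f] \;=\; \sum_{i=1}^n \sum_{\substack{S \ni i\\ |S|\le k}} \hat{f}_S^{\,2} \;=\; \sum_{\substack{S\subseteq[n]\\|S|\le k}} |S|\,\hat{f}_S^{\,2} \;\le\; k \sum_{S\subseteq[n]} \hat{f}_S^{\,2}.
\end{align*}
The equality in the middle is just counting: each Fourier coefficient $\hat{f}_S^{\,2}$ with $|S|\le k$ is charged once by each $i\in S$, contributing a factor of $|S|\le k$.

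Next I would invoke Parseval for the orthonormal basis $\{U_q^S\}$ noted in the excerpt, which gives $\sum_{S\subseteq[n]} \hat{f}_S^{\,2} = \E_{\bx\sim B_q^n}[f(\bx)^2]$. Since $f$ takes values in $[-1,1]$, we have $\E[f^2]\le 1$, so the previous display yields
\[
\sum_{i=1}^n \Inf_i^{\le k}[f] \;\le\; k.
\]

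Finally, Markov's inequality applied to this nonnegative sum finishes the proof: if $L = \{i : \Inf_i^{\le k}[f] > \eta\}$, then $|L|\cdot \eta < \sum_{i\in L} \Inf_i^{\le k}[f] \le k$, so $|L| \le k/\eta$ (a $\le$ rather than $<$ after taking the ``$>\eta$'' threshold as given in the statement). There is no real obstacle here; the only thing to be careful about is the combinatorial identity in the first display, which is just double counting.
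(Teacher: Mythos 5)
Your proof is correct and matches the paper's argument almost line for line: the same double-counting to show $\sum_i \Inf_i^{\le k}[f] = \sum_{|S|\le k} |S|\hat{f}_S^2 \le k$, followed by Markov. The only difference is that you make the Parseval step $\sum_S \hat{f}_S^2 = \E[f^2] \le 1$ explicit, whereas the paper leaves it implicit.
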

\begin{proof}
We have
\[
    \sum_{i = 1}^n\Inf_i^{\leq k}[f] = \sum_{i = 1}^n\sum_{\substack{S: i \in S\subseteq[n], \\ |S| \leq k}} \hat{f}_S^2 
     = \sum_{|S| \leq k} |S| \hat{f}_S^2 \leq k \cdot \sum_{|S| \leq k} \hat{f}_S^2 \leq k.
\]
The proposition follows immediately.
\end{proof}

It turns out that for functions with small low-degree influences, the extremal behavior of their $\rho$-correlations is characterized by threshold functions in Gaussian space.

\begin{theorem}[Corollary 2.19, \cite{Austrin10}]\label{thm:majority_is_stablest}
For any $\epsilon > 0$, there exist $k \in \mathbb{N}$ and $\eta > 0$ such that for all $f: B_{q_1}^n \to \mathbb{R}$ and $g: B_{q_2}^n \to \mathbb{R}$ satisfying $\min(\Inf_i^{\leq k}[f], \Inf_i^{\leq k}[g]) \leq \eta$ for every $i \in [n]$, we have
\[
4\Phi_{-|\rho|}(t_1, t_2) - \epsilon \leq \mathbb{S}_{\rho}(f, g) - \E[f] - \E[g] + 1 \leq 4\Phi_{|\rho|}(t_1, t_2) + \epsilon,
\]
where $t_1 = \Phi^{-1}\left(\frac{1 - \E[f]}{2}\right)$ and $t_2 = \Phi^{-1}\left(\frac{1 - \E[g]}{2}\right)$.
\end{theorem}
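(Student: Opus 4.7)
The plan is to invoke the Mossel--O'Donnell--Oleszkiewicz (MOO) invariance principle to transfer the problem from the correlated biased-Bernoulli spaces to correlated Gaussian space, and then finish by Borell's two-set isoperimetric inequality. The first step is a purely algebraic rewrite. Setting $\tilde{f} = (1-f)/2$ and $\tilde{g} = (1-g)/2$ (so that $\E[\tilde f] = (1-\E[f])/2$ and similarly for $\tilde g$), a direct expansion gives
\[
\mathbb{S}_\rho(f,g) - \E[f] - \E[g] + 1 \EQ 4\,\E[\tilde{f}(\bx)\,\tilde{g}(\by)],
\]
where $(\bx,\by)$ has the prescribed coordinate-wise correlation $\rho$. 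Thus the theorem is equivalent to sandwiching $\E[\tilde{f}\tilde{g}]$ between the Gaussian two-set probabilities $\Phi_{-|\rho|}(t_1,t_2)$ and $\Phi_{|\rho|}(t_1,t_2)$ up to $\epsilon/4$.

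Next, reduce to low-degree smooth functions. Apply the noise operator $T_{1-\gamma}$, which acts as $\widehat{T_{1-\gamma}f}_S = (1-\gamma)^{|S|}\hat{f}_S$, so means are preserved and high Fourier levels are attenuated. Because $\mathbb{S}_\rho$ is diagonal in the $U_q^S$ basis with $\mathbb{S}_\rho(U_{q_1}^S, U_{q_2}^T) = \delta_{S,T}\,\rho^{|S|}$, a Cauchy--Schwarz tail estimate shows the truncation past degree $d = O(\log(1/\epsilon)/\gamma)$ costs at most $\epsilon/10$ in $\mathbb{S}_\rho(f,g)$. Moreover, by convexity of $\Inf_i^{\leq k}$ and the pointwise contraction $(1-\gamma)^{2|S|}\leq 1$, the low-degree influence hypothesis $\min_i\,(\Inf^{\leq k}_i[f],\Inf^{\leq k}_i[g]) \leq \eta$ is inherited by $f_\gamma, g_\gamma$.

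Now apply MOO. The smoothed $f_\gamma, g_\gamma$ are degree-$d$ multilinear polynomials $P,Q$ in the basis variables $U_{q_1}(x_i)$ and $U_{q_2}(y_i)$, each coordinate-wise low-influence, and the basis pair $(U_{q_1}(x_i), U_{q_2}(y_i))$ is a mean-zero, variance-one random vector with correlation $\rho$. By the joint invariance principle (Theorem 3.6 of MOO, applied to the pair), replacing this discrete ensemble by a jointly Gaussian ensemble $(X_i, Y_i)$ with the same two-moment structure shifts the joint law of $(P,Q)$ in Lévy distance by $o(1)$ as $\eta \to 0$ with $k,d$ fixed. Thus $\E[\tilde{f}\tilde{g}]$ differs by at most $\epsilon/10$ from the Gaussian analogue $\E[\tilde{P}(X)\,\tilde{Q}(Y)]$ (the standard truncate-and-round lemma turns the Lévy closeness and a bound on second moments into closeness of the corresponding $[0,1]$-truncated expectations, while shifting the thresholds $t_1,t_2$ only by $o(1)$).

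Finally, in Gaussian space I can apply Borell's two-set isoperimetric inequality: among $[0,1]$-valued functions $h_1(X), h_2(Y)$ of $\rho$-correlated standard Gaussians with prescribed means $\Phi(t_1)$ and $\Phi(t_2)$, the correlation $\E[h_1 h_2]$ is maximized by parallel halfspaces, giving value $\Phi_{|\rho|}(t_1,t_2)$, and minimized by antiparallel halfspaces, giving $\Phi_{-|\rho|}(t_1,t_2)$. Collecting the $\epsilon/10$ errors from smoothing, invariance, and rounding, choosing $\gamma,\eta$ small and $k \gg d$ yields the stated sandwich inequality with total error at most $\epsilon$. The most delicate step in this plan is the invariance reduction: one must invoke MOO on a \emph{pair} of correlated spaces while simultaneously tracking both marginals (via $t_1,t_2$) and the cross-correlation $\rho$, and then convert an $L^2$/Lévy closeness of unbounded polynomials into a closeness of expectations of $[0,1]$-valued truncations, which is precisely the source of the $\epsilon$ slack in the conclusion.
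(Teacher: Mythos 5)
The paper does not prove this statement: it is imported verbatim as Corollary~2.19 of \cite{Austrin10} (whose own proof runs through the Mossel--O'Donnell--Oleszkiewicz invariance principle and Borell's Gaussian rearrangement inequality), so there is no in-paper argument to compare against. Your sketch reconstructs exactly that standard route --- rewrite the quantity as $4\E[\tilde f\tilde g]$ with $\tilde f=(1-f)/2$, smooth, truncate, invoke the two-function invariance principle, finish with Borell --- and the overall architecture is the right one.

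Three places where the sketch is thinner than a proof would need to be. First, the statement as printed allows $f,g$ to be arbitrary real-valued functions, under which it is false (rescale $f$); your argument silently assumes $f,g$ take values in $[-1,1]$, which is the hypothesis in Austrin's actual Corollary~2.19 and is what the application needs, but you should say so. Second, your error accounting for the smoothing step is not uniform in $\rho$: the discrepancy $\sum_S \hat f_S\hat g_S\,\rho^{|S|}\bigl(1-(1-\gamma)^{2|S|}\bigr)$ is only controlled by the damping $|\rho|^{|S|}$ when $|\rho|$ is bounded away from $1$ (a high-degree parity with $\rho=\pm1$ shows the naive bound degenerates), so either the parameters $k,\eta$ must be allowed to depend on $\rho$ --- which suffices for the soundness lemma, where $\rho(\theta)$ ranges over a finite set bounded away from $\pm1$ --- or the regime $|\rho|\to1$ must be handled by a separate continuity argument. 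Third, the hypothesis only bounds $\min(\Inf_i^{\leq k}[f],\Inf_i^{\leq k}[g])$, so you cannot assert, as you do, that ``each'' polynomial is coordinate-wise low-influence; you must invoke the two-function (cross-influence) form of the invariance principle, whose hybrid-argument error in coordinate $i$ is governed by the smaller of the two influences. None of these is a wrong turn --- they are exactly the technical points the cited literature resolves --- but as written the proposal is a correct outline rather than a complete proof.
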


\subsection{UG-Hardness via PCP}

For any permutation $\pi: [L] \to [L]$ and vector $\bx = (x_1, \ldots, x_L)\in \mathbb{R}^L$, let $\pi \bx$ be the vector $(x_{\pi(1)}, \ldots, x_{\pi(L)})$. Given a distribution of configurations $\Tilde{\Theta}$, consider the following PCP protocol $\mathsf{Verifier}_{\Tilde{\Theta}}(I, F)$ (c.f., Algorithm 1 of \cite{Austrin10}):
\begin{itemize}
    \item Input: A Unique Games instance $I = (G, L, \Pi)$, and a set of functions $F = \{f_v: \{-1, 1\}^L \to \{-1, 1\} \mid v \in V(G)\}$.
    \item Choose $v \sim V(G)$ uniformly at random.
    \item Choose two edges incident to $v$ uniformly at random. Call them $e_1 = \{v, u_1\}$, $e_2 = \{v, u_2\}$.
    \item Sample $\theta = (b_1, b_2, b_{12})$ from $\Tilde{\Theta}$.
    \item Independently for every $i \in [L]$, sample $x^{(1)}_i, x^{(2)}_i \sim \{-1, 1\}$ such that $\E[x^{(1)}_i] = b_1, \E[x^{(2)}_i] = b_2, \E[x^{(1)}_ix^{(2)}_i] = \rho$. Let $\bx^{(1)} = (x^{(1)}_1, x^{(1)}_2, \ldots, x^{(1)}_R)$ and $\bx^{(2)} = (x^{(2)}_1, x^{(2)}_2, \ldots, x^{(2)}_R)$.
    \item Compute $\mu_1 = f_{u_1}(\pi_{e_1}^{u_1}\bx^{(1)})$, $\mu_2 = f_{u_2}(\pi_{e_2}^{u_2}\bx^{(2)})$.\
    \item Accept with probability $\DICUT(\mu_1, \mu_2)$.
\end{itemize}

\begin{lemma}[Completeness, c.f., Lemma 5.2 of \cite{Austrin10}]
If $\Val(I) \geq 1 - \eta$, then there exists $F$ such that $\mathsf{Verifier}_{\Tilde{\Theta}}(I, F)$ accepts with probability at least $(1 - 2\eta) \cdot \comp(\Tilde{\Theta})$.
\end{lemma}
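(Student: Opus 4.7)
The plan is to use the standard \emph{dictator strategy}: given an assignment $A : V(G) \to [L]$ with $\Val(I, A) \geq 1 - \eta$, for each vertex $v$ I would define the provers' function $f_v : \{-1,1\}^L \to \{-1,1\}$ to be the dictator $f_v(\bx) = x_{A(v)}$. Substituting these into $\mathsf{Verifier}_{\Tilde\Theta}(I, F)$, I would condition on the event $\mathcal{E}$ that both sampled edges $e_1 = \{v, u_1\}$ and $e_2 = \{v, u_2\}$ are satisfied by $A$, i.e.\ $\pi_{e_1}^{u_1}(A(u_1)) = A(v)$ and $\pi_{e_2}^{u_2}(A(u_2)) = A(v)$.

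On this event the permutations line up so that
\[
\mu_1 \EQ (\pi_{e_1}^{u_1} \bx^{(1)})_{A(u_1)} \EQ x^{(1)}_{\pi_{e_1}^{u_1}(A(u_1))} \EQ x^{(1)}_{A(v)},
\]
and similarly $\mu_2 = x^{(2)}_{A(v)}$. Hence conditional on $\mathcal{E}$ and on the sampled $\theta = (b_1, b_2, b_{12})$, the pair $(\mu_1, \mu_2)$ is distributed exactly as a pair of $\pm 1$ bits with marginal biases $b_1, b_2$ and pairwise bias $b_{12}$. Using the Fourier expansion $\DICUT(x,y) = (1 + x - y - xy)/4$, the acceptance probability conditional on $\mathcal{E}$ becomes
\[
\E_{\theta \sim \Tilde\Theta}\!\left[\frac{1 + b_1 - b_2 - b_{12}}{4}\right] \EQ \E_{\theta \sim \Tilde\Theta}[\comp(\theta)] \EQ \comp(\Tilde\Theta).
\]

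To finish, I would lower bound $\Pr[\mathcal{E}]$ by a union bound. The distribution obtained by sampling $v$ uniformly and then a single incident edge uniformly (with the weight convention of the instance) matches the edge distribution used in $\Val(I, A)$, so any individual sampled edge is satisfied with probability at least $1 - \eta$. Thus $\Pr[\mathcal{E}] \geq 1 - 2\eta$ by the union bound, and since $\DICUT \geq 0$ the unconditional acceptance probability is at least $(1 - 2\eta)\cdot \comp(\Tilde\Theta)$.

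The whole argument is routine; there is no real obstacle, only a minor bookkeeping point to watch. One needs the distribution over edges induced by ``pick $v$, then an incident edge'' to be the same as the distribution implicit in $\Val(I)$, which is standard provided the Unique Games instance is presented as a (weighted-)regular graph, or is reduced to one beforehand; this is the usual convention in the $\mathsf{UGC}$ literature and is implicit in how the verifier is stated. Otherwise every step is a direct substitution, and there is no need to invoke anything beyond the Fourier expansion of $\DICUT$ and the definition of $\comp(\Tilde\Theta)$.
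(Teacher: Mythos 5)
Your proof is correct and follows essentially the same approach as the paper's: define dictator functions $f_v(\bx) = x_{A(v)}$, condition on both sampled edges being satisfied (probability at least $1-2\eta$ by a union bound), and observe that in this case $(\mu_1, \mu_2)$ has marginals $b_1, b_2$ and pairwise bias $b_{12}$, so the conditional acceptance probability is exactly $\comp(\Tilde\Theta)$ by the Fourier expansion of $\DICUT$. If anything, your explicit observation that $\E[x^{(1)}_{A(v)} x^{(2)}_{A(v)}] = b_{12}$ so the conditional expectation of $\DICUT(\mu_1,\mu_2)$ is exactly $\comp(\theta)$ is a bit cleaner than the paper's intermediate step $\E_\theta[\DICUT(b_1,b_2)]$, which in fact equals $\comp(\Tilde\Theta)$ only when $b_{12}=b_1b_2$; the paper's chain of inequalities is still sound for positive configurations, but your direct computation avoids that detour.
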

\begin{proof}
Since $\Val(I) \geq 1 - \eta$, there exists an assignment $A$ such that $\Val(I, A) \geq 1 - \eta$. For any $v \in V(G)$, let $f_v: \{-1, 1\}^L \to \{-1, 1\}$ be the dictatorship function $(x_1, x_2, \ldots, x_L) \mapsto x_{A(v)}$, and let $F = \{f_v | v \in V(G) \}$. If $\mathsf{Verifier}_{\Tilde{\Theta}}(I, F)$ chooses two edges $e_1, e_2$ that are both satisfied by $A$, then we have
\[
\mu_1 = f_{u_1}(\pi_{e_1}^{u_1}\bx^{(1)}) = (\pi_{e_1}^{u_1}\bx^{(1)})_{A(u_1)} = x^{(1)}_{\pi_{e_1}^{u_1}(A(u_1)))} = x^{(1)}_{A(v)},  
\]
and similarly $\mu_2 = x^{(2)}_{A(v)}$. It follows that 
\begin{align*}
&\Pr[\mathsf{Verifier}_{\Tilde{\Theta}}(I, F) \textrm{ accepts}] \\
 \geq &\, \Pr[e_1, e_2 \textrm{ both satisfied by A}] \cdot \Pr[\mathsf{Verifier}_{\Tilde{\Theta}}(I, F) \textrm{ accepts} \mid e_1, e_2 \textrm{ both satisfied by A}] \\
 \geq &\, (1 - 2 \eta) \cdot \E_{\theta \sim {\Tilde{\Theta}}}\left[\DICUT\left(x^{(1)}_{A(v)}, x^{(2)}_{A(v)}\right)\right] \\
 \geq &\, (1 - 2 \eta) \cdot \E_{\theta \sim {\Tilde{\Theta}}}\left[\DICUT\left(b_1, b_2\right)\right] \\
 = &\, (1 - 2 \eta) \cdot \comp(\Tilde{\Theta}) . \qedhere
\end{align*}
\end{proof}

\begin{lemma}[Soundness, c.f., Lemma 5.3 of \cite{Austrin10}]
For any $\epsilon > 0$ there exists $\gamma > 0$ such that, if $\Val(I) \leq \gamma$, then for any $F$, $\mathsf{Verifier}_{\Tilde{\Theta}}(I, F)$ accepts with probability at most $\max_h \sound({\Tilde{\Theta}}, h) + \epsilon$.
\end{lemma}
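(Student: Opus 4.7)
The plan is to follow the standard dictatorship-test soundness template, adapted from Austrin~\cite{Austrin10} to accommodate the non-odd rounding functions used for \MAXDICUT. For each vertex $v \in V(G)$ I would set
\[ g_v(\bx) \;:=\; \underset{u \sim N(v)}{\E}[f_u(\pi_{\{v,u\}}^u \bx)], \qquad m_v(b) \;:=\; \underset{\bx \sim B^L_{(1-b)/2}}{\E}[g_v(\bx)], \qquad h_v(b) \;:=\; -\Phi^{-1}\!\left(\frac{1-m_v(b)}{2}\right). \]
Since $u_1,u_2$ are independent conditional on $v$, expanding $\DICUT(\mu_1,\mu_2)=\frac{1+\mu_1-\mu_2-\mu_1\mu_2}{4}$ yields
\[ \Pr[\text{accept}] \EQ \underset{v,\theta}{\E}\!\left[\frac{1+m_v(b_1)-m_v(b_2)-\mathbb{S}_{\rho(\theta)}(g_v, g_v)}{4}\right]. \]
The $\THRESH^-$ function $h_v$ is chosen so that $\Phi(-h_v(b))$ is exactly the marginal probability of rounding to false that the decoded vertex $v$ uses at bias $b$.

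Fix $\eps'=\eps/2$ and let $k,\eta$ be the parameters from Theorem~A.3 (Majority is Stablest) with accuracy $\eps'$. Call $v$ \emph{good} if $\Inf_i^{\le k}[g_v]\le\eta$ for every $i\in[L]$. For a good $v$ and a positive configuration $\theta$ (so $\rho:=\rho(\theta)\le 0$, whence $-|\rho|=\rho$), Theorem~A.3 applied to $(g_v,g_v)$ gives the lower bound
\[ \mathbb{S}_\rho(g_v, g_v) \GE 4\,\Phi_\rho(-h_v(b_1), -h_v(b_2)) + m_v(b_1) + m_v(b_2) - 1 - \eps'. \]
Substituting and using the Gaussian orthant identity $\Phi(s)-\Phi_\rho(t,s)=\Phi_{-\rho}(-t,s)$ (immediate from $P[X>t,Y\le s]$ for correlation-$\rho$ standard Gaussians), the per-$(v,\theta)$ contribution simplifies to at most $\Phi_{-\rho}(h_v(b_1),-h_v(b_2))+\eps'/4=\sound(\theta,h_v)+\eps'/4$. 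Averaging over $\theta$ and $v$, the good vertices contribute at most $\E_v[\sound(\Tilde{\Theta},h_v)]+\eps/4 \LE \max_h \sound(\Tilde{\Theta},h)+\eps/4$.

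For the bad vertices $V'$, I would run the standard decoding argument. Let $L_v:=\{i:\Inf_i^{\le k}[g_v]>\eta\}$ for $v\in V'$ and $L_u^*:=\{j:\Inf_j^{\le k}[f_u]>\eta/2\}$ for every $u$; both have size at most $2k/\eta$ by Proposition~A.4. Convexity of $\Inf_i^{\le k}$ applied to $g_v=\E_u[f_u\circ\pi^u]$ gives $\E_u[\Inf_{(\pi^u_{\{v,u\}})^{-1}(i)}^{\le k}[f_u]]>\eta$ for every $i\in L_v$, so a simple averaging argument shows at least an $\eta/2$ fraction of neighbors $u$ of each $v\in V'$ admit some $i\in L_v$ with $(\pi^u)^{-1}(i)\in L_u^*$. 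Assigning $A(v)$ uniformly at random from $L_v\cup L_v^*$ (arbitrarily if both are empty) then satisfies at least an $\Omega((|V'|/|V|)\cdot\eta^3/k^2)$ fraction of edges in expectation, so choosing $\gamma$ below this expression applied with $|V'|/|V|=\eps/4$ forces $|V'|/|V|\le\eps/4$ and caps the bad contribution at $\eps/4$, completing the bound.

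The main technical obstacle is the calculation in the second paragraph: Majority is Stablest only produces a clean lower bound on $\mathbb{S}_\rho$ in terms of $\Phi_{-|\rho|}$, and converting this into an expression of the form $4\,\sound(\theta,h_v)$ requires both the positivity of $\theta$ (so that $-|\rho|=\rho$) and the Gaussian orthant identity above. If positivity were dropped, the sign of $|\rho|$ would go the wrong way, which is precisely why the hard distributions we construct in Section~\ref{sec:upper} are positive. The decoding step is essentially boilerplate, modulo the standard care with weighted edges.
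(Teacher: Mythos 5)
Your proposal is correct and follows essentially the same path as the paper's proof (neighbor-averaged $g_v$, Majority is Stablest with $-|\rho|=\rho$ from positivity, the Gaussian orthant identity $\Phi(s)-\Phi_\rho(t,s)=\Phi_{-\rho}(-t,s)$, and list-decoding of influential coordinates); the paper argues by contradiction and first extracts a single worst $\theta$ in the support of $\Tilde{\Theta}$ before decoding, while you partition vertices into good and bad up front, which is an equivalent packaging. Two small notes: the low-degree influence $\Inf_i^{\le k}[g_v]$ depends on the bias of the underlying measure $B^L_{(1-b)/2}$, so the ``good'' criterion must be quantified over all biases appearing in $\Tilde{\Theta}$ (finite in practice), and $\Phi(-h_v(b))=\tfrac{1-m_v(b)}{2}$ is the marginal probability of rounding to \emph{true} (value $-1$) rather than false; the probability of rounding to false is $\Phi(h_v(b))=\tfrac{1+m_v(b)}{2}$.
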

\begin{proof}

Fix some $\epsilon > 0$. We need to find some $\gamma > 0$ with the following property: if there exists $F = \{f_v \mid v \in V(G)\}$ such that $\mathsf{Verifier}_{\Tilde{\Theta}}(I, F)$ accepts with probability greater than $\max_h \sound({\Tilde{\Theta}}, h) + \epsilon$, then $\Val(I) > \gamma$. Assume the existence of such $F$, it suffices to show that $\Val(I)$ is lower-bounded by some constant only depending on $\epsilon$.

For $v \in V(G)$ and $b \in (-1, 1)$, we define $g^b_v: B^L_{(1-b)/2} \to [-1, 1]$ as
\[
g^b_v(\bx) = \E_{e = \{v, u\} \in E(G)}[f_{u}(\pi_{e}^{u}\bx)].
\]

Notice that the family of functions $\{g^b_v\}$ naturally lead to the family of thresholds $h_v(b) := \Phi^{-1}\left( \frac{1+\E_\bx[g^b_v(\bx)]}{2}\right)$, under which a variable with bias $b$ has expected value $\E_\bx[g^b_v(\bx)]$ after rounding. We start by computing the accepting probability of the verifier as follows.
\begin{align*}
    & \Pr[\mathsf{Verifier}_{\Tilde{\Theta}}(I, F) \textrm{ accepts}] \\
    = & \, \E_{\theta = (b_1, b_2, b_{12}), v, u_1, u_2, \bx_1, \bx_2}\left[\DICUT(\mu_1, \mu_2)\right] \\
    = & \, \E_{\theta = (b_1, b_2, b_{12}), v, u_1, u_2, \bx_1, \bx_2}\left[\frac{1 + \mu_1 - \mu_2 - \mu_1\mu_2}{4}\right] \\
    = & \, \E_{\theta = (b_1, b_2, b_{12}), v, u_1, u_2, \bx_1, \bx_2}\left[\frac{1 + f_{u_1}(\pi^{u_1}_{e_1}(\bx^{(1)})) - f_{u_2}(\pi^{u_2}_{e_2}(\bx^{(2)})) - f_{u_1}(\pi^{u_1}_{e_1}(\bx^{(1)}))f_{u_2}(\pi^{u_2}_{e_2}(\bx^{(2)}))}{4}\right] \\
    = & \, \E_{\theta = (b_1, b_2, b_{12}), v, \bx_1, \bx_2}\left[\frac{1 + g^{b_1}_v(\bx_1) - g^{b_2}_v(\bx_2) - g^{b_1}_v(\bx_1)g^{b_2}_v(\bx_2)}{4}\right] \\
    = & \, \E_{\theta = (b_1, b_2, b_{12}), v}\left[\frac{1 + \E_{\bx_1}[g^{b_1}_v(\bx_1)] - \E_{\bx_2}[g^{b_2}_v(\bx_2)] - \mathbb{S}_{\rho(\theta)} (g^{b_1}_v, g^{b_2}_v)}{4}\right],
\end{align*}
On the other hand, we have
\begin{align*}
    & \max_h \sound(\Tilde{\Theta}, h) + \epsilon\\
    \geq & \, \E_v[\sound(\Tilde{\Theta}, h_v)] + \epsilon\\
    = & \, \E_{\theta = (b_1, b_2, b_{12}), v}\left[\sound(\theta, h_v)\right] + \epsilon \\
    = & \, \E_{\theta = (b_1, b_2, b_{12}), v}\left[\Phi_{-\rho(\theta)}(h_v(b_1), -h_v(b_2))\right] + \epsilon \\
\end{align*}

Since we have assumed
\[
\Pr[\mathsf{Verifier}_{\Tilde{\Theta}}(I, F) \textrm{ accepts}] \geq \max_h \sound(\Tilde{\Theta}, h) + \epsilon,
\]
from the above computation it follows that
\[
\E_{\theta, v}\left[\frac{1 + \E_{\bx_1}[g^{b_1}_v(\bx_1)] - \E_{\bx_2}[g^{b_2}_v(\bx_2)] - \mathbb{S}_{\rho(\theta)} (g^{b_1}_v, g^{b_2}_v)}{4}\right] \geq \E_{\theta, v}\left[\Phi_{-\rho(\theta)}(h_v(b_1), -h_v(b_2))\right] + \epsilon.
\]
Simplifying using the fact that $\Phi_{-\rho}(t_1, -t_2) = \Phi(t_1) - \Phi_{\rho}(t_1, t_2) = \Phi(-t_2) - \Phi_{\rho}(-t_1, -t_2)$, we obtain
\[
\E_{\theta = (b_1, b_2, b_{12}), v}\left[4\Phi_{\rho(\theta)}(-h_v(b_1), -h_v(b_2)) - \left(\mathbb{S}_{\rho(\theta)} (g_v^{b_1}, g_v^{b_2})) - \E_{\bx_1}[g_v^{b_1}(\bx_1)] - \E_{\bx_2}[g_v^{b_2}(\bx_2)] + 1 \right)\right] \geq 4\epsilon.
\]
We can therefore find some $\theta = (b_1, b_2, b_{12})$ such that
\[
\E_{v}\left[4\Phi_{\rho(\theta)}(-h_v(b_1), -h_v(b_2)) - \left(\mathbb{S}_{\rho(\theta)} (g_v^{b_1}, g_v^{b_2})) - \E_{\bx_1}[g_v^{b_1}(\bx_1)] - \E_{\bx_2}[g_v^{b_2}(\bx_2)] + 1 \right)\right] \geq 4\epsilon.
\]
Each term in the above expectation is bounded by some absolute constant, so we can find $C > 0$ such that
\[
\left|4\Phi_{\rho(\theta)}(-h_v(b_1), -h_v(b_2)) - \left(\mathbb{S}_{\rho(\theta)} (g_v^{b_1}, g_v^{b_2})) - \E_{\bx_1}[g_v^{b_1}(\bx_1)] - \E_{\bx_2}[g_v^{b_2}(\bx_2)] + 1 \right)\right| \leq C.
\]

It follows that for at least an $\epsilon$ fraction of $v \in V(G)$, we have
\[
4\Phi_{\rho(\theta)}(-h_v(b_1), -h_v(b_2)) - \left(\mathbb{S}_{\rho(\theta)} (g_v^{b_1}, g_v^{b_2})) - \E_{\bx_1}[g_v^{b_1}(\bx_1)] - \E_{\bx_2}[g_v^{b_2}(\bx_2)] + 1 \right)\geq \frac{\epsilon}{C}.
\]
Let $V_0$ be the set of $v \in V(G)$ that satisfy the above inequality. since the configurations are all positive, we have $\rho
(\theta) \leq 0$, and therefore by Theorem~\ref{thm:majority_is_stablest}, there exist $\eta > 0$ and $k \in \mathbb{N}$ such that, for every $v \in V_0$ there is some $i \in [n]$ with
\[
\Inf_i^{\leq k}[g_v^{b_1}] \geq \min(\Inf_i^{\leq k}[g_v^{b_1}] , \Inf_i^{\leq k}[g_v^{b_2}] ) \geq \eta.
\]
Since $\Inf_i^{\leq k}$ is convex, we also have
\[
\eta \leq \Inf_i^{\leq k}[g_v^{b_1}] = \Inf_i^{\leq k}\left[\E_{e = \{v, u\} \in E(G)}[f_{u}\circ\pi_{e}^{u}]\right]\leq \E_{e = \{v, u\} \in E(G)}[\Inf_i^{\leq k}[f_{u}\circ\pi_{e}^{u}]].
\]
Since $\Inf_i^{\leq k}$ takes value in $[0, 1]$, there is an $\eta / 2$ fraction of $u \sim v$ such that
$\Inf_i^{\leq k}[f_{u}\circ\pi_{e}^{u}] = \Inf_{(\pi_e^{u})^{-1}(i)}^{\leq k}[f_{u}] \geq \eta/2.$ Now let
$L_1(v) = \{i \in [n] \mid \Inf_i^{\leq k}[g_v^{b_1}] \geq \eta \}$ and $L_2(v) = \{i \in [n] \mid \Inf_i^{\leq k}[f_v^{b_1}] \geq \eta / 2\}$. By Proposition~\ref{prop:Lsize}, we have $|L_1(v)| \leq \frac{k}{\eta}$ and $|L_2(v)| \leq \frac{2k}{\eta}$, and by union bound $|L_1(v) \cup L_2(v)| \leq \frac{3k}{\eta}$.

Now consider the following labeling strategy for $I$: for every $v \in V(G)$, if $L_1(v) \cup L_2(v)$ is non-empty, then choose a label $A(v) \in L_1(v) \cup L_2(v)$ uniformly at random, otherwise choose $A(v) \in [R]$ uniformly at random. By our analysis above, if we choose an edge $e = (u, v)$ with $v \in V_0$, then there is at least $\epsilon \cdot \eta / 2$ probability such that there is some $i \in L_1(v)$ with $\pi_e^v(i) \in L_2(u)$, which our strategy will then find with probability at least $1 / (3k / \eta)^2$, so $\Val(I, A)$ is at least $\epsilon \cdot \eta / 2 \cdot 1 / (3k / \eta)^2$, which is a constant only depending on $\epsilon$, and the lemma is proven.
\end{proof}

\pagebreak
\section{Possibly improved upper bounds for \texorpdfstring{\MAXDICUT}{MAX DI-CUT}}\label{A-Upper}

In Section~\ref{sec:upper} we presented a simple distribution on three configurations that shows that $\aDC\le 0.8746024732$ assuming UGC. This distribution, which also given in Table~\ref{T-DICUT-1}, used only one pair of biases, $b$ and~$-b$, where $b= 0.1757079639$. The simplicity of this distribution enabled us to rigorously prove that $\aDC\le 0.8746024732$.

Slightly improved upper bounds on $\aDC$ can probably be obtained using more complicated distributions that use two, three or four pairs of biases, as shown in Tables~\ref{T-DICUT-2}, \ref{T-DICUT-3} and~\ref{T-DICUT-4}. However, analyzing the performance of any rounding procedure from $\THRESH^-$ on these distributions is a much harder task that can probably not be done by hand. The bounds given in Tables~\ref{T-DICUT-2} to~\ref{T-DICUT-4} were only verified using non-rigorous numerical optimizations. 

In the simple case of Table~\ref{T-DICUT-1}, the function $s(t_1,t_2)$, where $t_1$ and $t_2$ are the thresholds corresponding to the thresholds $-b$ and $b$, had a unique global maximum. Unfortunately, the corresponding function $s_2(t_{-2},t_{-1},t_1,t_2)$ for the distribution of Table~\ref{T-DICUT-2}, and the corresponding functions for the distributions of Tables~\ref{T-DICUT-3} and~\ref{T-DICUT-4}, also have local maxima that make a rigorous analysis much more difficult. In some of the cases the global maximum is also not unique. (The probabilities are carefully chosen to make several local maxima have the same value.)

More pairs of biases can of course be used but it seems that the improvement obtained would be negligible, as going from one pair of biases to four pairs of biases improved the upper bound by only $2\times 10^{-5}$. We thus conjecture that the (non-rigorous) upper bound $\aDC\le 0.8745794663$ is close to being tight.

\begin{table}[H]
\centering
\begin{tabular}{c@{\quad\quad}c}
$b\EQ 0.1757079639$ &
\begin{tabular}{c@{\quad\quad(\;}r@{\;,\;}r@{\;,\;}c@{\;\;)}}
\multicolumn{1}{c}{probability\quad\quad\quad} & \multicolumn{3}{c}{configuration}\\
\hline
0.3770580402  &  $b$ &  $b$ & $-1+2b$ \\
0.3770580402  & $-b$ &  $-b$ & $-1+2b$ \\
0.2458839196  &  $b$ & $-b$ & $-$0.6876930468 
\end{tabular} 
\end{tabular}
\caption{The distribution over configurations used to obtain the bound $\aDC\le 0.8746024732$ in Section~\ref{sec:upper}. Only one pair of biases is used.}
\label{T-DICUT-1}
\end{table}

\begin{table}[H]
\centering
\begin{tabular}{c@{\quad\quad}c}
\begin{tabular}{c}
$b_1 \;=\;  0.1644279457$ \\
$b_2 \;=\;  0.1797733117$ \\
\end{tabular} &
\begin{tabular}{c@{\quad\quad(\;}r@{\;,\;}r@{\;,\;}c@{\;\;)}}
\multicolumn{1}{c}{probability\quad\quad\quad} & \multicolumn{3}{c}{configuration}\\
\hline
 0.1907744673 & $b_2$ & $b_1$ & $-1+b_1+b_2$ \\
 0.1907744673 & $-b_1$ & $-b_2$ & $-1+b_1+b_2$ \\
 0.1858539509 & $b_2$ & $b_2$ & $-1+2b_2$ \\
 0.1858539509 & $-b_2$ & $-b_2$ & $-1+2b_2$ \\
 0.2371153723 & $b_1$ & $-b_1$ & $-0.6874089540$ \\
 0.0048138957 & $b_1$ & $-b_2$ & $-0.6876719134$ \\
 0.0048138957 & $b_2$ & $-b_1$ & $-0.6876719134$ \\
\end{tabular}
\end{tabular}
\caption{A distribution that uses two pairs of biases that seems to yield an upper bound $\aDC\le 0.8745896786$. (Not verified rigorously.)}
\label{T-DICUT-2}
\end{table}

\begin{table}[H]
\centering
\begin{tabular}{c@{\quad\quad}c}
\begin{tabular}{c}
$b_1 \;=\;  0.1389906477$ \\
$b_2 \;=\;  0.1758192542$ \\
$b_3 \;=\;  0.2016555060$ \\
\end{tabular} &
\begin{tabular}{c@{\quad(\;}r@{\;,\;}r@{\;,\;}c@{\;\;)}}
\multicolumn{1}{c}{probability\;\;\;} & \multicolumn{3}{c}{configuration}\\
\hline
 0.2267479169 & $b_2$ & $b_2$ & $-1+2b_2$ \\
 0.2267479169 & $-b_2$ & $-b_2$ & $-1+2b_2$ \\
 0.0493365471 & $b_2$ & $b_3$ & $-1+b_2+b_3$ \\
 0.0493365471 & $-b_3$ & $-b_2$ & $-1+b_2+b_3$ \\
 0.1001888661 & $b_3$ & $b_1$ & $-1+b_1+b_3$ \\
 0.1001888661 & $-b_1$ & $-b_3$ & $-1+b_1+b_3$ \\
 0.1237266700 & $b_1$ & $-b_2$ & $-0.6873638769$ \\
 0.1237266700 & $b_2$ & $-b_1$ & $-0.6873638769$ \\
\end{tabular}
\end{tabular}
\caption{A distribution that uses three pairs of biases that seems to yield an upper bound $\aDC\le 0.8745810643$. (Not verified rigorously.)}
\label{T-DICUT-3}
\end{table}

\begin{table}[H]
\centering
\begin{tabular}{c@{\quad\quad}c}
\begin{tabular}{c}
$b_1 \;=\;  0.1367092212$ \\
$b_2 \;=\;  0.1726598484$ \\
$b_3 \;=\;  0.1778293053$ \\
$b_4 \;=\;  0.2039443849$ \\
\end{tabular} &
\begin{tabular}{c@{\quad\quad(\;}r@{\;,\;}r@{\;,\;}c@{\;\;)}}
\multicolumn{1}{c}{probability\quad\quad\quad} & \multicolumn{3}{c}{configuration}\\
\hline
 0.0346789517 & $b_2$ & $b_2$ & $-1+2b_2$ \\
 0.0346789517 & $-b_2$ & $-b_2$ & $-1+2b_2$ \\
 0.0371520073 & $b_2$ & $b_3$ & $-1+b_2+b_3$ \\
 0.0371520073 & $-b_3$ & $-b_2$ & $-1+b_2+b_3$ \\
 0.0495233867 & $b_2$ & $b_4$ & $-1+b_2+b_4$ \\
 0.0495233867 & $-b_4$ & $-b_2$ & $-1+b_2+b_4$ \\
 0.0592278650 & $b_3$ & $b_2$ & $-1+b_2+b_3$ \\
 0.0592278650 & $-b_2$ & $-b_3$ & $-1+b_2+b_3$ \\
 0.0953106050 & $b_3$ & $b_3$ & $-1+2b_3$ \\
 0.0953106050 & $-b_3$ & $-b_3$ & $-1+2b_3$ \\
 0.1003411331 & $b_4$ & $b_1$ & $-1+b_1+b_4$ \\
 0.1003411331 & $-b_1$ & $-b_4$ & $-1+b_1+b_4$ \\
 0.0471058388 & $b_1$ & $-b_2$ & $-0.6876148335$ \\
 0.0471058388 & $b_2$ & $-b_1$ & $-0.6876148335$ \\
 0.0766602123 & $b_1$ & $-b_3$ & $-0.6876243954$ \\
 0.0766602123 & $b_3$ & $-b_1$ & $-0.6876243954$ \\
\end{tabular}
\end{tabular}
\caption{A distribution that uses four pairs of biases that seems to yield an upper bound $\aDC\le 0.8745794663$. (Not verified rigorously.)}
\label{T-DICUT-4}
\end{table}

\pagebreak
\section{Possibly improved upper bounds for \texorpdfstring{\MAXAND}{MAX AND}}\label{A-Upper-AND}

Austrin~\cite{Austrin10} gave two upper bound on the best approximation ratio achievable for \MAXAND, assuming UGC. The first used only one non-zero bias and gave an upper bound $\aAND \le 0.87451$. The second used two non-zero biases and gave an upper bound $\aAND \le 0.87435$. We believe that using four non-zero biases distribution given in Table~\ref{T-MAXAND-upper} it is possible to prove that $\aAND\le 0.874247$, but we have not verified it rigorously.

\begin{table}[H]
\centering
\begin{tabular}{c@{\quad\quad}c}
\begin{tabular}{c}
$b_1 \;=\; 0.0726617 $ \\
$b_2 \;=\; 0.165630 $ \\
$b_3 \;=\; 0.248978 $ \\
$b_4 \;=\; 0.317508 $ \\
\end{tabular} &
\begin{tabular}{c@{\quad\quad(\;}c@{\;,\;}c@{\;,\;}c@{\;\;)}}
\multicolumn{1}{c}{probability\quad\quad\quad} & \multicolumn{3}{c}{configuration}\\
\hline
0.00778369 & 0 & $b_4$ & $-1+b_4$ \\
0.264364 & $b_1$ & $b_4$ & $-1+b_1+b_4$ \\
0.050959 & $b_2$ & $b_2$ & $-1+2b_2$ \\
0.0572364 & $b_2$ & $b_3$ & $-1+b_2+b_3$ \\
0.113076 & $b_3$ & $b_1$ & $-1+b_1+b_3$ \\
0.506466 & $b_4$ & 0 & $-1+b_4$
\end{tabular}
\end{tabular}
\caption{A distribution that uses four non-zero biases that seems to yield an upper bound $\aAND\le 0.874247$. (Not verified rigorously.)}
\label{T-MAXAND-upper}
\end{table}

\pagebreak

\bibliographystyle{alpha}
\bibliography{refs}

\end{document}